%%%%%%%%%%%%%%%%%%%%%%%%%%%%
%                          %
% DAL LAGO, GAVAZZO, LEVY  %
%                          %
%%%%%%%%%%%%%%%%%%%%%%%%%%%%

\documentclass[a4paper,10pt]{article}
\usepackage{a4wide}
\usepackage{amsmath}

%%% PACKAGES %%%

\usepackage{mathrsfs}
\usepackage[utf8]{inputenc}
\usepackage[T1]{fontenc}
\usepackage{mathrsfs}
\usepackage{amsmath}
\usepackage{amsthm}
\usepackage{mdframed}

\usepackage{amssymb}
\usepackage{bussproofs}
\EnableBpAbbreviations

\usepackage{thrmappendix}
\usepackage{color}
\usepackage{url}
\usepackage{stmaryrd}
\usepackage{xypic}
%\usetikzlibrary{shapes}
%\usetikzlibrary{arrows}
\usepackage{scrextend}
\usepackage{mdwlist}
%\usepackage{epigraph}
%\usepackage{mdframed}
%\usepackage{pf2}

%%% ENVIRONMENTS %%%

\newenvironment{varitemize}
{
\begin{list}{\labelitemi}
{\setlength{\itemsep}{0pt}
 \setlength{\topsep}{0pt}
 \setlength{\parsep}{0pt}
 \setlength{\partopsep}{0pt}
 \setlength{\leftmargin}{15pt}
 \setlength{\rightmargin}{0pt}
 \setlength{\itemindent}{0pt}
 \setlength{\labelsep}{5pt}
 \setlength{\labelwidth}{10pt}
}}
{
 \end{list}
}

\newcounter{numberone}

\newenvironment{varenumerate}
{
\begin{list}{\arabic{numberone}.}
{
  \usecounter{numberone}
  \setlength{\itemsep}{0pt}
  \setlength{\topsep}{0pt}
  \setlength{\parsep}{0pt}
  \setlength{\partopsep}{0pt}
  \setlength{\leftmargin}{15pt}
  \setlength{\rightmargin}{0pt}
  \setlength{\itemindent}{0pt}
  \setlength{\labelsep}{5pt}
  \setlength{\labelwidth}{15pt}
}}
{
\end{list}
}

%%% CODE-LIKE
 % bob harper's convention

%%% NOTATION %%%

%% SETS %%%

%%% TERMS, EVALUATION %%% -> TO BE ORDERED AND COMPLETED
\newcommand{\closedTerms}{\terms_0}
\newcommand{\closedValues}{\values_0}
\newcommand{\reloneTerms}{\relone_{\terms}}
\newcommand{\reloneValues}{\relone_{\values}}
\newcommand{\relthreeTerms}{\relthree_{\terms}}
\newcommand{\relthreeValues}{\relthree_{\values}}
\newcommand{\terms}{\ensuremath{\mathcal{T}}}
\newcommand{\values}{\ensuremath{\mathcal{V}}}

\newcommand{\subst}[3]{#1[#2/#3]}

\newcommand{\variables}{\mathtt{X}}

\newcommand{\valone} {V}
\newcommand{\varone} {x}
\newcommand{\vartwo} {y}

\newcommand{\termone}{M}
\newcommand{\termtwo}{N}
\newcommand{\termthree}{L}

\newcommand{\valtwo} {\mbox{\tt w}}
\newcommand{\valthree}{U}

\newcommand{\fv}[1]{FV(#1)}

\newcommand{\vars}{\tuple{\varone}}

\newcommand{\imp}{\vdash}

\newcommand{\raiseExcep}[1]{\mathsf{raise}_{#1}}
\newcommand{\exception}{\mathcal{E}}

%%% EVALUATION CONTEXTS
\newcommand{\ctxone}{C}

%%% DISTRIBUTIONS

\newcommand{\distone}{\mu}
\newcommand{\disttwo}{\nu}
\newcommand{\support}[1]{\mathsf{supp}(#1)}

%%% omega CPPO

\newcommand{\ruleBot}{(\mathsf{bot})}
\newcommand{\ruleReturn}{(\mathsf{ret})}
\newcommand{\ruleSeq}{(\mathsf{seq})}
\newcommand{\ruleApp}{(\mathsf{app})}
\newcommand{\ruleOp}{(\mathsf{op})}

\newcommand{\howeVar}{(\mathsf{How1})}
\newcommand{\howeAbs}{(\mathsf{How2})}
\newcommand{\howeRet}{(\mathsf{How3})}
\newcommand{\howeApp}{(\mathsf{How4})}
\newcommand{\howeSeq}{(\mathsf{How5})}
\newcommand{\howeOp}{(\mathsf{How6})}
\newcommand{\dependentProduct}[2]{\prod_{#1}#2}

\newcommand{\opT}{\op^{\monad}}

%%% THEOREMS, LEMMAS, ETC.
\newtheorem{lemma}{Lemma}
\newtheorem{definition}{Definition}
\newtheorem{proposition}{Proposition}
\newtheorem{theorem}{Theorem}
\newtheorem{example}{Example}
\newtheorem{remark}{Remark}
\newtheorem{corollary}{Corollary}

%%% DERIVATIONS AND RULES

%%% ALGEBRA
\newcommand{\signature}{\Sigma}

				% CT functor 
\newcommand{\abs}[1]{\lambda #1.}

\newcommand{\ocpo}{\ensuremath{\omega\mathbf{CPO}}}

\newcommand{\ocppo}{\ensuremath{\omega\mathbf{CPPO}}}

%%% MONADS
\newcommand{\powersetmonad}{\mathcal{P}}

%%% SEMANTICS

%%% PARENTHESIS, DOTS, FORMAT FONT
\newcommand{\lan}{\langle}
\newcommand{\ran}{\rangle}

\newcommand{\hh}{\hdots}

%%% COMPUTATION TREES AND ADDITIONAL LAMBDA CALCULUS 

%%% CATEGORIES AND MONADS
\newcommand{\HomSet}[3]{\mathsf{Hom}_{#1}(#2,#3)}

\newcommand{\powerset}{\mathcal{P}}
\newcommand{\distribution}{\mathcal{D}}

\newcommand{\inl}{\mathsf{in}_l}
\newcommand{\inr}{\mathsf{in}_r}
\newcommand{\monad}{T}
\newcommand{\unit}[1]{\eta(#1)}
\newcommand{\uniT}{\eta}

\newcommand{\relthree}{\mathcal{S}}

\newcommand{\functor}{F}

%EQUIVALENCES AND EQUALITY-LIKE SYMBOLS

\newcommand{\ctxequiv}{\equiv}

% ORDER THEORY, RELATIONS AND LIFTINGS

\newcommand{\rel}{\mathcal{R}}
\newcommand{\howe}[1]{#1^{H}}
\newcommand{\relone}{\mathcal{R}}

% CASE STUDIED 

%\newcommand{\undef}{\mathsf{undef}}

%%% LTS AND KRIPKE MODELS

\newcommand{\bisim}{\sim}
\newcommand{\gfp}{\mathsf{gfp}}

% CPO
\newcommand{\cpoone}{D}
\newcommand{\cpotwo}{E}

\newcommand{\Cpoone}{\mathbb{D}}
\newcommand{\Cpotwo}{\mathbb{E}}
\newcommand{\Cpothree}{\mathbb{C}}
\newcommand{\cpoleq}{\sqsubseteq}
\newcommand{\lub}{\bigsqcup}

\newcommand{\ocppoCat}{\omega\mathbb{C}\mathbb{P}\mathbb{P}\mathbb{O}}

%%% EXAMPLES

% STRUCTURED PROOFS ENVIRONMENT

\usepackage{enumitem, hyperref}
\makeatletter
\def\namedlabel#1#2{\begingroup
    #2%
    \def\@currentlabel{#2}%
    \phantomsection\label{#1}\endgroup
}
\makeatother

\newcommand{\bps}{\begin{description}}
\newcommand{\eps}{\end{description}}

\newcommand{\similar}{\precsim}
\newcommand{\howesimilar}{\similar^{H}}
\newcommand{\cosimilar}{\succsim}

\newcommand{\dist}{\mathcal{D}}

\newcommand{\bind}{\texttt{>>=}}

\newcommand{\sem}[1]{\llbracket #1 \rrbracket}

\newcommand{\kleisli}[1]{#1^{\dagger}}
\newcommand{\return}{\mathsf{return}}

\renewcommand{\terms}{\Lambda}

\newcommand{\fun}[2]{#1 \mapsto #2}
\renewcommand{\valone}{V}
\renewcommand{\valtwo}{W}

\renewcommand{\termone}{M}
\renewcommand{\termtwo}{N}

\newcommand{\op}{\sigma}
\newcommand{\functional}{\mathcal{F}}

\newcommand{\similarTerms}{\similar_{\terms}}
\newcommand{\similarValues}{\similar_{\values}}

\newcommand{\compone}{X}
\newcommand{\comptwo}{Y}

\newcommand{\catone}{\mathbb{C}}
\newcommand{\TO}[1]{\Rightarrow^{n}}

\newcommand{\evalTo}[3]{#1 \Downarrow_{#2} #3}	%\evalTo{M}{N}{n} ~ M =>_n N 

\newcommand{\lfp}[1]{\mu #1}
\renewcommand{\gfp}[1]{\nu #1}
\newcommand{\relunit}{\mathcal{U}}
\newcommand{\ctxpreord}{\leq}
\newcommand{\behavior}{\mathcal{B}}
\newcommand{\open}[1]{#1^{\circ}}
\newcommand{\tuple}[1]{\bar{#1}}
\newcommand{\ctxclosure}[1]{\widehat{#1}}

\newcommand{\vals}{\tuple{\valone}}

\newcommand{\evalToo}[2]{\evalTo{#1}{}{#2}}

\newcommand{\seq}[2]{#1\ \mathsf{to}\ x.#2}
\newcommand{\substcomp}[3]{#1[#2 := #3]}
\newcommand{\stream}[1]{#1^{\infty}}
\newcommand{\concat}[2]{#1::#2}
\newcommand{\print}[2]{\mathsf{print}\ #1. #2}
\renewcommand{\return}[1]{\mathsf{return}\ #1}
\newcommand{\relator}{\Gamma}
\renewcommand{\rel}[2]{2^{#1 \times #2}}
\newcommand{\relatorSim}{\Gamma}
\newcommand{\relatorBisim}{\Delta}
\newcommand{\conversive}[1]{#1^{c}}

\newcommand{\set}{\mathbb{S}\mathbb{E}\mathbb{T}}

\newcommand{\kleisliCat}[1]{\mathcal{K}\ell(#1)}
\newcommand{\coalgX}{\gamma}

\newcommand{\contFun}[2]{#1 \xrightarrow{c} #2}

\begin{document}

\title{Effectful Applicative Bisimilarity:\\ 
Monads, Relators, and Howe's Method\\ (Long Version)}

\date{}
\author{Ugo Dal Lago \and Francesco Gavazzo \and Paul Blain Levy}

\maketitle

\begin{abstract}
  We study Abramsky's applicative bisimilarity abstractly, in the
  context of call-by-value $\lambda$-calculi with algebraic
  effects. We first of all endow a computational $\lambda$-calculus
  with a monadic operational semantics. We then show how the theory of
  relators provides precisely what is needed to generalise applicative
  bisimilarity to such a calculus, and to single out those monads and
  relators for which applicative bisimilarity is a congruence, thus a
  sound methodology for program equivalence. This is done by studying
  Howe's method in the abstract.
\end{abstract}

%%%%%%%%%%%%%%%%%%%%%%
\section{Introduction}
%%%%%%%%%%%%%%%%%%%%%%
Program equivalence is one of the central notions in the theory of
programming languages, and giving satisfactory definitions and
methodologies for it is a challenging problem, for example when
dealing with higher-order languages.  The problem has been approached,
since the birth of the discipline, in many different ways. One can
define program equivalence through denotational semantics, thus
relying on a model and stipulating two programs to be equivalent 
if and only if they are interpreted by the same denotation. 
If the calculus at hand is equipped with a
notion of \emph{observation}, typically given through some forms of
operational semantics, one could proceed following the route traced by
Morris, and define programs to be \emph{contextual} equivalent when they
\emph{behave} the same \emph{in every} context.

Both these approaches have their drawbacks, the first one relying on
the existence of a (not too coarse) denotational model, the latter
quantifying over all contexts, and thus making concrete proofs of
equivalence hard. Handier methodologies for proving programs equivalent
have been introduced along the years based on logical relations and
applicative bisimilarity. Logical relations were originally devised
for typed, normalising languages, but later generalised to more
expressive formalisms, e.g., through
step-indexing~\cite{AppelMcAllester/TOPLAS/2001} and
biorthogonality~\cite{BentonKennedyBeringerHofmann/PPDP/2009}. Starting
from Abramsky's pioneering work on applicative
bisimilarity~\cite{Abramsky/RTFP/1990}, coinduction has also been
proved to be a useful methodology for program equivalence, and has
been applied to a variety of calculi and language features.

The scenario just described also holds when the underlying calculus
is not pure, but effectful. There have been many attempts to study
effectful
$\lambda$-calculi~\cite{PlotkinPower/FOSSACS/01,Moggi/LICS/89} by way
of denotational semantics \cite{Jones/PhDThesis,deLiguoroPiperno/IC/1995,DanosHarmer/TOCL/2002},
logical relations~\cite{BizjakBirkedal/FOSSACS/2015}, and applicative
bisimilarity~\cite{Lassen/PhDThesis,DalLagoSangiorgiAlberti/POPL/2014,CrubilleDalLago/ESOP/2014}.
But while the denotational and logical relation semantics of effectful
calculi have been studied in the
abstract~\cite{GoubaultLasotaNowak/MSCS/2008,JohannSimpsonVoigtlander/LICS/2010},
the same cannot be said about applicative bisimilarity and related
coinductive techniques.  There is a growing body of literature on
applicative bisimilarity for calculi with, e.g.,
nondeterministic~\cite{Lassen/PhDThesis}, and
probabilistic effects~\cite{DalLagoSangiorgiAlberti/POPL/2014}, but
each notion of an effect has been studied independently, often getting different
results. Distinct proofs of congruence for applicative bisimilarity,
even if done through a common methodology, namely the so-called Howe's
method~\cite{Howe/IC/1996}, do not at all have the same difficulty
in each of the cases cited above. As an example, the proof of
the so-called Key Lemma relies on duality results from linear programming
\cite{Schrijver/Book/1986} when done for probabilistic effects,
contrarily to the apparently similar case of nondeterministic effects,
whose logical complexity is comparable to that for the plain, deterministic
$\lambda$-calculus \cite{Ong/LICS/1993,Lassen/PhDThesis}. Finally, as
the third author observed in his work with Koutavas and
Sumii~\cite{KoutavasLevySumii/ENTCS/2011}, applicative bisimilarity is fragile to
the presence of certain effects, like local states or dynamically 
created exceptions: in these cases, a sort of information hiding is possible
which makes applicative bisimilarity simply too weak, and thus unsound
for contextual equivalence.

The observations above naturally lead to some questions. Is there any
way to factor out the common part of the congruence proof for 
applicative bisimilarity in the cases above? 
Where do the limits on the correctness of applicative
bisimilarity lie, in presence of effects? The authors strongly believe
that the field of coinductive techniques for higher-order program
equivalence should be better understood \emph{in the abstract}, this
way providing some answers to the questions above, given that generic
accounts for effectful $\lambda$-calculi abound in the
literature~\cite{Moggi/LICS/89,PlotkinPower/FOSSACS/01}.

This paper represents a first step towards answering the questions
above. We first of all introduce a computational $\lambda$-calculus in
which general algebraic effects can be represented, and give a monadic
operational semantics for it, showing how the latter coincides with
the expected one in many distinct concrete examples. We then show how
applicative bisimilarity can be defined for any instance of such a
monadic $\lambda$-calculus, based on the notion of a relator, which
allows to account for the possible ways a relation on a set $X$ can be
turned into one for $\monad X$, where $\monad$ is a monad.  We then
single out a set of axioms for monads and relators which allow us to
follow Howe's proof of congruence for applicative bisimilarity
\emph{in the abstract}. Noticeably, these axioms are satisfied in all
the example algebraic effects we consider. The proof of it allows us
to understand the deep reasons \emph{why}, say, different instances of
Howe's method in the literature seem to have different complexities.

%%%%%%%%%%%%%%%%%%%%%%%%%%%%%%%%%%%%%%%%%%%%%%%%%%%%%%%%
\section{On Coinduction and Effectful $\lambda$-Calculi}\label{sect:informal}
%%%%%%%%%%%%%%%%%%%%%%%%%%%%%%%%%%%%%%%%%%%%%%%%%%%%%%%%
In this section, we illustrate how coinduction can be useful when
proving the equivalence of programs written in higher-order effectful
calculi.

\begin{figure}
  \begin{center}
  \fbox{
    \begin{minipage}{.42\textwidth}
      \begin{center}
        \begin{align*}
          W&\rightarrow V\oplus\mathit{COMP}(V,W)\\
          Z&\rightarrow T\,\underline{1}\\
          T\,\underline{n}&\rightarrow (R\,\underline{n})\oplus(T\,\underline{n+1})\\
          R\,\underline{0}&\rightarrow\lambda x.x\\
          R\,\underline{n+1}&\rightarrow\mathit{COMP}(R\,\underline{n},V)\\
        \end{align*}
      \end{center}
  \end{minipage}}
  \end{center}
  \caption{Two Probabilistic Programs.}\label{fig:examplefirst}
\end{figure}
Let us start with a simple example of two supposedly equivalent
probabilistic functional programs, $W$ and $Z$, given in Figure~\ref{fig:examplefirst}.
(The expression $\mathit{COMP}(M,N)$ stands for the term $\lambda y.M(Ny)$, and 
$\oplus$ is a binary operation for fair probabilistic choice.)
Both $W$ and $Z$ behave like the $n$-th composition of a function $V$ with itself
with probability $\frac{1}{2^n}$, for every $n$. But how could we
even \emph{define} the equivalence of such effectful programs?
A natural answer consists in following Morris \cite{Morris/PhDThesis}, and stipulate
that two programs are contextually equivalent if they behave the same when put
in any context, where the observable behaviour of a term can be
taken, e.g., as its probability of convergence. Proving two terms
to be contextually equivalent can be quite hard, given the universal
quantification over all contexts on which contextual equivalence is based.

Applicative bisimilarity is an alternative definition of program
equivalence, in which $\lambda$-terms are seen as computational
objects interacting with their environment by exposing their
behaviour, and by taking arguments as input. Applicative bisimilarity
has been generalised to effectful $\lambda$-calculi of various kinds,
and in particular to untyped probabilistic
$\lambda$-calculi~\cite{DalLagoSangiorgiAlberti/POPL/2014}, and it is
known to be not only a congruence (thus \emph{sound} for contextual
equivalence) but also \emph{fully abstract}, at least for
call-by-value evaluation~\cite{CrubilleDalLago/ESOP/2014}. Indeed,
applicative bisimilarity can be applied to the example terms in
Figure~\ref{fig:examplefirst}, which can this way be proved contextual
equivalent.

The proof of soundness of applicative bisimilarity in presence of
probabilistic effects is significantly more complicated than the
original one, although both can be done by following the so-called
Howe's method~\cite{Howe/IC/1996}. More specifically, the proof that the Howe extension
of similarity is a simulation relies on duality from linear programming 
(through the Max Flow Min Cut Theorem)
when done in presence of probabilistic effects, something that is
not required in the plain, deterministic setting, nor in presence of
\emph{nondeterministic} choice.

Modern functional programming languages, however, can be ``effectful''
in quite complex ways. As an example, programs might be allowed not
only to evolve probabilistically, but also to have an internal state, to
throw exceptions, or to perform some input-output operations.
Consider, as another simple example, the programs in
Figure~\ref{fig:examplesecond}, a variation on the programs from
Figure~\ref{fig:examplefirst} where we allow programs to additionally 
raise an exception $e$ by way of the $\raiseExcep{e}$ command. Intuitively,
$W^{\mathsf{raise}}$ and $Z^{\mathsf{raise}}$ behave like $W$ and $Z$, respectively,
but they both raise an exception with a certain probability.
\begin{figure}
  \begin{center}
    \fbox{
      \begin{minipage}{.47\textwidth}
        \begin{center}
          \begin{align*}
            W^{\mathsf{raise}}&\rightarrow (V\oplus\raiseExcep{e})\oplus\mathit{COMP}(V,W^{\mathsf{raise}})\\
            Z^{\mathsf{raise}}&\rightarrow T\,\underline{1}\\
            T\,\underline{n}&\rightarrow ((R\,\underline{n})\oplus \raiseExcep{e})\oplus(T\,\underline{n+1})\\
            R\,\underline{0}&\rightarrow\lambda x.x\\
            R\,\underline{n+1}&\rightarrow\mathit{COMP}(R\,\underline{n},V)\\
          \end{align*}
        \end{center}
    \end{minipage}}
  \end{center}
  \caption{Two Probabilistic Programs Throwing Exceptions.}\label{fig:examplesecond}
\end{figure}

While applicative similarity in presence of catchable exceptions is
well-known to be unsound~\cite{KoutavasLevySumii/ENTCS/2011}, the mere presence of the
$\raiseExcep{e}$ command does not seem to cause any significant
problem. The literature, however, does not offer any
result about whether \emph{combining} two or more notions of
computational effect for which bisimilarity is known to work well,
should be problematic or not. An \emph{abstract}
theory accounting for how congruence proofs can be carried out in
effectful calculi is simply lacking.

Even if staying within the scope of Howe's method, it seems that
each effect between those analysed in the literature is handled by way
of some \emph{ad-hoc} notion of bisimulation. As an example,
nondeterministic extensions of the $\lambda$-calculus can be dealt
with by looking at terms as a labelled transition system, while
probabilistic extensions of the $\lambda$-calculus require a different
definition akin to Larsen and Skou's probabilistic
bisimulation~\cite{DalLagoSangiorgiAlberti/POPL/2014}. What kind of
transition do we need when, e.g., dealing with the example from
Figure~\ref{fig:examplesecond}?  In other words, an abstract theory of
effectful applicative bisimilarity would be beneficial from a purely
definitional viewpoint, too.

What could come to the rescue here is the analysis of effects and
bisimulation which has been carried out in the field of
coalgebra~\cite{Rutten/TCS/2000}.  In particular, we here exploit the
theory of relators, also known as lax extensions~\cite{Barr/LMM/1970,Thijs/PhDThesis/1996}.
%%%%%%%%%%%%%%%%%%%%%%%%%%%%%%%%%%%%%%%%%%%%%%%%
\section{Domains and Monads: Some Preliminaries}
%%%%%%%%%%%%%%%%%%%%%%%%%%%%%%%%%%%%%%%%%%%%%%%%
In this section, we recall some basic definitions and results on
complete partial orders, categories, and monads. All will be central
in the rest of this paper. Due to space constraints, there is no hope to be
comprehensive. We refer to the many introductory textbooks on partial order 
theory \cite{DaveyPriestley/Book/1990} or category theory \cite{MacLane/Book/1971} 
for more details.
%%%%%%%%%%%%%%%%%%%%
\subsection{Domains and Continuous $\signature$-algebras}
%%%%%%%%%%%%%%%%%%%%

Here we recall some basic notions and results on domains that we will 
extensively use in this work. The main purpose of this section is to introduce 
the notation and terminology we will use in the rest of this paper. We address the 
reader to e.g. \cite{AbramskyJung/DomainTheory/1994} for a deeper treatment of the 
subject.

Recall that a poset is a set equipped with a reflexive, transitive
and antisymmetric relation.

\begin{definition}\label{omegaChain}
  Given a poset $\Cpoone = (\cpoone, \cpoleq_{\cpoone})$, an $\omega$-chain in
  $\cpoone$ is an infinite sequence $(x_n)_{n < \omega}$ of elements of
  $\cpoone$ such that $x_n \cpoleq_{\cpoone} x_{n+1}$, for any $n \geq 0$.
\end{definition}

\begin{definition}\label{omegaCPPO}
  A poset $\Cpoone = (\cpoone, \cpoleq_\cpoone)$ is an $\omega$-\emph{complete
  partial order},
  \ocpo\ for short, if any $\omega$-chain $(x_n)_{n<\omega}$ in $\cpoone$ has least 
  upper bound (lub) in $\cpoone$. A poset $\Cpoone = (\cpoone, \cpoleq)$ is an
  $\omega$-\emph{complete pointed partial order}, \ocppo\ for short, if it is
  an \ocpo\ with a least element $\bot_{\cpoone}$.
\end{definition}

  For an \ocppo\ $\Cpoone = (\cpoone, \cpoleq_{\cpoone}, \bot_{\cpoone})$ we
  will often omit subscripts, thus writing $\cpoleq$ and $\bot$ for
  $\cpoleq_{\cpoone}$ and $\bot_{\cpoone}$, respectively. Given an
  $\omega$-chain $(x_n)_{n < \omega}$ we will denote its least upper bound by
  $\lub_{\cpoone} \{x_n \mid n < \omega\}$. Oftentimes, following the above
  convention, we will shorten the latter notation to $\lub_{n < \omega} x_n$.

  Notice that an $\omega$-chain being a sequence, its elements need not be
  distinct. In particular, we say that the chain is \emph{stationary} if there
  exists $N < \omega$ such that $x_{N + n} = x_N$, for any $n < \omega$.

  We will often use the following basics result, stating that if we discard
  any finite number of elements at the beginning of a chain, we do not affect
  its set of upper bounds (and its lub).
\begin{lemma}\label{lub-from-bigger-index}
  For any $\omega$-chain $(x_n)_{n<\omega}$ and $N < \omega$ the following
  equality holds: $$\lub_{n < \omega} x_n = \lub_{n < \omega} x_{N+n}.$$
\end{lemma}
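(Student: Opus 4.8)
The plan is to prove the stated equality by showing that the two $\omega$-chains $(x_n)_{n<\omega}$ and $(x_{N+n})_{n<\omega}$ have \emph{exactly the same} set of upper bounds in $\cpoone$. Since a least upper bound is by definition the least element of the set of upper bounds, equality of those sets immediately yields equality of the lubs; as a bonus it shows that one lub exists precisely when the other does, so this particular lemma does not even require $\cpoone$ to be an \ocpo.

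First I would check that $(x_{N+n})_{n<\omega}$ is genuinely an $\omega$-chain in $\cpoone$: for each $n$, the inequality $x_{N+n}\cpoleq x_{N+n+1}$ is just the instance at index $N+n$ of the chain condition on $(x_n)_{n<\omega}$. The only real auxiliary ingredient is the observation that $x_m\cpoleq x_{\max(m,N)}$ for every $m<\omega$: when $m\geq N$ this is trivial, and when $m<N$ a routine induction on $N-m$, using transitivity of $\cpoleq$ together with $x_k\cpoleq x_{k+1}$, gives $x_m\cpoleq x_N$.

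With this in place the comparison of upper bounds is immediate. Write $U$ for the set of upper bounds of $\{x_n\mid n<\omega\}$ and $U'$ for the set of upper bounds of $\{x_{N+n}\mid n<\omega\}$. Since $\{x_{N+n}\mid n<\omega\}\subseteq\{x_n\mid n<\omega\}$, every element of $U$ lies in $U'$, so $U\subseteq U'$. Conversely, suppose $u\in U'$ and let $m<\omega$; by the auxiliary observation $x_m\cpoleq x_{\max(m,N)}$, and $x_{\max(m,N)}=x_{N+(\max(m,N)-N)}\cpoleq u$ because $u$ bounds $\{x_{N+n}\mid n<\omega\}$. Hence $x_m\cpoleq u$ for all $m<\omega$, i.e.\ $u\in U$, so $U'\subseteq U$ and therefore $U=U'$. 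Consequently $\lub_{n<\omega}x_n$ and $\lub_{n<\omega}x_{N+n}$ are both the least element of the same set, hence equal. There is no genuine obstacle in this argument; the only spot calling for a modicum of care is the finite induction establishing $x_m\cpoleq x_N$ for $m<N$, and even that follows mechanically from reflexivity, transitivity, and the chain condition.
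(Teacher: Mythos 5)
Your proof is correct and follows exactly the route the paper has in mind: the text introducing the lemma justifies it precisely by the observation that discarding finitely many initial elements does not change the set of upper bounds, which is what you establish (the paper itself leaves the routine details, such as $x_m \cpoleq x_N$ for $m < N$, unstated). Nothing to fix.
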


\begin{definition}
  Let $\Cpoone = (\cpoone, \cpoleq_{\cpoone}, \bot_{\cpoone}),
  \Cpotwo = (\cpotwo, \cpoleq_{\cpotwo}, \bot_{\cpotwo})$ be \ocppo s. 
  We say that a function $f : \cpoone \to \cpotwo$ is \emph{monotone} if for
  all elements $x,y$ in $\cpoone$, $x \cpoleq_{\cpoone} y$ implies $f(x)
  \cpoleq_{\cpotwo} f(y)$. We say it is \emph{continuous} if it is monotone
  and preserves lubs. That is, for any $\omega$-chain $(x_n)_{n<\omega}$ in
  $\cpoone$ we have: $$f(\lub_{\cpoone} \{x_n \mid n<\omega\}) =
  \lub_{\cpotwo} \{f(x_n) \mid n<\omega\}.$$ Finally, we say it is
  \emph{strict} if $f(\bot_\cpoone) = \bot_\cpotwo.$
\end{definition}

\begin{remark} (see \cite{AbramskyJung/DomainTheory/1994})
  It can be easily shown that if a function is continuous then it is also
  monotone. However, it should be noticed that to prove that a function $f :
  \cpoone \to \cpotwo$ is continuous, it is necessary to prove that
  $(f(x_n))_{n<\omega}$ forms an $\omega$-chain in $\cpotwo$, for any
  $\omega$-chain $(x_n)_{n<\omega}$ in $\cpoone$. That is equivalent to prove
  monotonicity of $f$.
\end{remark}

We denote the set of continuous functions from $\cpoone$ to $\cpotwo$ by
$\contFun{\cpoone}{\cpotwo}$ and write $f : \contFun{\cpoone}{\cpotwo}$ for $f
\in \contFun{\cpoone}{\cpotwo}$.

We will implicitly use the fact that continuous endofunctions on \ocppo s are
guaranteed to have least fixed points: given a continuous endofunction $f :
\cpoone \to \cpoone$ on an \ocppo\ $\cpoone$, there exists an element $\lfp{f}
\in \cpoone$ such that $f(\lfp{f}) = \lfp{f}$, and for any $x \in \cpoone$, if
$f(x) = x$ then $\lfp{f} \cpoleq x$. Throughout this work, we will use the
notation $\lfp{f}$ and $\gfp{f}$ to denote least and greatest fixed point of a
function $f$, respectively.

\ocppo s and continuous functions form a category, $\ocppoCat$, which has a cartesian closed 
structure. In particular, the cartesian product (of the underlying sets) of
\ocppo s is an
\ocppo\ when endowed with the pointwise order (with lubs and bottom element computed pointwise). 
Similarly, the set of continuous functions spaces between \ocppo s is an
\ocppo\ when endowed with the pointwise order (again, with lubs and bottom
element computed pointwise)\footnote{ Let $\Cpoone, \Cpotwo$ be \ocppo s
define:
\begin{itemize}
\item The \ocppo\ structure on $D \times E$is given by: 
    \begin{eqnarray*}
      (x,y) \cpoleq_{D \times E} (x',y')  
          &\Leftrightarrow&  x \cpoleq_D x' \wedge y \cpoleq_{E} y';  \\
      \bot_{\cpoone \times \cpotwo}
          &=& (\bot_{\cpoone}, \bot_{\cpotwo}) ; \\
      \lub_{\cpoone \times \cpotwo} \{(x_n,y_n) \mid n < \omega\}
          &=& (\lub_{\cpoone} \{x_n \mid n < \omega\}, \lub_{\cpotwo} \{y_n \mid n < \omega\}).
    \end{eqnarray*}
  \item The \ocppo\ structure on $\contFun{D}{E}$ is given by:
    \begin{eqnarray*} 
      f \cpoleq_{\contFun{D}{E}} g 
        &\Leftrightarrow& \forall x \in D.\ f(x) \cpoleq_E g(x) ; \\
      \bot_{\contFun{\cpoone}{\cpotwo}} 
        & = & \fun{x}{\bot_E} ; \\
      \lub_{\contFun{\cpoone}{\cpotwo}} \{f_n \mid n < \omega\} 
        & = &\fun{x}{\lub_{\cpotwo} \{f_n(x) \mid n < \omega\}}.
    \end{eqnarray*}
    \end{itemize}
}.
Notice that the function space $\contFun{\cpoone}{\cpotwo}$ between \ocppo s
$\cpoone$ and $\cpotwo$ is an \ocppo\ even if $D$ does not have a least
element (i.e. if $D$ is an \ocpo). As a consequence, since we can regard any
set $X$ as an \ocpo\ ordered by the identity relation $=_X$ on $X$\footnote{We
call such \ocpo s \emph{discrete}}, the set $\contFun{X}{\cpoone} = X \to
\cpoone$ is always an \ocppo, for any \ocppo\ $\cpoone$. The following well-
known result will be useful in several examples.
\begin{lemma}\label{continuous-in-both-arguments}
Let $\Cpothree, \Cpoone, \Cpotwo$ be \ocppo s. A function $f : C \times D \to
E$ is:
\begin{varenumerate}
  \item Monotone, if it is monotone in each arguments separately.
  \item Continuous, if it is continuous in each arguments separately.
\end{varenumerate}
\end{lemma}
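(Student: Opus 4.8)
The plan is to obtain both claims from the separate-argument hypotheses together with the pointwise description of the product \ocppo\ $C \times D$ recalled above (order, lubs, and bottom computed componentwise).

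For item (1), assume $f$ is monotone in each argument separately and take $(x,y) \cpoleq_{C \times D} (x',y')$, i.e.\ $x \cpoleq_C x'$ and $y \cpoleq_D y'$. Then $f(x,y) \cpoleq_E f(x',y)$ by monotonicity in the first argument ($y$ fixed), and $f(x',y) \cpoleq_E f(x',y')$ by monotonicity in the second argument ($x'$ fixed), so transitivity of $\cpoleq_E$ yields $f(x,y) \cpoleq_E f(x',y')$.

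For item (2), assume $f$ is continuous in each argument separately. Continuity in an argument entails monotonicity in it, so item (1) already gives that $f$ is monotone, and only preservation of lubs remains. Given an $\omega$-chain $((x_n,y_n))_{n<\omega}$ in $C \times D$, its lub is $(\lub_{n<\omega} x_n,\ \lub_{m<\omega} y_m)$, and applying continuity in the first argument and then, inside, continuity in the second argument gives
$$f\Bigl(\lub_{n<\omega} x_n,\, \lub_{m<\omega} y_m\Bigr) \;=\; \lub_{n<\omega} f\Bigl(x_n,\, \lub_{m<\omega} y_m\Bigr) \;=\; \lub_{n<\omega}\,\lub_{m<\omega} f(x_n,y_m).$$
Writing $a_{n,m} \defeq f(x_n,y_m)$, monotonicity of $f$ together with the fact that $(x_n)_n$ and $(y_m)_m$ are chains makes $a$ monotone in each index, so $m \mapsto a_{n,m}$ and $n \mapsto \lub_m a_{n,m}$ are genuine $\omega$-chains in $E$ and all the displayed lubs exist. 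It then remains to prove the diagonal identity $\lub_{n}\lub_{m} a_{n,m} = \lub_{k} a_{k,k}$: for one inequality, $a_{k,k} \cpoleq_E \lub_m a_{k,m} \cpoleq_E \lub_n\lub_m a_{n,m}$; for the converse, $a_{n,m} \cpoleq_E a_{\max(n,m),\max(n,m)} \cpoleq_E \lub_k a_{k,k}$, hence $\lub_m a_{n,m} \cpoleq_E \lub_k a_{k,k}$ and then $\lub_n\lub_m a_{n,m} \cpoleq_E \lub_k a_{k,k}$. Taking $a_{k,k} = f(x_k,y_k)$ yields $f(\lub_n(x_n,y_n)) = \lub_{k<\omega} f(x_k,y_k)$, which is the required continuity.

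The only delicate point — the step I would flag as the main obstacle — is this diagonal identity, and in particular verifying that the intermediate sequences $m\mapsto f(x_n,y_m)$ and $n\mapsto\lub_m f(x_n,y_m)$ really are $\omega$-chains, so that their lubs exist in $E$; this is exactly where the monotonicity of $f$ established in item (1) is used. Everything else is a routine unfolding of the componentwise structure of $C \times D$.
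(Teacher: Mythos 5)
Your proof is correct: part (1) is the standard two-step comparison, and in part (2) the separate continuities reduce the claim to the diagonal identity $\lub_{n}\lub_{m}a_{n,m}=\lub_{k}a_{k,k}$ for a family monotone in each index, which you verify properly, including the well-definedness of the intermediate chains. The paper states this lemma without proof, as a well-known fact, and your argument is precisely the standard one it implicitly relies on, so there is nothing to flag.
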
 

Following \cite{PlotkinPower/FOSSACS/01,PlotkinPower/FOSSACS/02}, we consider
operations (like $\oplus$ or $\mathsf{raise}_e$ in the examples from
Section~\ref{sect:informal}) form a given signature as sources of effects.
Semantically, dealing with operation symbols requires the introduction of
appropriate algebraic structures interpreting such operation symbols as
suitable functions. Combining the algebraic and the order theoretic structures
just described, leads to consider algebras carrying a domain structure
(\ocppo, in this paper), such that \emph{all} function symbols are interpreted
as continuous functions. The formal notion capturing all these desiderata is
the one of a continuous $\signature$-algebra \cite{Goguen/1977}.

Recall that a signature $\signature = (\mathcal{F}, \alpha)$ consists of a set
$\mathcal{F}$ of operation symbols and a map $\alpha : \mathcal{F} \to
\mathbb{N}$, assigning to each operation symbol a (finite) arity. A
$\signature$-algebra $(A, (\cdot)^{A})$ is given by a carrier set $A$ and an
interpretation $(\cdot)^{A}$ of the operation symbols, in the sense that for
$\op \in \mathcal{F}$, $\op^{A}$ is a map from $A^{\alpha(\op)}$ to $A$. We
will write $\op \in \signature$ for $\signature = (\mathcal{F}, \alpha)$ and
$\op \in \mathcal{F}$.

\begin{definition} 
  Given a signature $\signature$, a continuous $\signature$-algebra is an
  \ocppo\ $\Cpoone = (\cpoone, \cpoleq, \bot)$ such that for any function 
  symbol $\op$ in $\signature$ there is an associated continuous function $\op^D :
  D^{\alpha(\op)} \to D$.
\end{definition}

\begin{remark}
  Observe that for a function symbol $\op \in \signature$, 
  we do not require $\op^D$ to be strict.
\end{remark}

Before looking at monads, we now give various examples of
concrete algebras which can be given the structure of a continuous 
$\signature$-algebra for certain signatures. This testifies the applicability 
of our theory to a relatively wide range of effects. 
 
\begin{example}\label{omegaCppo} 
  Let $X$ be a set: the following are examples of \ocppo.
  \begin{varitemize}
  \item 
    The flat lifting $X_\bot$ of $X$, defined as $X + \{\bot\}$, ordered as
    follows: $x \cpoleq y$ iff $x = \bot$ or $x=y$.
  \item The set $(X + E)_\bot$ (think to $E$ as a set of exceptions), ordered as 
    in the previous example. We can consider the signature $\signature =
    \{\raiseExcep{e} \mid e \in E\}$, where each operation symbol
    $\raiseExcep{e}$ is interpreted as the constant $\inl(\inr(e))$.
  \item 
    The powerset $\powerset{X}$, ordered by inclusion. The least upper bound
    of a chain of sets is their union, whereas the bottom is the empty set. We
    can consider the signature $\signature = \{\oplus\}$ containing a binary
    operation symbol for nondeterministic choice. The latter can be
    interpreted as (binary) union, which is indeed continuous.
  \item 
    The set of subdistributions $\dist{X} = \{\distone : X \to [0,1]
    \mid \support{\distone} \text{ countable, } \sum_{x \in X}
    \distone(x) \leq 1\}$ over $X$, ordered
    pointwise: $\distone \cpoleq \disttwo$ iff $\forall x \in
    X.\ \distone(x) \leq \disttwo(x)$.  Note that requiring the
    support of $\distone$ to be countable is equivalent to requiring
    the existence of $\sum_{x \in X} \distone(x)$.  The \ocppo\
    structure is pointwise induced by the one of $[0,1]$ with the
    natural ordering. The least element is the always zero
    distribution $\fun{x}{0}$ (note that the latter is a
    subdistribution, and not a distribution). We can consider the
    signature $\signature = \{\oplus_p \mid p \in [0,1] \}$ with a
    family of probabilistic choice operations indexed by real numbers
    in $[0,1]$. We can interpret $\oplus_p$ as the binary operation 
    $\fun{(x,y)}{p \cdot x + (1 - p) \cdot y}$, which is indeed continuous.
  \item 
    The set $(S \times X)^S_\bot$, or equivalently $S \rightharpoonup (X
    \times S)$ (the set of partial states over $X$) with extension order: $f
    \cpoleq g$ iff $\forall x \in X.\ f(x) \neq
    \bot \Rightarrow f(x) = g(x)$, for a fixed set $S$ (of states). 
    The bottom element is the totally undefined function $\fun{x}{\bot}$,
    whereas the least upper bound of a chain $(f_n)_{n < \omega}$ is computed
    pointwise. Depending on the choice of $S$, we can define several
    continuous operations on $(S \times X)^S_\bot$. For instance, taking $S =
    \{\mathtt{true},\mathtt{false}\}$, the set of booleans, we can consider
    the signature $\signature =
    \{\mathsf{read}, \mathsf{write}_b \mid b \in S\}$ to be
    interpreted as the continuous operations $\mathit{read}$ and
    $\mathit{write}_b$ defined by
    \begin{align*}
      \mathit{write}_b(f)    & =  \fun{x}{f(b)};   \\
      \mathit{read}(f,g)     & =  \fun{x}{\text{if } x = \mathsf{true} \text{ then } f(x) \text{ else } g(x)}.
    \end{align*}  
  \item 
    The set $\stream{U} \times X_{\bot}$ (modelling computations with output
    streams) with the product order, for a fixed set $U$ (think of $U$ as a
    set of characters). The set $\stream{U}$ of streams over $U$ is the set of
    all finite and infinite strings (or words) over $U$. Formally, a stream $u
    \in \stream{U}$ is a function $u :
    \mathbb{N} \to U_\bot$ (i.e. a partial function from $\mathbb{N}$ to $U$) 
    such that $u(n) = \bot$ implies $u(n+c) = \bot$, for any $c \geq 0$. A
    finite stream is a function $u$ such that there exists an $n$ for which
    $u(n) =
    \bot$. We can endow $\stream{U}$ with the so-called approximation order, i.e. 
    the extension order on $\mathbb{N} \to U_{\bot}$. A finite approximation
    of length $n$ of a stream $u$ is a stream $w$ of length $n$ such that $w
    \cpoleq u$ holds. Clearly, the set of finite approximants of a stream $u$
    forms an $\omega$-chain, and for any $u
    \in \stream{U}$ we have $u = \lub_{n < \omega} u^{(n)}$,
    where $u^{(n)} = (u(0), \hh, u(n-1), \bot)$ denotes the $n$-th approximant
    of $u$. We can define the concatenation $\concat{u}{w}$ of a finite stream
    $u = (u(0), \hh, u(n-1), \bot)$ and a stream $w \in \stream{U}$ by
    $$
    (\concat{u}{w})(k) = 
        \begin{cases}
          u(k)  & \text{if } k \leq n - 1;    \\
          w(c)  & \text{if } k = n+c,\text{ for } c \geq 0.
        \end{cases}
    $$
    We can extend concatenation to infinite streams
    defining $\concat{u}{w} = u$, for $u$ infinite stream. 
    It is easy to prove that concatenation is continuous in
    its second argument, although even monotonicity fails for its
    first argument. For, consider the streams $c, cc$ (which are
    shorthand for $(c, \bot), (c,c,\bot)$, respectively). We
    clearly have $c \cpoleq cc$, but $\concat{c}{b} = cb \not \cpoleq
    ccb = \concat{cc}{b}$. \\
    Finally, we can consider the signature
    $\signature = \{\mathsf{print}_c \mid c \in U\}$ interpreted as the
    family of operations $\mathit{print}_c$ defined by $
    \mathit{print}_c(u, x) = (\concat{c}{u}, x)$. It is easy to 
    see that since concatenation is continuous in its second argument, 
    then so does $\mathit{print}_c$.
\end{varitemize}
\end{example}
\subsection{Monads}

The notion of monad is given via the equivalent notion of Kleisli Triple (see
\cite{MacLane/Book/1971}). Let $\mathbb{C}$ be a category.

\begin{definition}
  A Kleisli Triple $\lan \monad, \uniT, \kleisli{(\cdot)}\ran$ consists of an
  endomap $\monad$ over objects of $\mathbb{C}$, a family of arrows $\uniT_X$,
  for any object $X$, and an operation (called Kleisli extension or Kleisli
  star) $\kleisli{(\cdot)} : \HomSet{\mathbb{C}}{X}{\monad Y} \to
  \HomSet{\mathbb{C}}{\monad X}{\monad Y}$,
  (for all objects $X,Y$) satisfying the equations
 \begin{align*}
    \kleisli{f} \circ \uniT         & =  f;   \\
    \kleisli{\uniT}                 & =  id ; \\
    \kleisli{(\kleisli{g} \circ f)} & =  \kleisli{g} \circ \kleisli{f};
  \end{align*}
  where $f$ and $g$ have the appropriate types.
\end{definition}
Given the equivalence between the notions of monad and Kleisli Triple, we will
be terminologically sloppy, using the terms `monads' and `Kleisli Triples'
interchangeably. In particular, for a monad/Kleisli Triple $\lan \monad,
\uniT, \kleisli{(\cdot)}\ran$ we will implicitly assume functoriality of the
endomap $\monad$. Finally, we will often denote a Kleisli Triple $\lan \monad,
\uniT, \kleisli{(\cdot)}\ran$ simply as $\monad$.

To any Kleisli Triple $\lan \monad, \uniT, \kleisli{(\cdot)}\ran$ on a
category $\mathbb{C}$ we can associate the so-called Kleisli category
$\kleisliCat{\monad}$ over $\mathbb{C}$.

\begin{definition}
  Given a Kleisli triple as above, we define the Kleisli category $\kleisliCat{\monad}$ 
  (over $\mathbb{C}$) as follows: 
  \begin{varitemize}
    \item Objects of $\kleisliCat{\monad}$ are those of $\mathbb{C}$.
    \item To any arrow $f : X \to \monad Y$ in $\mathbb{C}$ we associate an 
      arrow $\bar f : X \to Y$ in $\kleisliCat{\monad}$.
    \item The identity arrow $id_X : X \to X$ in $\kleisliCat{\monad}$ is $\uniT_X$.
    \item Given arrows $\bar f : X \to Y, \bar g : Y \to Z$ (which correspond to arrows 
      $f : X \to \monad Y$ and $g : Y \to \monad Z$ in $\mathbb{C}$), define
      their composition to be $\kleisli{g} \circ f$.
  \end{varitemize}
\end{definition}

From now on we fix the base category $\mathbb{C}$ to be the category $\set$ of sets and functions.

\begin{remark}
  Since we work in $\set$, we will extensively use the so called bind operator
  $\bind$ in place of Kleisli extensions. Such operator takes as arguments an
  element $u$ of $\monad X$, together with a function $f : X \to \monad Y$ and
  returns an element $u \bind f$ in $\monad Y$. Concretely, we can define $u
  \bind f$ as $\kleisli{f}(u).$ Vice versa, we can define the Kleisli
  extension $\kleisli{f}$ of $f$ as $\fun{x}{(x \bind f)}$.
\end{remark}

\begin{example}\label{monadExample} 
  All the constructions introduced in Example \ref{omegaCppo} carry the 
  structure of a monad. 
  \begin{varitemize}
  \item The functor $\monad X = X_{\bot}$ is (part of) a monad, with left injection 
    as unit and bind operator defined by
    $$
      u \bind f = 
      \begin{cases}
        f(x)        & \text{if } u = \inl(x), \text{ for some }x\in X; \\
        \inr(\bot)  & \text{otherwise}.
      \end{cases}
    $$
  \item 
    The powerset functor $\powersetmonad$ is a monad with unit
    $\fun{x}{\{x\}}$ and bind operator defined by $u \bind f =
    \bigcup_{x \in u} f(x)$.
  \item 
    The subdistribution functor $\distribution$ is a monad 
    with unit given via the Dirac distribution $\delta$ and bind operator defined by
    $$\distone \bind f = \fun{y}{\sum_{x \in X} \distone(x) \cdot f(x)(y)}.$$
  \item 
    The partiality and exception functor $\monad X = (X +
    E)_{\bot}$ for a given set $E$ of exceptions is a monad with the
    function $\fun{x}{\inl(\inl(x))}$ as unit. The bind operator is defined by
    $$
    u \bind f = 
    \begin{cases}
       u      & \text{if } u = \inr(\bot) \text{ or } u = \inl(\inr(e)); \\
     f(x)     & \text{if } u = \inl(\inl(x)).
    \end{cases}
    $$
  \item 
    The partiality and global state functor $\monad X = S
    \to (X \times S)_{\bot}$ for a given set $S$ of states, is
    a monad with unit $\fun{x}{(\fun{s}{(x,s)})}$ and the bind
    operator defined by
    $$(\sigma \bind f)(s) = 
    \begin{cases}
      \inr(\bot)    & \text{ if } \sigma(s) = \inr(\bot);    \\
      f(y)(t)       & \text{ if } \sigma(s) = \inr(y,t).
    \end{cases}
    $$
    \item The output functor $\monad X = \stream{U} \times X_\bot$ is a monad 
      with unit $\fun{x}{(\varepsilon, \inl(x))}$ where, to avoid confusion, we 
      use denote the empty stream by $\varepsilon$, and bind operator defined by
      \begin{align*}
        (u, \inr(\bot)) \bind f   &= (u, \inr(\bot));  \\
        (u,\inl(x)) \bind f       &= (\concat{u}{w}, y)  \\
                                  &\quad  \text{ where } (w,y) = f(x).
      \end{align*}
  \end{varitemize}
\end{example}

For a given signature $\signature$, we are interested in monads on $\set$ 
that carry a continuous $\signature$-algebra structure.
\begin{definition}\label{ocppo-order}
  An \emph{\ocppo\ order} $\cpoleq$ on a monad $\monad$ is a map that
  assigns to each set $X$ a relation $\cpoleq_X \subseteq \monad X
  \times \monad X$ and an element $\bot_X \in \monad X$ such that
  \begin{varitemize}
  \item 
    The structure $(\monad X, \cpoleq_X, \bot_X)$ is an \ocppo.
  \item The bind operator is continuous in both arguments. That is,
    \begin{align*}
      (\lub_{n<\omega} u_n) \bind f  & =  \lub_{n<\omega} (u_n \bind f);   \\
      u \bind (\lub_{n<\omega} f_n)  & =  \lub_{n<\omega} (u \bind f_n).
    \end{align*}
  \end{varitemize}
  We say that $\cpoleq$ is strict in its first argument if we additionally
  have $\bot \bind f = \bot$ (and similarly for its second argument). We say
  that $\monad$ carries a continuous $\signature$-algebra structure if
  $\monad$ has an \ocppo\ order such that $\monad X$ is a continuous
  $\signature$-algebra with respect to the order $\cpoleq_X$, for any set $X$.
\end{definition}
Most of the time we will work with a fixed set $X$. As a consequence, we will
omit subscripts, just writing $\cpoleq$ in place of $\cpoleq_X$. Similarly,
for an operation $\op$ in $\signature$, we will write $\op^\monad$ in place of
$\op^{\monad X}$ (the interpretation of $\op$ as an operation on $\monad X$).

\begin{remark}
  The last definition is essentially regarding the bind operator as a
  \emph{continuous} function (in both arguments) from $\monad X\times (X
  \to\monad Y)$ to $\monad Y$. This makes sense since $\monad X \times (X \to\monad Y)$ is an 
  \ocppo: regarding the set $X$ as the discrete \ocpo, we have 
  $X \to \monad Y = \contFun{X}{\monad Y}$, so that $\monad X \times (X \to
  \monad Y)$ is an \ocppo, being the product of two \ocppo s. Because $\bind$
  is continuous in both its arguments, we have $(\lub_{n} u_n) \bind (\lub_{n}
  f_n) = \lub_{n} (u_n \bind f_n)$.
\end{remark}

The bind operation will be useful when giving an operational semantics to the
sequential (monadic) composition of programs. As a consequence, although we
did not explicitly require the bind operator to be strict (especially in its
first argument), such condition will be often desired (especially when giving
semantics to call-by-value languages).

\begin{example}\label{outputDoesNotWork}
  Example \ref{omegaCppo} shows that all monads in Example \ref{monadExample}
  have an \ocppo\ order. It is easy to check that all bind operations, with
  the exception of the one for the output monad, are strict in their first
  argument. In fact, even monotonicity of the bind operator for output monad
  fails, due to the failure of monotonicity for concatenation (see
  \ref{omegaCppo}). The reason why this property does not hold for the output
  monad relies on non-monotonicity of the concatenation operator on its first
  argument. Nonetheless, we can endow $U^{\infty} \times X_{\bot}$ with a
  different order, obtaining the desired result:
  $$(u,x) \cpoleq (w,y)\ \text{ iff }\ (x = \inr(\bot) \wedge u \cpoleq w)
  \vee (x \neq \inr(\bot) \wedge x = y \wedge u = w).
  $$
  It is not hard to see that we obtain an \ocppo\ with continuous bind
  operator.
\end{example}

Definition \ref{ocppo-order} requires the bind operator to be continuous. 
This condition is a special case of the more general notion of order-enrichment 
\cite{Kelly/EnrichedCats} for a monad.
  \begin{definition}
    We say that a category $\mathbb{C}$ is \ocppo-enriched if 
    \begin{varitemize}
      \item Each hom-set $\HomSet{\mathbb{C}}{X}{Y}$ carries a partial order
        $\cpoleq$ with an \ocppo\ structure.
      \item Composition is continuous. That is, the following equations hold: 
        \begin{align*}
          g \circ (\lub_{n < \omega} f_n)   & =  \lub_{n < \omega} (g \circ f_n);   \\
          (\lub_{n < \omega} f_n) \circ g   & =  \lub_{n < \omega} (f_n \circ g).
        \end{align*}
    \end{varitemize}
  \end{definition}

  \begin{definition}
  A monad $\monad$ on $\mathbb{C}$ is \ocppo-enriched if $\kleisliCat{\monad}$
  is
  \ocppo-enriched. That is, for every pair of objects $X, Y$, the set 
  $\HomSet{\mathbb{C}}{X}{\monad Y}$ carries an \ocppo-structure such that
  composition is continuous and Kleisli star is locally continuous.
  Concretely, that means that the following equations hold (cf.
  \cite{GoncharovSchroder/LICS/2013}):
  \begin{align*}
    (\lub_{n<\omega} f_n) \circ h          & = \lub_{n<\omega} (f_n \circ h);             \\ 
    \kleisli{u} \circ \lub_{n<\omega} f_n  & = \lub_{n<\omega} (\kleisli{u} \circ f_n);   \\
    \kleisli{(\lub_{n<\omega} f_n)}        & = \lub_{n<\omega} \kleisli{f_n}.
  \end{align*}
\end{definition} 
Our notion of \ocppo\ order on a monad $\monad$ on $\set$ is nothing but a
special case of \ocppo-enrichment. Since we are in $\set$, and we have the
terminal object $1$ (say $1 = \{*\}$), any element $u$ of $\monad X$ directly
corresponds to the arrow $\bar u: 1 \to \monad X$, defined by $\bar u(*) = u$.
In particular, we have $\monad X \cong 1 \to \monad X = \contFun{1}{\monad X}$
(since $1$ is discrete). For a function $f : X \to Y$ and an element $u \in X$
we can simulate function application $f(u)$ as $\bar u \circ f$ (meaning that
$\bar u \circ f = \overline{f(u)}$). As a consequence, we have that $u \bind
f$ corresponds to $\kleisli{f} \circ \bar u$. Finally, observe that the
equation $$\overline{\lub_{\monad X} \{u_n \mid n < \omega\}} = \lub_{1 \to
\monad X} \{\overline{u}_n \mid n < \omega\}$$ holds. We show that if $\monad$
is \ocppo-enriched, then the bind operator is continuous in both arguments. In
fact, $(\lub_{n < \omega} u_n) \bind f$ corresponds to the function
$$\kleisli{f} \circ \overline{\lub_{n < \omega} u_n} =
\kleisli{f} \circ \lub_{n < \omega} \bar{u}_n = 
\lub_{n < \omega} (\kleisli{f} \circ \bar{u}_n)$$
which itself corresponds to $\lub_{n< \omega} (u_n \bind f)$. Similarly, $u
\bind \lub_{n<\omega} f_n$ corresponds to $$\kleisli{(\lub_{n<\omega} f_n)}
\circ \bar u =
\lub_{n<\omega} \kleisli{f_n} \circ \bar u = 
\lub_{n<\omega} (\kleisli{f_n} \circ \bar u)$$
which corresponds to $\lub_{n<\omega} (u \bind f_n)$.

Most of the monads commonly used e.g. in functional programming 
to model side-effects are not order enriched. This follows from 
the requirement of having a bottom element. 
The reason behind that condition relies on the fact
that our operational semantics will be model non-termination
explicitly. That is, a (purely) divergent program $\termone$ will be
evaluated in the bottom element of the monad. For instance let us
consider pure $\lambda$-calculus. Standard operational semantics
employs inductively defined judgments of the form
$\evalToo{\termone}{\valone}$, meaning that $\evalToo{}{} \subseteq
\terms \times \values$ (which, in general, can be viewed as
$\evalToo{}{} \subseteq \terms \times \monad \values$, for $\monad$ 
the identity monad). Such a semantics does not capture divergence explicitly:
for instance, we just have that there exists no value $\valone$ such
that $\evalToo{\Omega}{\valone}$. The operational semantics we will
define in the next chapter associates to each program a subset of the
finite approximations it is evaluated to, and then consider the lub of
such approximations. As a consequence, we need the monad to have
bottom element $\bot$, so that we will have that the semantics of
$\Omega$ to be indeed $\bot$.  Nevertheless, we recall that any set 
can always be lifted to an \ocppo\ by adding a bottom element to it and 
considering the flat ordering.  As a consequence, although most of the 
monads commonly used in functional programming are not order-enriched, 
their flat version is.

%%%%%%%%%%%%%%%%%%%%%%%%%%%%%%%%%%%%%%%%%%%%%%%%%%%%%%%%%%%%%%%%%%%%
\section{A Computational Calculus and Its Operational Semantics}
%%%%%%%%%%%%%%%%%%%%%%%%%%%%%%%%%%%%%%%%%%%%%%%%%%%%%%%%%%%%%%%%%%%%
In this section we define a computational $\lambda$-calculus.
Following \cite{Moggi/LICS/89,Lassen/PhDThesis,Levy/InfComp/2003}, we syntactically 
distinguish between values and computations. 
We fix a signature $\signature$ of operation symbols (the sources of side-effects), 
and a monad $\monad$ carrying a continuous 
$\signature$-algebra structure (which describes the nature of the wanted effectful 
computations generated by the operations in $\signature$). 

\begin{definition}
  Given a signature $\signature$, the sets $\terms_\signature$ and $\values_\signature$ of 
  terms and values are defined by the following grammars:
\begin{eqnarray*}
  \termone, \termtwo  & ::= & \return{\valone}          \mid
                              \valone \valtwo           \mid 
                              \seq{\termone}{\termtwo}  \mid
                              \op(\termone, \hh, \termone); \\
  \valone, \valtwo    & ::= & \varone                   \mid
                              \abs{\varone}{\termone}.
\end{eqnarray*}
where $\varone$ ranges over a fixed countably infinite set
$\variables$ of variables and $\op$ ranges over $\signature$.
\end{definition}
The term $(\seq{\termone}{\termtwo})$ captures monadic binding (which
is usually expressed using a ``let-in'' notation).  A calculus with an
explicit separation between terms and values has the advantage to make
proofs simpler, without sacrificing expressiveness. For instance, 
we can encode terms' application $\termone \termtwo$ as 
$(\seq{\termone}{(\termtwo\ \mathsf{to}\ \vartwo. \varone\vartwo)})$ and
vice versa $(\seq{\termone}{\termtwo})$ as $(\abs{\varone}{\termtwo})\termone$.

\begin{example}
  We can model several calculi combining the signatures from Example \ref{omegaCppo}.
  \begin{varitemize}
  \item 
    For a given set $E$ of exceptions, we can define a probabilistic 
    $\lambda$-calculus with exceptions
    as $\terms_\signature$, for a signature 
    $\signature = \{\oplus_p, \raiseExcep{e} \mid p \in [0,1], e \in E\}$. 
    In particular, we will have terms of the form $
    \termone \oplus_p \termtwo$  and $\raiseExcep{e}$. 
    Replacing the probabilistic choice operator $\oplus_p$ with its
    nondeterministic counterpart $\oplus$ we obtain a nondeterministic
    calculus with exceptions.
  \item 
    We can define a nondeterministic calculus with global (boolean) states as
    $\terms_\signature$, for a signature $\signature = \{\oplus,
    \mathsf{write}_b, \mathsf{read} \mid b \in \{\mathit{true},
    \mathit{false}\} \}$. In particular, we will have terms of the form 
    $\termone \oplus \termtwo$,  $\mathsf{write}_b.\termone$, and
    $\mathsf{read}(\termone, \termtwo)$. The intuitive meaning of
    $\mathsf{write}_b.\termone$ is to store $b$ and then continue as
    $\termone$, whereas the intuitive meaning of $\mathsf{read}(\termone,
    \termtwo)$ is to read the value in the store: if such value is the boolen
    true then continue as $\termone$, otherwise as $\termtwo$. A formal
    semantics for these two functions is given in Example \ref{omegaCppo}.
  \item 
    We can define a nondeterministic calculus with output using the signature
    $\signature = \{\oplus, \mathsf{print}_c \mid c \in U\}$, where $U$ is a
    given alphabet. The intuitive meaning of $\print{c}{\termone}$ is to
    output $c$ and then continue as $\termone$. A formal semantics for this
    function is given in Example \ref{omegaCppo}.
  \end{varitemize}
\end{example}

In what follows, we work with a fixed arbitrary signature $\signature$. As a
consequence, we often denote the sets of terms and values as $\terms$ and
$\values$, respectively, thus omitting subscripts. Moreover, we consider terms
and values modulo $\alpha$-equivalence and assume Barendregt Convention
\cite{Barendregt/Book/1984}. We let $\fv{\termone}$ denote the set of free
variables of the term $\termone$. A term $\termone$ is closed if
$\fv{\termone} =
\emptyset$. We denote finite sets of variables, terms and values using ``bar notation'': 
for instance, we write $\vars$ and $\bar{\valone}$ for a finite set of
variables and values, respectively. For a finite set $\vars$ of variables
define
\begin{align*}
   \terms(\vars)    & = \{\termone \mid \fv{\termone} \subseteq \vars\};  \\
   \values(\vars)   & = \{\valone \mid \fv{\valone} \subseteq \vars\};
\end{align*}
to be the sets of terms and values with free variables in $\vars$, respectively. 
The set of closed terms and values are then defined as $\terms(\emptyset)$
and $\values(\emptyset)$, and denoted by $\closedTerms$ and $\closedValues$, 
respectively.

\begin{definition}
  Define for all values $\valone, \valtwo$ and any term $\termone$, the value 
  $\subst{\valone}{\valtwo}{\vartwo}$ obtained by 
  (simultaneous) substitution of $\valtwo$ for $\vartwo$ in $\valone$, and the 
  term $\substcomp{\termone}{\vartwo}{\valone}$ obtained by (simultaneous) 
  substitution of $\valone$ for $\vartwo$ in $\termone$ as follows 
  (recall we are assuming Barendregt's convention):
  \begin{eqnarray*}
    \subst{\varone}{\valtwo}{\varone}             & = & 
    \valtwo   \\  
    \subst{\varone}{\valtwo}{\vartwo}           & = & 
    \varone   \\
    \subst{(\abs{\varone}{\termone})}{\valtwo}{\vartwo}   & = & 
    \abs{\varone}{\substcomp{\termone}{\vartwo}{\valtwo}} \\
    & & \\
    \substcomp{(\return{\valone})}{\vartwo}{\valtwo}      & = & 
    \return{\subst{\valone}{\valtwo}{\vartwo}}  \\
    \substcomp{(\valone \valone')}{\vartwo}{\valtwo}      & = & 
    \subst{\valone}{\valtwo}{\vartwo} \subst{\valone'}{\valtwo}{\vartwo} \\
    \substcomp{(\seq{\termone}{\termtwo})}{\vartwo}{\valtwo}  & = & 
    \seq{\substcomp{\termone}{\vartwo}{\valtwo}}{\substcomp{\termtwo}{\vartwo}{\valtwo}}
  \end{eqnarray*}
\end{definition}

Big-step semantics associates to each closed term $\termone$ an element
$\sem{\termone}$ in $\monad \closedValues$. Such a semantics is defined by
means of an approximation relation $\evalTo{}{n}{}$, indexed by a natural
number $n$, whose definition is given in Figure \ref{bigStepSemantics}.
Judgments are of the form $\evalTo{\termone}{n}{\compone}$, where $\termone
\in \closedTerms$, $\compone \in \monad \closedValues$ and $n \geq 0$.
Intuitively, a judgment $\evalTo{\termone}{n}{\compone}$ states that
$\compone$ is the $n$-th approximation of the computation obtained by call-by-value 
evaluating $\termone$. (By the way, all the results in this paper would
remain valid also if evaluating terms in call-by-\emph{name} order, which is
however less natural
in presence of effects.)

 \begin{figure}[htbp]
 \begin{mdframed}
  \begin{center}
    \begin{tabular}{c}
      \AXC{\phantom{n>0}} \RL{$\ruleBot$}
      \UIC{$\evalTo{\termone}{0}{\bot}$} \DP 
      $\quad$
      \AXC{\phantom{$n > 0$}} \RL{$\ruleReturn$}
      \UIC{$\evalTo{\return \valone}{n+1}{\unit{\valone}}$}\DP 
      \\  \\
      \AXC{$\evalTo{\termone}{n}{\compone}$}
      \AXC{$\evalTo{\substcomp{\termtwo}{\varone}{\valone}}{n}{\comptwo_{\valone}}$} \RL{$\ruleSeq$}
      \BIC{$\evalTo{\seq{\termone}{\termtwo}}{n+1}{\compone \bind (\fun{\valone}{\comptwo_{\valone}})}$} \DP
      \\  \\
      \AXC{$\evalTo{\substcomp{\termone}{\varone}{\valone}}{n}{\compone}$} \RL{$\ruleApp$}
      \UIC{$\evalTo{(\abs{\varone}{\termone})\valone}{n+1}{\compone}$} \DP
      \\  \\
      \AXC{$\evalTo{\termone_1}{n}{\compone_1}$}
      \AXC{$ \hh $}
      \AXC{$\evalTo{\termone_k}{n}{\compone_k}$} \RL{$\ruleOp$}
      \TIC{$\evalTo{\op(\termone_1, \hh, \termone_k)}{n+1}{\op^{\monad}(\compone_1, \hh, \compone_k)}$}\DP          
  \end{tabular}
  \end{center}
  \end{mdframed}
  \caption{Big-step Semantics.}
  \label{bigStepSemantics}
\end{figure}

  The system in Figure \ref{bigStepSemantics} is `syntax directed', meaning that given a 
  judgment $\evalTo{\termone}{n}{X}$, the solely syntactic form of $\termone$ and the number $n$
  uniquely determine the last rule used to derive $\evalTo{\termone}{n}{X}$. As a consequence, 
  each judgment has a unique derivation. 

  \begin{lemma}[Determinacy]\label{determinacy}
    For any term $\termone$, if $\evalTo{\termone}{n}{X}$ and $\evalTo{\termone}{n}{Y}$, then 
    $X = Y$.
  \end{lemma}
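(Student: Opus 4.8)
The plan is to prove the claim by a straightforward induction on the index $n$. The crucial ingredient is the syntax-directedness remark just stated: for a fixed $\termone$ and a fixed $n$, the shape of $\termone$ together with whether $n$ is $0$ or a successor pins down the last rule of any derivation of $\evalTo{\termone}{n}{X}$. So the induction reduces to a routine case analysis, and no step poses a genuine difficulty.

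In the base case $n = 0$ the only applicable rule is $\ruleBot$, which forces $X = \bot = Y$. For the inductive step I would take $n = m+1$ and case on the syntactic form of $\termone$. If $\termone = \return{\valone}$, only $\ruleReturn$ applies and $X = \unit{\valone} = Y$. If $\termone = \valone\valtwo$ — a closed term, so $\valone$ must be an abstraction $\abstr{\varone}{\termone'}$ — only $\ruleApp$ applies, and both $X$ and $Y$ come from evaluating the single premise term $\substcomp{\termone'}{\varone}{\valtwo}$ at index $m$, so the induction hypothesis gives $X = Y$. If $\termone = \op(\termone_1,\dots,\termone_k)$, only $\ruleOp$ applies, $X = \opT(\compone_1,\dots,\compone_k)$ and $Y = \opT(\compone_1',\dots,\compone_k')$ with $\evalTo{\termone_i}{m}{\compone_i}$ and $\evalTo{\termone_i}{m}{\compone_i'}$, and the induction hypothesis yields $\compone_i = \compone_i'$ for each $i$, hence $X = Y$. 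Finally, if $\termone = \seq{\termone_1}{\termone_2}$, only $\ruleSeq$ applies, so $X = \compone_1 \bind (\fun{\valone}{\comptwo_\valone})$ and $Y = \compone_1' \bind (\fun{\valone}{\comptwo_\valone'})$, where $\evalTo{\termone_1}{m}{\compone_1}$, $\evalTo{\termone_1}{m}{\compone_1'}$, and $\evalTo{\substcomp{\termone_2}{\varone}{\valone}}{m}{\comptwo_\valone}$, $\evalTo{\substcomp{\termone_2}{\varone}{\valone}}{m}{\comptwo_\valone'}$ hold for every closed value $\valone$; the induction hypothesis gives $\compone_1 = \compone_1'$ and $\comptwo_\valone = \comptwo_\valone'$ for all $\valone$.

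The only point that calls for a word of care — and this is as close as the argument comes to an obstacle — is the $\ruleSeq$ case: the two second arguments of $\bind$ are the functions $\fun{\valone}{\comptwo_\valone}$ and $\fun{\valone}{\comptwo_\valone'}$, and I would note that these coincide \emph{as functions} precisely because they agree at every closed value $\valone$, which is exactly the induction hypothesis applied to each of the (infinitely many) premises of $\ruleSeq$. Once the two arguments of $\bind$ are shown equal, $X = Y$ follows and the induction is complete.
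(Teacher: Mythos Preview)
Your proof is correct and follows essentially the same approach as the paper: induction on $n$, with the base case handled by $\ruleBot$ and the successor case by a syntax-directed case analysis that matches the paper's case analysis on the last rule used. Your explicit remark that in the $\ruleSeq$ case the induction hypothesis is applied to each of the infinitely many second premises to obtain equality of the two functions passed to $\bind$ is a point the paper leaves implicit.
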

  \begin{proof}
    By induction on $n$. If $n = 0$, then both $\evalTo{\termone}{n}{X}$ and
    $\evalTo{\termone}{n}{Y}$ must be the conclusion of an instance of rule
    $\ruleBot$ (all other rules requires $n$ to be positive). As a
    consequence, we have $X = \bot = Y$. Suppose now $n = m +1$, for some $m
    \geq 0$. We proceed by case analysis on the last rule used to derive
    $\evalTo{\termone}{n}{X}$.
    \begin{varitemize}
      \item[\textbf{Case} $\ruleBot$.] This case is not possible, since $n > 0$.
      \item[\textbf{Case} $\ruleReturn$.] Then $\termone$ is of the form
        $\return{\valone}$, for some value $\valone$, $X$ is $\unit{\valone}$,
        and $\evalTo{\termone}{m+1}{Y}$ is
        $\evalTo{\return{\valone}}{m+1}{Y}$. The latter judgment must follow
        from an instance of rule $\ruleReturn$ as well, and thus $Y =
        \unit{\valone}$.
      \item[\textbf{Case} $\ruleApp$.] Then $\termone$ is of the form
        $(\abs{\varone}{\termtwo})\valone$ and we have
        $\evalTo{\substcomp{\termtwo}{\varone}{\valone}}{m}{X}$, for some term
        $\termtwo$. Therefore, the judgment $\evalTo{\termone}{m+1}{Y}$ is of
        the form $\evalTo{(\abs{\varone}{\termtwo})\valone}{m+1}{Y}$ implying
        it can only be the conclusion of an instance of the rule $\ruleApp$.
        Therefore, $\evalTo{\substcomp{\termtwo}{\varone}{\valone}}{m}{Y}$
        holds as well. We can apply the induction hypothesis on the latter and
        $\evalTo{\substcomp{\termtwo}{\varone}{\valone}}{m}{X}$ thus inferring
        $X = Y$.
      \item[\textbf{Case} $\ruleSeq$.] Then $\termone$ is of the form
        $\seq{\termtwo}{\termtwo'}$, $X$ is of the form $X' \bind
        (\fun{\valone}{X_\valone'})$, and both $\evalTo{\termtwo}{m}{X'}$ and
        $\evalTo{\substcomp{\termtwo'}{\varone}{\valone}}{m}{X_{\valone}'}$
        hold, for some terms $\termtwo, \termtwo'$ and elements $X',
        X_{\valone}'$ in $\monad \closedValues$. As a consequence, the
        judgment $\evalTo{\termone}{m+1}{Y}$ has the form
        $\evalTo{\seq{\termtwo}{\termtwo'}}{m+1}{Y}$, implying it must be the
        conclusion of an instance of the rule $\ruleSeq$ as well. Therefore,
        we have $\evalTo{\termtwo}{m}{Y'}$ and
        $\evalTo{\substcomp{\termtwo'}{\varone}{\valone}}{m}{Y_{\valone}'}$,
        and $Y = Y' \bind (\fun{\valone}{Y_\valone'})$, for some elements $Y',
        Y_{\valone}'$. We can then apply the induction hypothesis on
        $\evalTo{\termtwo}{m}{X'}$, $\evalTo{\termtwo}{m}{Y'}$ and
        $\evalTo{\substcomp{\termtwo'}{\varone}{\valone}}{m}{X_{\valone}'}$,
        $\evalTo{\substcomp{\termtwo'}{\varone}{\valone}}{m}{Y_{\valone}'}$,
        obtaining $X' = Y'$, $X_\valone' = Y_\valone'$ and thus $X' \bind
        (\fun{\valone}{X_\valone'}) = Y' \bind (\fun{\valone}{Y_\valone'})$.
      \item[\textbf{Case} $\ruleOp$.] Then $\termone$ is of the form
        $\op(\termone_1, \hh, \termone_k)$, $X$ is of
        the form $\opT(X_1, \hh, X_k)$, and the judgment
        $\evalTo{\termone_1}{m}{X_1}, \hh, \evalTo{\termone_k}{m}{X_k}$ hold,
        for some terms $\termone_1, \hh, \termone_k$ and elements $X_1,\hh,
        X_k$ in $\monad \closedValues$. As a consequence, the judgment
        $\evalTo{\termone}{m+1}{Y}$ has the form $\evalTo{\opT(\termone_1,
        \hh, \termone_k)}{m+1}{Y}$. The latter must be the conclusion of an
        instance of the rule $\ruleOp$, meaning that we have judgments
        $\evalTo{\termone_1}{m}{Y_1},
        \hh, \evalTo{\termone_k}{m}{Y_m}$ and $Y = \opT(Y_1, \hh, Y_m)$. 
        We can apply the induction hypothesis on the pair of judgments 
        $\evalTo{\termone_i}{m}{X_i}, \evalTo{\termone_i}{m}{Y_i}$, 
        for $i \in \{1, \hh, k\}$, inferring
        $X_i = Y_i$. We conclude $\opT(X_1, \hh, X_k) = \opT(Y_1, \hh, Y_k)$.
    \end{varitemize} 
  \end{proof}

\begin{lemma}
  For any term $\termone$ if $\evalTo{\termone}{n}{X}$ and $\evalTo{\termone}{n+N}{Y}$, then 
  $X \cpoleq Y$.
\end{lemma}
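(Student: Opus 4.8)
The plan is to prove the statement by induction on $n$, with the term $\termone$ and the natural number $N$ universally quantified, i.e.\ to show: for all $n$, for all $\termone$ and all $N \geq 0$, if $\evalTo{\termone}{n}{X}$ and $\evalTo{\termone}{n+N}{Y}$ then $X \cpoleq Y$. The only facts we need are (i) that the evaluation system of Figure~\ref{bigStepSemantics} is syntax-directed, so that the shape of $\termone$ together with whether the index is zero or positive determines the last rule, its immediate premises, and their indices (as recalled before Lemma~\ref{determinacy}); (ii) that $\cpoleq$ is reflexive and $\bot$ is its least element; and (iii) that the operations used to build the right-hand sides of the rules are monotone — namely $\bind$ is monotone in both arguments, since by Definition~\ref{ocppo-order} it is continuous in each argument, hence monotone in each argument, hence jointly monotone by Lemma~\ref{continuous-in-both-arguments}; and each $\opT$ is monotone, since $\monad\closedValues$ is a continuous $\signature$-algebra and therefore $\opT$ is continuous, hence monotone.

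For $n = 0$ the derivation of $\evalTo{\termone}{0}{X}$ can only use $\ruleBot$, so $X = \bot \cpoleq Y$. For $n = m+1$ we do a case analysis on the last rule deriving $\evalTo{\termone}{m+1}{X}$; since $m+1+N \geq 1$, the derivation of $\evalTo{\termone}{m+1+N}{Y}$ ends with the same rule on the same immediate subterms, with premise indices $m+N$. In case $\ruleReturn$, $\termone = \return{\valone}$ and $X = \unit{\valone} = Y$, so $X \cpoleq Y$ by reflexivity. In case $\ruleApp$, $\termone = (\abs{\varone}{\termtwo})\valone$ with $\evalTo{\substcomp{\termtwo}{\varone}{\valone}}{m}{X}$ and $\evalTo{\substcomp{\termtwo}{\varone}{\valone}}{m+N}{Y}$, so the induction hypothesis at index $m$ applied to $\substcomp{\termtwo}{\varone}{\valone}$ gives $X \cpoleq Y$. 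In case $\ruleSeq$, $\termone = (\seq{\termtwo}{\termtwo'})$, $X = X' \bind (\fun{\valone}{X'_\valone})$ and $Y = Y' \bind (\fun{\valone}{Y'_\valone})$, where $\evalTo{\termtwo}{m}{X'}$, $\evalTo{\termtwo}{m+N}{Y'}$ and $\evalTo{\substcomp{\termtwo'}{\varone}{\valone}}{m}{X'_\valone}$, $\evalTo{\substcomp{\termtwo'}{\varone}{\valone}}{m+N}{Y'_\valone}$ for every value $\valone$; the induction hypothesis gives $X' \cpoleq Y'$ and $X'_\valone \cpoleq Y'_\valone$ for all $\valone$, hence $(\fun{\valone}{X'_\valone}) \cpoleq (\fun{\valone}{Y'_\valone})$ pointwise in $\closedValues \to \monad\closedValues$, and monotonicity of $\bind$ yields $X \cpoleq Y$. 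In case $\ruleOp$, $\termone = \op(\termone_1, \hh, \termone_k)$, $X = \opT(X_1, \hh, X_k)$, $Y = \opT(Y_1, \hh, Y_k)$ with $\evalTo{\termone_i}{m}{X_i}$ and $\evalTo{\termone_i}{m+N}{Y_i}$; the induction hypothesis gives $X_i \cpoleq Y_i$ for each $i$, and monotonicity of $\opT$ gives $X \cpoleq Y$.

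I do not expect a real obstacle: the proof is a routine structural induction, its only content being that the semantic operations appearing in the rules ($\bind$ and the $\opT$) are monotone, together with $\bot$ being least and $\return{}$ being deterministic. The two points deserving a little care are, first, keeping $\termone$ and $N$ quantified in the induction hypothesis, since in the $\ruleApp$ and $\ruleSeq$ cases the recursion is on the substituted bodies $\substcomp{\termtwo}{\varone}{\valone}$ and $\substcomp{\termtwo'}{\varone}{\valone}$ rather than on $\termone$ itself; and second, invoking the right monotonicity facts, i.e.\ deriving joint monotonicity of $\bind$ from its separate continuity via Lemma~\ref{continuous-in-both-arguments}. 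As an alternative one could first prove the special case $N = 1$ by the same induction and then, using Lemma~\ref{determinacy} to name the intermediate values $X^{(n)}, X^{(n+1)}, \hh, X^{(n+N)}$, chain $X = X^{(n)} \cpoleq X^{(n+1)} \cpoleq \cdots \cpoleq X^{(n+N)} = Y$ by transitivity of $\cpoleq$; but treating general $N$ directly is no harder and avoids this extra step.
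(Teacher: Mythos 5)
Your proof is correct and matches the paper's argument, which likewise proceeds by the same induction/case analysis as the determinacy lemma, using monotonicity of $\bind$ and of the operations $\opT$ in the inductive cases (the paper only sketches this in one line). The care you take in quantifying $\termone$ and $N$ in the induction hypothesis and in deriving monotonicity from continuity is exactly what is implicitly needed there.
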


\begin{proof}
  The proof follows the same pattern of the previous one, where in the inductive case we use monotonicity of 
  both the bind operator and the operations $\opT$.
\end{proof}

\begin{corollary}\label{approximants-of-M-form-omega-chain}
  Let $\termone$ be a a term and $X_n$ be the (unique) element in $\monad \closedValues$ such that 
  $\evalTo{\termone}{n}{X}$. Then, the sequence $(X_n)_{n<\omega}$ forms an $\omega$-chain in 
  $\monad \closedValues$.
\end{corollary}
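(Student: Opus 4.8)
The plan is to assemble the statement from the three results that immediately precede it. First I would make the existence claim explicit, since the statement quietly presupposes it: for every term $\termone$ and every $n$ there is some $X_n \in \monad\closedValues$ with $\evalTo{\termone}{n}{X_n}$. This is a routine induction on $n$. For $n = 0$ the rule $\ruleBot$ applies to any $\termone$, giving $\evalTo{\termone}{0}{\bot}$. For $n = m+1$, the syntactic form of $\termone$ determines which of $\ruleReturn$, $\ruleApp$, $\ruleSeq$, $\ruleOp$ is applicable; its premises are judgments at index $m$ for subterms (or for a substitution instance of a subterm), which are derivable by the induction hypothesis, and the conclusion is then formed using $\unit{\cdot}$, $\bind$, or $\opT$ — all of which are total operations on $\monad\closedValues$. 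Uniqueness of such an $X_n$ is exactly Lemma~\ref{determinacy}, so the notation $X_n$ in the statement is well defined.

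Next, to see that $(X_n)_{n<\omega}$ is an $\omega$-chain, by Definition~\ref{omegaChain} it suffices to check $X_n \cpoleq X_{n+1}$ for every $n \geq 0$. This is precisely the preceding lemma (monotonicity of evaluation in the index) instantiated with $N = 1$: from $\evalTo{\termone}{n}{X_n}$ and $\evalTo{\termone}{n+1}{X_{n+1}}$ we obtain $X_n \cpoleq X_{n+1}$. That completes the argument, and since $\monad\closedValues$ is an \ocppo\ the least upper bound $\lub_{n<\omega} X_n$ — which will later serve to define $\sem{\termone}$ — is then guaranteed to exist.

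There is essentially no obstacle here: the real content has already been factored out into Lemma~\ref{determinacy} and the monotonicity lemma. The only points deserving a moment's care are (i) spelling out existence, which the corollary takes for granted, and (ii) noting that the monotonicity lemma genuinely compares consecutive approximants (take $N=1$), so that no facts about lubs are needed to establish the chain property itself.
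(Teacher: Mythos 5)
Your proposal is correct and is exactly the argument the paper intends: the corollary is stated without proof because it is meant to follow by combining Lemma~\ref{determinacy} (uniqueness) with the preceding monotonicity lemma at $N=1$ (chain property), which is what you do. Your extra care in spelling out existence of $X_n$ by induction on $n$ (using that closed values are abstractions, so one of the rules always applies) is a reasonable addition that the paper leaves implicit, not a deviation in approach.
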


A direct consequence of the above corollary is that we can define the
evaluation $\sem{\termone}$ of a term $\termone$ as
$$\sem{\termone} = \lub_{\evalTo{\termone}{n}{\compone}} \compone.$$
This allows us to explicitly
capture non-termination (which is usually defined coinductively). 
For instance, it is easy to show that for the
purely (i.e. having no side-effects) divergent program $\Omega$, 
defined as $(\abs{\varone}{\varone}{\varone})(\abs{\varone}{\varone}{\varone})$,
we have $\sem{\Omega} = \bot$. This style of operational semantics
\cite{DalLagoZorzi/TIA/2012,DalLagoSangiorgiAlberti/POPL/2014} is precisely the reason we require the
monad $\monad$ to carry an \ocppo\ structure. 
Modelling divergence in this way turned out
to be fundamental in e.g. probabilistic calculi 
\cite{DalLagoZorzi/TIA/2012}.

\begin{definition}
  Let $\termone$ be a term. Define the $n$-th approximation $\termone^{(n)} \in \monad \closedValues$ 
  of $\termone$ as follows:
  \begin{align*}
    \termone^{(0)}  
      & = \bot    \\
    (\return{\valone})^{(n+1)}
      & = \unit{\valone} \\
    ((\abs{\varone}{\termone})\valone)^{(n+1)}  
      &= (\substcomp{\termone}{\varone}{\valone})^{(n)} \\
    (\seq{\termone}{\termtwo})^{(n+1)}         
      &= \termone^{(n)} \bind (\fun{\valone}{(\substcomp{\termtwo}{\varone}{\valone})^{(n)}}) \\
    (\op(\termone_1, \hh, \termone_k))^{(n+1)} 
      &= \opT(\termone_1^{(n)}, \hh, \termone_k^{(n)})
  \end{align*}
\end{definition}

\begin{lemma}\label{approximantions-operational-semantics}
  For any term $\termone$ we have $\evalTo{\termone}{n}{\termone^{(n)}}$.
\end{lemma}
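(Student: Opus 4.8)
The plan is to prove the lemma by a straightforward induction on the index $n$, reading off each clause in the definition of $M^{(n)}$ against the correspondingly-named rule of Figure~\ref{bigStepSemantics}. Since the big-step system derives judgments only for $M \in \closedTerms$ (and similarly $(\cdot)^{(n)}$ is only defined there, the application clause covering just $(\abs{x}{N})\,V$), I would state the induction over closed terms; this is the point that makes the case analysis in the inductive step exhaustive, because a closed value is necessarily a $\lambda$-abstraction, so every grammar-level value application $V\,W$ occurring in a closed term is in fact a $\beta$-redex $(\abs{x}{N})\,W$. For the base case $n=0$ there is nothing to do: $M^{(0)} = \bot$ by definition, and rule $\ruleBot$ derives $\evalTo{M}{0}{\bot}$ for every $M$, with no appeal to the induction hypothesis.

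For the inductive step, assume the statement for $n=m$ and proceed by case analysis on the syntactic shape of the closed term $M$. If $M = \return{V}$, then $M^{(m+1)} = \unit{V}$ and rule $\ruleReturn$ gives $\evalTo{\return{V}}{m+1}{\unit{V}}$. If $M = (\abs{x}{N})\,V$, then $M^{(m+1)} = (\substcomp{N}{x}{V})^{(m)}$; applying the induction hypothesis to the closed term $\substcomp{N}{x}{V}$ yields $\evalTo{\substcomp{N}{x}{V}}{m}{(\substcomp{N}{x}{V})^{(m)}}$, and one instance of $\ruleApp$ concludes. If $M = \seq{N}{L}$, then $M^{(m+1)} = N^{(m)} \bind (\fun{V}{(\substcomp{L}{x}{V})^{(m)}})$; the induction hypothesis gives $\evalTo{N}{m}{N^{(m)}}$ and, for each closed value $V$, $\evalTo{\substcomp{L}{x}{V}}{m}{(\substcomp{L}{x}{V})^{(m)}}$, so rule $\ruleSeq$ (whose second premise is the family indexed by $V \in \closedValues$) produces exactly $\evalTo{\seq{N}{L}}{m+1}{N^{(m)} \bind (\fun{V}{(\substcomp{L}{x}{V})^{(m)}})}$. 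Finally, if $M = \op(M_1, \dots, M_k)$, then $M^{(m+1)} = \opT(M_1^{(m)}, \dots, M_k^{(m)})$; applying the induction hypothesis to each $M_i$ and then rule $\ruleOp$ closes the case.

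There is no genuine obstacle: the proof is a mechanical induction in which each defining clause of $(\cdot)^{(m+1)}$ is discharged by the rule of the same name, the induction hypothesis being invoked precisely on the immediate subterms (or their substitution instances) that occur in that clause. The only two points worth a line of care are the ones already mentioned — that closedness forces every internal application to be a $\beta$-redex, so the case split is complete, and that $\ruleSeq$ must be read as an infinitary rule with premises ranging over all of $\closedValues$, matching the $\bind$ appearing in the definition of $(\seq{N}{L})^{(m+1)}$. Combined with Lemma~\ref{determinacy}, this lemma also shows that $M^{(n)}$ is exactly the unique $X$ with $\evalTo{M}{n}{X}$, i.e.\ the $n$-th element of the $\omega$-chain of Corollary~\ref{approximants-of-M-form-omega-chain}, though that identification is not needed for the present statement.
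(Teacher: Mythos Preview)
Your proof is correct and follows the same overall scheme as the paper's: induction on $n$, with a case analysis on the syntactic shape of the closed term in the inductive step. The one noteworthy difference is in how the inductive cases are discharged. You directly \emph{construct} a derivation of $\evalTo{\termone}{m+1}{\termone^{(m+1)}}$ by applying the relevant rule to the judgments supplied by the induction hypothesis. The paper instead starts from an (existing) derivation $\evalTo{\termone}{m+1}{Z}$, applies the induction hypothesis to obtain derivations of the premises with the approximants as results, and then invokes Lemma~\ref{determinacy} (Determinacy) to identify those with the actual premises, whence $Z = \termone^{(m+1)}$. Your route is a little more economical, since it avoids the appeal to determinacy; the paper's detour, on the other hand, makes explicit the point you note in your closing remark, namely that $\termone^{(n)}$ is the unique element derived at level $n$.
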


\begin{proof}
  The proof is by induction on $n$. If $n = 0$, then we trivially have $\evalTo{\termone}{0}{\bot}$. 
  If $n = m+1$, for some $m \geq 0$, we proceed by case analysis on the last rule used to derive 
  the judgment $\evalTo{\termone}{m+1}{\termone^{(m+1)}}$. As a paradigmatic example, we show the case 
  for rule $\ruleSeq$. Suppose $\evalTo{\termone}{m+1}{\termone^{(m+1)}}$ is of the form 
  $\evalTo{(\seq{\termtwo}{\termtwo'})}{m+1}{Y \bind (\fun{\valone}{Y_{\valone}'})}$ and the judgments 
  $\evalTo{\termtwo}{m}{Y}$, $\evalTo{\substcomp{\termtwo'}{\varone}{\valone}}{m}{Y_\valone'}$ hold, 
  for some terms $\termtwo, \termtwo'$ and elements $Y, Y_\valone'$ in $\monad \closedValues$. We can 
  apply the induction hypothesis on $m$, obtaining $\evalTo{\termtwo}{m}{\termtwo^{(m)}}$ and 
  $\evalTo{\substcomp{\termtwo'}{\varone}{\valone}}{m}{(\substcomp{\termtwo'}{\varone}{\valone})^{(m)}}$. 
  By Lemma \ref{determinacy} we thus have $\termtwo^{(m)} = Y$ and 
  $(\substcomp{\termtwo'}{\varone}{\valone})^{(m)} = Y_\valone'$. We can conclude 
  $$Y \bind (\fun{\valone}{Y_{\valone}'}) = 
  \termtwo^{(m)} \bind (\fun{\valone}{(\substcomp{\termtwo'}{\varone}{\valone})^{(m)}}) = 
  (\seq{\termtwo}{\termtwo'})^{(m+1)}.$$
\end{proof}
Corollary \ref{approximants-of-M-form-omega-chain} and Lemma \ref{approximantions-operational-semantics} 
together imply that for any term $\termone$ we have the $\omega$-chain $(\termone^{(n)})_{n < \omega}$ 
of finite approximations of $\termone$. That means, in particular, that $\sem{\termone}$ is equal to 
$\lub_{n < \omega} \termone^{(n)}$. 
For instance, by previous lemma we have $\Omega^{(0)} = \bot$ and $\Omega^{(n+1)} = 
(\substcomp{(\varone \varone)}{\varone}{\abs{\varone}{\varone}{\varone}})^{(n)} = 
\Omega^{(n)}$. As a consequence, we have for any $n$, $\Omega^{(n)} = \bot$, and thus $\sem{\Omega} = \bot$.

Since both $\bind$ and $\opT$ are continuous, we can characterise operational 
semantics equationally.
\begin{lemma}\label{semEquations}
  The following equations hold:
  \begin{align*}
    \sem{\return{\valone}}        
      &= \unit{\valone};  \\
    \sem{(\abs{\varone}{\termone})\valone}    
      &= \sem{\substcomp{\termone}{\varone}{\valone}};   \\
    \sem{\seq{\termone}{\termtwo}}
      &= \sem{\termone} \bind (\fun{\valone}{\sem{\substcomp{\termtwo}{\varone}{\valone}}}); \\
    \sem{\op(\termone_1, \hh, \termone_n)} 
      &= \opT(\sem{\termone_1}, \hh, \sem{\termone_n}).
  \end{align*}
\end{lemma}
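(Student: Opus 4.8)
The plan is to work entirely from the closed-form characterisation
$\sem{\termone} = \lub_{n<\omega} \termone^{(n)}$, which follows by combining
Corollary~\ref{approximants-of-M-form-omega-chain} with
Lemma~\ref{approximantions-operational-semantics}, together with the recursive
definition of the approximants $\termone^{(n)}$ and the continuity hypotheses on
$\bind$ and $\opT$ coming from the \ocppo\ order on $\monad$. For each of the four
equations the recipe is the same: unfold $\sem{\cdot}$ as the lub of the chain of
approximants; discard the $n=0$ element of that chain using
Lemma~\ref{lub-from-bigger-index} (with $N=1$), so that the $n$-th element becomes
$\termone^{(n+1)}$; rewrite $\termone^{(n+1)}$ by its defining clause; and finally
commute the remaining lub past the monadic/algebraic operation appearing in that
clause.

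The first two cases are immediate. For $\return{\valone}$ the chain of approximants
is $\bot,\unit{\valone},\unit{\valone},\dots$, so dropping the first element yields a
stationary chain whose lub is $\unit{\valone}$. For $(\abs{\varone}{\termone})\valone$,
dropping the first element and using
$((\abs{\varone}{\termone})\valone)^{(n+1)} = (\substcomp{\termone}{\varone}{\valone})^{(n)}$
turns the lub into $\lub_{n<\omega}(\substcomp{\termone}{\varone}{\valone})^{(n)}$,
which is precisely $\sem{\substcomp{\termone}{\varone}{\valone}}$.

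For sequencing we obtain, after discarding the first element and unfolding,
$$\sem{\seq{\termone}{\termtwo}} = \lub_{n<\omega}\left(\termone^{(n)} \bind (\fun{\valone}{(\substcomp{\termtwo}{\varone}{\valone})^{(n)}})\right).$$
Here both $(\termone^{(n)})_{n<\omega}$ and $(\fun{\valone}{(\substcomp{\termtwo}{\varone}{\valone})^{(n)}})_{n<\omega}$ are $\omega$-chains — the first in $\monad\closedValues$, the second in $\closedValues \to \monad\closedValues$ — because approximants of a fixed term form an $\omega$-chain (Corollary~\ref{approximants-of-M-form-omega-chain}) and lubs in function spaces are computed pointwise. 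Since $\bind$ is continuous in both arguments, the diagonal lub equals $(\lub_{n<\omega}\termone^{(n)}) \bind (\lub_{n<\omega}\fun{\valone}{(\substcomp{\termtwo}{\varone}{\valone})^{(n)}})$; evaluating the second lub pointwise gives $\fun{\valone}{\sem{\substcomp{\termtwo}{\varone}{\valone}}}$, and the first is $\sem{\termone}$, so we conclude. The case $\op(\termone_1,\hh,\termone_k)$ is analogous: after dropping the first element we have $\lub_{n<\omega}\opT(\termone_1^{(n)},\hh,\termone_k^{(n)})$, and since $\opT$ is continuous and the product chain has pointwise lub (the $k$-ary instance of Lemma~\ref{continuous-in-both-arguments}), this is $\opT(\lub_{n<\omega}\termone_1^{(n)},\hh,\lub_{n<\omega}\termone_k^{(n)}) = \opT(\sem{\termone_1},\hh,\sem{\termone_k})$.

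The only genuinely delicate point is the commutation of the lub with $\bind$ in the
sequencing case: it simultaneously uses the joint continuity of $\bind$ — i.e.\ the
fact that $\lub_{n<\omega}(u_n \bind f_n) = (\lub_{n<\omega}u_n) \bind (\lub_{n<\omega}f_n)$
for $\omega$-chains $u_n$ and $f_n$, as recorded in the remark following
Definition~\ref{ocppo-order} — and the observation that lubs in $\closedValues \to \monad\closedValues$
are taken pointwise, so that $\lub_{n<\omega}(\fun{\valone}{(\substcomp{\termtwo}{\varone}{\valone})^{(n)}}) = \fun{\valone}{\lub_{n<\omega}(\substcomp{\termtwo}{\varone}{\valone})^{(n)}}$.
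Everything else is routine unfolding of definitions.
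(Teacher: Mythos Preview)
Your proof is correct and follows essentially the same route as the paper's: both drop the bottom approximant via Lemma~\ref{lub-from-bigger-index}, unfold the recursive definition of $\termone^{(n+1)}$, and then push the lub through $\bind$ (resp.\ $\opT$) using continuity, with the function-space lub computed pointwise. The only cosmetic difference is that you invoke the diagonal form $\lub_n(u_n \bind f_n) = (\lub_n u_n) \bind (\lub_n f_n)$ directly, whereas the paper writes the two separate continuity steps; since the remark after Definition~\ref{ocppo-order} records exactly this diagonal identity, the two presentations are equivalent.
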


\begin{proof}
  By Lemma \ref{lub-from-bigger-index} we have $\sem{\termone} = \lub_{n < \omega} \termone^{(n+1)}$, 
  meaning that we can freely ignore $\termone^{(0)}$ (which is $\bot$). We prove each equation 
  separately.
  \begin{varitemize}
    \item[\textbf{Case} 1.] We have:
      $$
      \sem{\return{\valone}}  = \lub_{n<\omega}(\return{\valone})^{(n+1)}  
                              = \lub_{n<\omega} \unit{\valone}             
                              = \unit{\valone}.
      $$
    \item[\textbf{Case} 2.] We have:
      $$
        \sem{(\abs{\varone}{\termone})\valone}  
            = \lub_{n<\omega} ((\abs{\varone}{\termone})\valone)^{(n+1)} 
            = \lub_{n<\omega} (\substcomp{\termone}{\varone}{\valone})^{(n)} 
            = \sem{\substcomp{\termone}{\varone}{\valone}}.
      $$
    \item[\textbf{Case} 3.] We have:
      \begin{align*}
        \sem{\seq{\termone}{\termtwo}}
          &= \lub_{n<\omega} (\seq{\termone}{\termtwo})^{(n+1)}   \\
          &= \lub_{n<\omega} (\termone^{(n)} \bind (\fun{\valone}{(\substcomp{\termtwo}{\varone}{\valone})^{(n)}})) \\
          &= \lub_{n<\omega} \termone^{(n)} \bind \lub_{n<\omega} (\fun{\valone}{(\substcomp{\termtwo}{\varone}{\valone})^{(n)}}) 
              \tag{Continuouity of $\bind$} \\
          &= \lub_{n<\omega} \termone^{(n)} \bind (\fun{\valone}{\lub_{n<\omega} (\substcomp{\termtwo}{\varone}{\valone})^{(n)}}) 
              \tag{Lub of functions} \\
          &= \sem{\termone} \bind (\fun{\valone}{\sem{\substcomp{\termtwo}{\varone}{\valone}}}).
      \end{align*}
    \item[\textbf{Case} 4.] We have:
      \begin{align*}
        \sem{\op(\termone_1, \hh, \termone_k)} 
          &= \lub_{n<\omega} (\op(\termone_1, \hh, \termone_k))^{(n+1)} \\
          &= \lub_{n<\omega} \opT(\termone_1^{(n)}, \hh, \termone_k^{(n)}) \\
          &= \opT(\lub_{n<\omega} \termone_1^{(n)}, \hh, \lub_{n<\omega} \termone_k^{(n)}) 
              \tag{Continuouity of $\opT$} \\
          &= \opT(\sem{\termone_1}, \hh, \sem{\termone_k}) 
      \end{align*}
  \end{varitemize}
\end{proof}
It is actually not hard to see that the function $\sem{\cdot}$ is the least
solution to the equations in Lemma \ref{semEquations}.

%%%%%%%%%%%%%%%%%%%%%%%%%%%%%%%%%
\section{On Relational Reasoning}
%%%%%%%%%%%%%%%%%%%%%%%%%%%%%%%%%

In this section we introduce the main machinery behind our soundness results.
The aim is to generalise notions and results from e.g.
\cite{Lassen/PhDThesis,Gordon/FOSSACS/01,Levy/ENTCS/2006} to take into account
generic effects. We will use results from the theory of coalgebras
\cite{Rutten/TCS/2000} to come up with a general notion of applicative
(bi)similarity parametric over a notion of observation, given through the
concept of relator.

%%%%%%%%%%%%%%%%%%%%%
\subsection{Relators}
%%%%%%%%%%%%%%%%%%%%%

The concept of relator \cite{Thijs/PhDThesis/1996,Levy/FOSSACS/2011} is an
abstraction meant to capture the possible ways a relation on a set $X$ can be
turned into a relation on $T X$. Recall that for an endofunctor $\functor :
\catone \to \catone$, an $\functor$-coalgebra \cite{Rutten/TCS/2000} consists
of an object $X$ of $\catone$ together with a morphism $\coalgX_X : X \to
\functor X$. As usual, we are just concerned with the case in which $\catone$
is $\set$.

\begin{definition}
  Let $\functor$ be an endofunctor on 
  $\set$, and $X,Y$ be sets. A relator $\relator$ for
  $\functor$ is a map that associates to each relation
  $\relone \subseteq X \times Y$ a relation 
  $\relator \relone \subseteq \functor X \times \functor Y$ 
  such that
  \begin{align*}
    =_{\functor X}                        &\subseteq  \Gamma(=_X)
                                            \tag{Rel-1} \label{rel-1}  \\
    \Gamma \relthree \circ \Gamma \relone &\subseteq \Gamma (\relthree \circ \relone) 
                                            \tag{Rel-2} \label{rel-2} \\
    \Gamma ((f \times g)^{-1} \relone)    &= (\functor f \times \functor g)^{-1} \relator \relone 
                                            \tag{Rel-3} \label{rel-3} \\
    \relone \subseteq \relthree           &\implies \Gamma \relone \subseteq \Gamma \relthree 
                                            \tag{Rel-4} \label{rel-4} 
  \end{align*}
  where for $f : Z \to X$, $g : W \to Y$ we have $(f \times g)^{-1}\relone =
  \{(z,w) \mid f(z)\ \relone\ g(w)\}$, and $=_X$ denotes the identity relation
  on $X$. A relator $\relator$ is \emph{conversive} if $\relator(\relone^c) =
  (\relator \relone)^c$, where $\relone^c$ denotes the converse of $\relone$.
\end{definition}

\begin{example}\label{relators}
  For each of the monads introduced in previous sections, we give
  some examples of relators. Most of these relators coincide 
  with the relation lifting of their associated functor. 
  It is in fact well known that for any weak-pullback 
  preserving functor, its relation lifting is a relator \cite{/Kurz/JLAMP/16}.
  We use the notation $\relatorSim$ for a relator
  aimed to capture the structure of a simulation relation, and
  $\relatorBisim$ for a relator aimed to capture the structure of a
  bisimulation relation. This distinction is not formal, 
  and only makes sense in the context of concrete examples: its purpose is 
  to stress that from formal view point, both concrete notions of similarity 
  and bisimilarity are modeled as forms of $\relator$-similarity (for a suitable 
  relator $\relator$).
  Let $\relone\subseteq X \times Y$:
  \begin{varitemize}
  \item 
    For the partiality monad $\monad X = X_\bot$ define the relators
    $\relatorSim_\bot, \relatorBisim_\bot$ by
    \begin{align*}
      u\ \relatorSim_{\bot}\relone\ v & \text{ iff } u = \inl(x) \implies v =
      \inl(y) \wedge x\ \relone\ y;   \\ u\ \relatorBisim_{\bot}\relone\ v
      &\text{ iff }           u = \inl(x)  \implies v = \inl(y) \wedge x\
      \relone\ y, \\ &\phantom{\text{ iff }} v = \inl(y) \implies u = \inl(x)
      \wedge x\ \relone\ y.
    \end{align*}
    Note that $u = \inl(x)$ means, in particular, $u \neq \inr(\bot)$. Thus,
    for instance, $u$ and $v$ are $\relator_\bot \relone$ related if whenever
    $u$ converges, so does $v$ and the values to which $u,v$ converge are
    $\relone$-related. The relator $\relatorBisim_\bot$ is conversive.
  \item 
    For the nondeterministic powerset monad $\powersetmonad$ define relators
    $\relatorSim_{\powersetmonad}$ and
    $\relatorBisim_{\powersetmonad}$ by
    \begin{align*}
      u\ \relatorSim_{\powersetmonad}\relone \ v  
      &\text{ iff } \forall x \in u.\ \exists y \in v.\ x\ \relone\ y;  \\
      u\ \relatorBisim_{\powersetmonad}\relone \ v  
      & \text{ iff } \forall x \in u.\ \exists y \in v.\ x\ \relone\ y, \\
      & \phantom{\text{ iff }} \forall y \in v.\ \exists x \in u.\ x\ \relone\ y.
    \end{align*}
    The relator $\relatorBisim_{\powersetmonad}$ is conversive.
  \item 
    For the probabilistic subdistributions monad $\distribution$ define relators
    $\relatorSim_{\distribution}$ and $\relatorBisim_{\distribution}$
    by
    \begin{align*}
      \distone\ \relatorSim_{\distribution}\relone \ \disttwo   
      & \text{ iff } \forall U \subseteq X.\ \distone(U) \leq \disttwo(\relone(U)); \\
      \distone\ \relatorBisim_{\distribution}\relone \ \disttwo   
      & \text{ iff }  \distone\ \relatorSim_{\distribution}\relone \ \disttwo\ \wedge\  
        \disttwo\ \relatorSim_{\distribution}\relone^c \ \distone;
    \end{align*}
    where $\relone(U) = \{y \in Y \mid \exists x \in U.\ x\ \relone\ y\}$ and 
    $\distone(U) = \sum_{x \in U} \distone(x)$. The relator $\relatorBisim_{\distribution}$ 
    is conversive.
  \item For the exception monad $\monad X = X + E$ define the relators 
    $\relatorSim_{\mathcal{E}}$ and 
    $\relatorBisim_{\mathcal{E}}$ by (letters $e, e'$ range over $E$)
    \begin{align*}
      u\ \relatorSim_{\mathcal{E}}\relone\ v   
        & \text{ iff }            u = \inr(e) \implies v = \inr(e') \wedge e = e', \\
        & \phantom{\text{ iff }}  u = \inl(x) \implies v = \inl(y) \wedge x\ \relone\ y;  \\
      u\ \relatorBisim_{\mathcal{E}}\relone\ v  
        & \text{ iff } u\ \relatorSim_{\mathcal{E}}\relone\ v,  \\
        & \text{\phantom{ iff }} v = \inr(e') \implies u = \inr(e) \wedge e = e', \\ 
        & \text{\phantom{ iff }} v = \inl(y) \implies u = \inl(x) \wedge x\ \relone\ y.
    \end{align*}
    The relator $\relatorBisim_{\mathcal{E}}$ is conversive.
  \item For the partiality and exception monad (i.e. the exception monad with
     divergence) $\monad X = (X + E)_\bot$ we can define relators simply
     composing relators for the partiality monad with relators for the
     exceptions monads (see Lemma \ref{algebra-of-relators}). Notably, define
     $\relatorSim_{\mathcal{E}_\bot}$ as $\relatorSim_{\bot} \circ
     \relatorSim_{\mathcal{E}}$ and $\relatorBisim_{\mathcal{E}_\bot}$ as
     $\relatorBisim_{\bot} \circ \relatorBisim_{\mathcal{E}}$. The relator
     $\relatorBisim_{\mathcal{E}_\bot}$ is conversive.
  \item For the state monad $\monad X = (X \times S)^S$ define the relator
    $\relatorBisim_{\mathcal{S}}$ by
    \begin{align*}
      f\ \relatorBisim_{\mathcal{S}}\relone \ g 
      & \text{ iff } \forall s \in S.\ s_1 = s_2 \text{ and } x_1\ \relone\ x_2, \\
      & \phantom{\text{ iff }} \text{where } (x_1, s_1) = f(s) \text{ and } (x_2, s_2) = g(s). 
    \end{align*}
    The relator $\relatorBisim_{\mathcal{S}}$ is conversive.
  \item For the output monad $\monad X = \stream{U} \times X_\bot$ we can define relators 
    based on the order defined in Example \ref{outputDoesNotWork}. 
    \begin{align*}
      (u, x)\ \relatorSim_{\mathcal{U}}\relone\ (w,y)   
        & \text{ iff }        (x = \inr(\bot) \wedge u \cpoleq w) \vee 
                              (x = \inl(x') \wedge y = \inl(y') \wedge x' \relone\ y'); \\
      (u,x)\ \relatorBisim_{\mathcal{U}}\relone\ (w,y)  
        & \text{ iff } (u,x)\ \relatorSim_{\mathcal{U}}\relone\ (w,y) \wedge 
                        (w,y)\ \relatorSim_{\mathcal{U}}\relone^c\ (u,x).
    \end{align*}
    The relator $\relatorBisim_{\mathcal{U}}$ is conversive.
  \end{varitemize}
\end{example}
Checking that the above are indeed relators is a tedious but easy exercise. It
is useful to know that the collection of relators is closed under certain
operations (see \cite{Levy/FOSSACS/2011} for proofs).
  \begin{lemma}[Algebra of Relators] \label{algebra-of-relators} 
    Let $\functor, G$ be endofunctors on $\set$. Then 
    \begin{varenumerate}
      \item Let $(\relator_i)_{i \in I}$ be a family of relators for
          $\functor$. The intersection $\bigcap_{i\in I} \relator_i$ defined
          by $(\bigcap_{i\in I} \relator_i)\relone = \bigcap_{i\in I}
          \relator_i(\relone)$ is a relator for $\functor$.
      \item The converse $\relator^c$ of $\relator$ defined by 
          $\relator^c(\relone) = (\relator\relone^c)^c$ is 
          a relator for $\functor$. We have the equality $(\relator^c)^c =
          \relator$ and, additionally, $\relator^c = \relator$ if $\relator$
          is conversive.
      \item  Let $\relator, \relator'$ be relators for $\functor, G$, respectively. Then 
          $\relator' \circ \relator$ is a relator for $G \circ \functor$.
          Moreover, if both $\relator$ and $\relator'$ are conversive, then so
          is $\relator' \circ \relator$.
      \item Given a relator $\relator$ for $\functor$, $\relator \cap
          \relator^c$ is the greatest (wrt the pointwise order) conversive
          relator for $\functor$ contained in $\relator$.
    \end{varenumerate}
\end{lemma}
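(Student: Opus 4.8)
The plan is, for each of the four constructions, to verify the relator axioms \eqref{rel-1}--\eqref{rel-4} by reducing each to the corresponding axiom(s) for the component relator(s) together with a handful of elementary identities of relation algebra: converse reverses composition, $(\relthree\circ\relone)^c = \relone^c\circ\relthree^c$; converse distributes over intersections, $(\bigcap_i \relone_i)^c = \bigcap_i \relone_i^c$; inverse image commutes with intersections, $h^{-1}(\bigcap_i \relone_i) = \bigcap_i h^{-1}(\relone_i)$; the reindexing identity $((f\times g)^{-1}\relone)^c = (g\times f)^{-1}(\relone^c)$; and the fact that $\relone\mapsto\relone^c$, hence $\relator\mapsto\relator^c$, is monotone and involutive. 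I would record these at the outset and then treat the four items in turn.

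For the intersection $\bigcap_i\relator_i$, axioms \eqref{rel-1} and \eqref{rel-4} are immediate since each $\relator_i$ satisfies them and intersection respects inclusions; \eqref{rel-2} holds because for every $i$ we have $(\bigcap_j\relator_j)\relthree\circ(\bigcap_j\relator_j)\relone \subseteq \relator_i\relthree\circ\relator_i\relone \subseteq \relator_i(\relthree\circ\relone)$, so the left-hand side lies in the intersection; \eqref{rel-3} uses that inverse image commutes with intersections. The composite $\relator'\circ\relator$ (a map into $G\functor X\times G\functor Y$) is handled by chaining: \eqref{rel-1} is $=_{G\functor X}\subseteq\relator'(=_{\functor X})\subseteq\relator'(\relator(=_X))$, the second step by \eqref{rel-4} for $\relator'$ applied to \eqref{rel-1} for $\relator$; \eqref{rel-2} is $\relator'(\relator\relthree)\circ\relator'(\relator\relone)\subseteq\relator'(\relator\relthree\circ\relator\relone)\subseteq\relator'(\relator(\relthree\circ\relone))$, again passing an inclusion through $\relator'$ via \eqref{rel-4}; \eqref{rel-3} composes \eqref{rel-3} for $\relator$ with \eqref{rel-3} for $\relator'$, using $G(\functor f)=(G\functor)f$; \eqref{rel-4} composes the two monotonicities; and if both factors are conversive then $(\relator'\circ\relator)(\relone^c)=\relator'((\relator\relone)^c)=(\relator'(\relator\relone))^c$.

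For $\relator^c(\relone)=(\relator(\relone^c))^c$: \eqref{rel-1} holds since $\relator^c(=_X)=(\relator(=_X))^c\supseteq(=_{\functor X})^c={=_{\functor X}}$; \eqref{rel-2} since $\relator^c\relthree\circ\relator^c\relone=(\relator\relone^c\circ\relator\relthree^c)^c\subseteq(\relator(\relone^c\circ\relthree^c))^c=\relator^c(\relthree\circ\relone)$, using $\relone^c\circ\relthree^c=(\relthree\circ\relone)^c$; \eqref{rel-3} by unfolding $((f\times g)^{-1}\relone)^c=(g\times f)^{-1}(\relone^c)$ and applying \eqref{rel-3} for $\relator$; \eqref{rel-4} by monotonicity of converse; and $(\relator^c)^c=\relator$, together with $\relator^c=\relator$ for conversive $\relator$, follows from $(\relone^c)^c=\relone$. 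Then $\relator\cap\relator^c$ is a relator by the first two items and lies below $\relator$; it is conversive because $(\relator\cap\relator^c)^c=\relator^c\cap(\relator^c)^c=\relator^c\cap\relator=\relator\cap\relator^c$; and it is the greatest such, since any conversive $\relator'\subseteq\relator$ satisfies $\relator'=(\relator')^c\subseteq\relator^c$ by monotonicity of converse, hence $\relator'\subseteq\relator\cap\relator^c$.

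I expect no genuine obstacle: the lemma is pure bookkeeping. The only points requiring care are the order of composition in \eqref{rel-2}, where one must apply $(\relthree\circ\relone)^c=\relone^c\circ\relthree^c$ with the two factors swapped, and the reindexing/converse interaction in \eqref{rel-3}, where the pair $(f\times g)$ turns into $(g\times f)$ upon taking converses.
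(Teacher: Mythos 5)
Your verification is correct: all four items are checked against the axioms \eqref{rel-1}--\eqref{rel-4} exactly as one would expect, and the two delicate points you flag (the swap $\relone^c\circ\relthree^c=(\relthree\circ\relone)^c$ in the \eqref{rel-2} case for $\relator^c$, and the passage from $(f\times g)$ to $(g\times f)$ under converse in \eqref{rel-3}) are handled properly. The paper itself omits the proof, deferring to the cited reference of Levy, so your argument simply supplies the routine relation-algebraic verification that the paper takes for granted.
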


We can now give a general notion of simulation with respect to a given relator.

%%%%%%%%%%%%%%%%%%%%%%%%%%%%%%%%%%%%%%%%%%
\subsection{Bisimulation, in the Abstract}
%%%%%%%%%%%%%%%%%%%%%%%%%%%%%%%%%%%%%%%%%%
A relator $\relator$ for a monad $\monad$ expresses the observable part of the
side-effects encoded by $\monad$. Its abstract nature allows to give abstract
definitions of simulation and bisimulation parametric in the notion of
observation given by $\relator$.
\begin{definition}\label{similarity}
  Let $\coalgX_X : X \to \functor X, \coalgX_Y : Y \to \functor Y$ be
  $\functor$-coalgebras:
  \begin{varenumerate}
    \item A $\relator$-simulation is a relation $\relone \subseteq X \times Y$
      such that $$x \relone y \implies \coalgX_X(x)\ \relator \relone\
      \coalgX_Y(y).$$
    \item $\relator$-similarity $\similar_{X,Y}^\relator$ is the largest 
      $\relator$-simulation.
  \end{varenumerate}
\end{definition}

\begin{example}
  It is immediate to see that the corresponding notions of
  $\relator$-similarity for the (bi)simulation relators of Example
  \ref{relators} coincide with widely used notions of (bi)similarity.
\end{example}
As usual, the notion of similiarity can be characterised coinductively as the
greatest fixed point of a suitable functional.

\begin{definition}
  Let $\coalgX_X: X \to \functor X$, $\coalgX_Y : Y \to \functor Y$ be
  $\functor$-coalgebras. Define the functional $\functional_{X,Y}^\relator :
  \rel{X}{Y} \to \rel{X}{Y}$ by $$\functional_{X,Y}^\relator (\relone) =
  (\coalgX_X \times \coalgX_Y)^{-1}(\relator \relone).$$
\end{definition}

When clear from the context, we will write $\functional_\relator$ and $\similar_\relator$ 
in place of $\functional_{X,Y}^\relator$ and $\similar_{X,Y}^\relator$.

\begin{lemma} The following hold:
  \begin{varenumerate}
    \item The functional $\functional_\relator$ is monotone, and thus has a
      greatest fixed point $\gfp{\functional_\relator}$.
    \item A relation $\relone$ is a $\relator$-simulation iff it is a post 
        fixed-point of $\functional_\relator$.
        Therefore, $\relator$-similarity coincides with
        $\gfp{\functional_\relator}$.
  \end{varenumerate}
\end{lemma}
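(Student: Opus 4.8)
The plan is to treat $\rel{X}{Y}$, the set of relations between $X$ and $Y$ ordered by inclusion, as a complete lattice, and to obtain everything from the Knaster--Tarski fixed point theorem; the whole lemma then reduces to checking monotonicity of $\functional_\relator$ and re-reading the definition of a $\relator$-simulation.

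First I would prove (1). Suppose $\relone \subseteq \relthree$. By property (Rel-4) of the relator, $\relator \relone \subseteq \relator \relthree$, and since taking preimage along the fixed function $\coalgX_X \times \coalgX_Y$ preserves inclusions, $\functional_\relator(\relone) = (\coalgX_X \times \coalgX_Y)^{-1}(\relator \relone) \subseteq (\coalgX_X \times \coalgX_Y)^{-1}(\relator \relthree) = \functional_\relator(\relthree)$. Thus $\functional_\relator$ is a monotone endomap on the complete lattice $\rel{X}{Y}$, so by Knaster--Tarski it has a greatest fixed point $\gfp{\functional_\relator}$, which moreover equals the union of all its post-fixed points $\{\relone \mid \relone \subseteq \functional_\relator(\relone)\}$.

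Next I would prove (2) by unwinding definitions. For any relation $\relone$ and any pair $(x,y)$, by the definition of the preimage along $\coalgX_X \times \coalgX_Y$ given in the preamble we have $(x,y) \in \functional_\relator(\relone)$ iff $\coalgX_X(x)\ \relator \relone\ \coalgX_Y(y)$. Hence the implication $x \relone y \implies \coalgX_X(x)\ \relator \relone\ \coalgX_Y(y)$ holds for all $x,y$ exactly when $\relone \subseteq \functional_\relator(\relone)$; that is, $\relone$ is a $\relator$-simulation iff it is a post-fixed point of $\functional_\relator$. Combining with (1): $\relator$-similarity $\similar_\relator$, being by Definition \ref{similarity} the largest $\relator$-simulation, is the largest post-fixed point of $\functional_\relator$, which by Knaster--Tarski is precisely $\gfp{\functional_\relator}$ (and in particular is itself a fixed point, so the largest $\relator$-simulation is again a $\relator$-simulation, confirming well-definedness of $\similar_\relator$).

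There is essentially no hard part here: the only thing to be careful about is the bookkeeping identification $(\coalgX_X \times \coalgX_Y)^{-1}(\relator \relone) = \{(x,y) \mid \coalgX_X(x)\ \relator\relone\ \coalgX_Y(y)\}$, which is just the notation fixed in the definition of a relator, together with the decision to invoke Knaster--Tarski on $(\rel{X}{Y}, \subseteq)$ rather than construct the greatest fixed point by hand (the latter is also immediate: it is the union of all post-fixed points, and property (Rel-4) ensures this union is again a post-fixed point).
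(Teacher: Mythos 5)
Your proposal is correct and follows exactly the paper's route: monotonicity of $\functional_\relator$ from (Rel-4) (the paper says it ``directly follows from monotonicity of $\relator$''), Knaster--Tarski on the complete lattice $\rel{X}{Y}$, and the observation that being a $\relator$-simulation is literally being a post-fixed point, whence similarity is $\gfp{\functional_\relator}$. You have merely spelled out the ``straightforward calculation'' the paper leaves implicit; there is nothing to add or correct.
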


\begin{proof}
  Monotonicity of $\functional_\relator$ directly follows from monotonicity of
  $\relator$, and thus it has greatest fixed point by Knaster-Tarski Theorem
  (recall that the set $\rel{X}{Y}$ carries a complete lattice structure under
  the inclusion order). A straightforward calculation shows that a relation
  $\relone$ is a $\relator$-simulation iff it is a post fixed-point of
  $\functional_\relator$. Together with point $1$, the latter implies
  $\gfp{\functional_\relator} = \similar_\relator$.
\end{proof}

\begin{proposition}\label{similarity-is-a-preorder}
  Let $\coalgX_X : X \to \functor X$ be an $\functor$-coalgebra.
  \begin{varenumerate}
    \item $\relator$-similarity is a preorder.
    \item If $\relator$ is conversive, then $\relator$-similarity is an equivalence relation.
  \end{varenumerate}
\end{proposition}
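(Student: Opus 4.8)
The plan is to exploit the coinductive characterisation established just above: a relation $\relone \subseteq X \times X$ is contained in $\relator$-similarity $\similar_\relator$ as soon as it is a $\relator$-simulation, i.e.\ a post-fixed point of $\functional_\relator$. Accordingly, I will prove reflexivity, transitivity and (under conversivity) symmetry of $\similar_\relator$ by exhibiting, respectively, the identity relation $=_X$, the relational composite $\similar_\relator \circ \similar_\relator$, and the converse $\similar_\relator^c$ as $\relator$-simulations, invoking in each case one of the relator axioms.

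For reflexivity I would check that $=_X$ is a $\relator$-simulation: if $x =_X x$ then $\coalgX_X(x) =_{\functor X} \coalgX_X(x)$, and axiom \eqref{rel-1} gives $=_{\functor X} \subseteq \relator(=_X)$, hence $\coalgX_X(x)\ \relator(=_X)\ \coalgX_X(x)$. Thus $=_X \subseteq \similar_\relator$, which is reflexivity.

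For transitivity I would show $\similar_\relator \circ \similar_\relator$ is a $\relator$-simulation. If $x \similar_\relator y$ and $y \similar_\relator z$ then $\coalgX_X(x)\ \relator\similar_\relator\ \coalgX_X(y)$ and $\coalgX_X(y)\ \relator\similar_\relator\ \coalgX_X(z)$, so the pair $(\coalgX_X(x), \coalgX_X(z))$ lies in $\relator\similar_\relator \circ \relator\similar_\relator$, which by axiom \eqref{rel-2} is contained in $\relator(\similar_\relator \circ \similar_\relator)$. Hence $\similar_\relator \circ \similar_\relator$ is a $\relator$-simulation, so it is contained in $\similar_\relator$, i.e.\ $\similar_\relator$ is transitive; together with reflexivity this settles part~1. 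For part~2, assuming $\relator$ conversive, I would show $\similar_\relator^c$ is a $\relator$-simulation: if $x\ \similar_\relator^c\ y$ then $y \similar_\relator x$, whence $(\coalgX_X(y), \coalgX_X(x)) \in \relator\similar_\relator$, and therefore $(\coalgX_X(x), \coalgX_X(y)) \in (\relator\similar_\relator)^c = \relator(\similar_\relator^c)$ by conversivity. So $\similar_\relator^c \subseteq \similar_\relator$; taking converses of both sides yields $\similar_\relator \subseteq \similar_\relator^c$, hence $\similar_\relator = \similar_\relator^c$, i.e.\ $\similar_\relator$ is symmetric, and with part~1 it is an equivalence relation.

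I do not expect a genuine obstacle: the relator axioms are tailored precisely so that these three closure properties go through. The only points requiring a little care are keeping the relational composites and converses straight, applying \eqref{rel-2} in the correct direction (the inclusion $\relator\relthree \circ \relator\relone \subseteq \relator(\relthree \circ \relone)$, not its reverse), and using the conversivity equation $\relator(\relone^c) = (\relator\relone)^c$ with $\relone = \similar_\relator$.
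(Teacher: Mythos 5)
Your proof is correct and takes essentially the same route as the paper: reflexivity by exhibiting $=_X$ as a $\relator$-simulation via \eqref{rel-1}, transitivity by showing a composite of simulations is a simulation via \eqref{rel-2}, and symmetry (under conversivity) by showing the converse of a simulation is a simulation. The only cosmetic difference is that you compose $\similar_\relator$ with itself (legitimate, since similarity is itself the largest simulation), whereas the paper composes two arbitrary simulations witnessing $x \similar_\relator y$ and $y \similar_\relator z$; the substance is identical.
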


  \begin{proof}
    Let $\coalgX_X : X \to \functor X$ be an $\functor$-coalgebra.
    \begin{varenumerate}
      \item We prove that $\similar_\relator$ is reflexive by coinduction,
        showing that the identity relation $=_X$ on $X$ is a
        $\relator$-simulation. In fact, from $x =_X x$ we obtain $\coalgX_X(x)
        =_{\functor X} \coalgX_X(x)$, and thus we can conclude
        $\coalgX_{X}(x)\ \relator(=_X)\ \coalgX_{X}(x)$, by \eqref{rel-1}. \\
        We now show that $\similar_\relator$ is transitive. Suppose to have $x
        \similar_\relator y \similar_\relator z$. By very definition of
        $\similar_\relator$ there exist $\relator$-simulations $\relone,
        \relthree$ such that $x\ \relone\ y$ and $y\ \relthree\ z$, and thus
        $\coalgX_{X}(x)\ (\relator \relthree \circ \relator \relone)\
        \coalgX_{X}(z)$. Thanks to \eqref{rel-4} we can conclude that
        $\relthree \circ \relone$ is a $\relator$-simulation as well, meaning,
        in particular, that $x \similar_\relator z$
      \item We simply observe that if $\relone$ is a $\relator$-simulation,
        then so is $\conversive{\relone}$, for a conversive relator
        $\relator$.
    \end{varenumerate}
  \end{proof}

Since $\monad$ is a monad we consider relators that properly interact with the
monadic structure of $\monad$, which are also known as \emph{lax extensions}
for $\monad$~\cite{Barr/LMM/1970}.

\begin{definition}\label{laxExtension} 
  Let $\monad$ be a monad, $X,X',Y,Y'$ be
  sets, $f : X \to \monad X', g : Y \to \monad Y'$ be functions, and $\relone
  \subseteq X \times Y, \relthree \subseteq X' \times Y'$ be relations. We say
  that $\relator$ is a relator for $\monad$ if it is a relator for $\monad$
  regarded as a functor, and
  \begin{varitemize}
    \item $x\ \relone\ y \implies \uniT_X(x)\ \Gamma \relone\ \uniT_Y(y)$;
    \item $u\ \relator \relone\ v \implies (u \bind f)\ \relator \relthree\ (v \bind g)$, 
        whenever $x\ \relone\ y \implies f(x)\ \relator \relthree\ g(y)$.
  \end{varitemize}
\end{definition}

\begin{remark}\label{rem:laxExtension}
  Definition \ref{laxExtension} can be more compactly expressed using Kleisli star, thus requiring
  that
  \begin{align}
  &\relone \subseteq (\uniT_X \times \uniT_Y)^{-1}(\relator \relthree) \tag{Lax-Unit} \label{Lax-Unit} \\
  &\relone \subseteq (f, g)^{-1}(\relator \relthree) \Longrightarrow
  \relator \relone \subseteq (\kleisli{f} \times \kleisli{g})^{-1}(\relator \relthree) 
    \tag{Lax-Bind} \label{Lax-Bind}
  \end{align}
  or diagramatically 

  \begin{center}
  $
  \xymatrix{
      X       \ar[r]^{\relone}|{\scriptscriptstyle{/}}  
            \ar[d]_{\uniT_X}  &  
      Y     \ar[d]^{\uniT_Y}
      \\
      \monad X  \ar[r]_{\relator \relone}|{\scriptscriptstyle{/}}   &  
      \monad Y }
   $ \\
   $\vcenter{
    \xymatrix{
      X       \ar[r]^{\relone}|{\scriptscriptstyle{/}}  
            \ar[d]_{f}  &  
      Y     \ar[d]^{g}
      \\
      \monad X'   \ar[r]_{\relator \relthree}|{\scriptscriptstyle{/}}   &  
      \monad Y' } }
      \implies 
    \vcenter{
    \xymatrix{
      \monad X  \ar[r]^{\relator \relone}|{\scriptscriptstyle{/}}  
            \ar[d]_{\kleisli{f}}  &  
      \monad Y  \ar[d]^{\kleisli{g}}
      \\
      \monad X  \ar[r]_{\relator \relthree}|{\scriptscriptstyle{/}}   &  
      \monad Y }}
    $
    \end{center}
  
  where we write $\relone : X \nrightarrow Y$ for $\relone \subseteq X \times Y$.
\end{remark}

\begin{example}
   All relators of the form $\relator_{\monad}$ in Example \ref{relators} 
   are relators for $\monad$. Proving that is quite standard, with the exception of the probabilistic 
   case where the proof essentially relies on the Max Flow Min Cut Theorem \cite{Schrijver/Book/1986}. 
\end{example}

\begin{definition}\label{inductive}
  Let $\monad$ come with an \ocppo\ order $\cpoleq$.
  We say that $\relator \relone$ is \emph{inductive} 
  if for any $\omega$-chain $(u_n)_{n < \omega}$ in $\monad X$, we have:
  \begin{align}
    & \bot\ \relator \relone\ u \tag{$\omega$-comp 1} \label{omegaComp1} \\
    & (\forall n.\ u_n\ \relator \relone\ v)  \implies \lub_n u_n\ \relator \relone\ v.  \tag{$\omega$-comp 2} \label{omegaComp2}  
  \end{align}
  We say that $\relator$ \emph{respects} $\signature$ if 
  \begin{align}
     (\forall k.\ u_k\ \relator \relone\ v_k) \implies \op(u_1, \hh, u_n)\ \relator \relone\ \op(v_1, \hh, v_n)
    \tag{$\signature$-comp} \label{sigmaComp}
  \end{align}
  for any $\op \in \signature$, where $k \in \{1, \hh, \alpha(\op)\}$.
\end{definition}

\begin{remark}
  For a monad $\monad$ carrying a continuous $\signature$-algebra structure
  and a function $f: X \to \monad Y$, we required $\kleisli{f} : \monad X \to
  \monad Y$ to be continuous, $\monad X$ being an \ocppo. Since $\monad X$ is
  also a $\signature$-algebra, it seems natural to require $\kleisli{f}$ to be
  also a $\signature$-algebra homomorphism. In fact, such requirement implies
  condition (\ref{sigmaComp}) and has the advantage of being more general than
  the latter, not depending from the specific relator considered. Let $\monad$
  be a monad on $\set$. Following \cite{PlotkinPower/ACS/2003} we say that an
  $n$-ary algebraic operation (where $n$ is some set) associates to each set
  $X$ a function $\op_X : (\monad X)^n \to \monad X $ in such a way for every
  function $f : X \to TY$, the Kleisli extension $\kleisli{f}$ is a
  homomorphism. Recall that an $n$-ary generic effect is an element of $Tn$.
  As shown in \cite{PlotkinPower/ACS/2003}, there is a bijection from generic
  effects to algebraic operations as follows. Every $n$-ary generic effect $p$
  gives rise to an $n$-ary algebraic operation $\hat{p}$, where $\hat{p}_X$
  sends $u$ to $u^{\dagger} (p)$. Conversely, each $n$-ary algebraic operation
  $\op$ is $\hat{p}$ for a unique $n$-ary generic effect $p$, viz. $\sigma_n
  (\eta_n)$.

  We can now generalise our condition on $\monad$ by requiring it to be
  equipped with an $n$-ary algebraic operation for each $\op \in \signature$
  of arity $n$. This is the equivalent to extending our definitions by
  requiring the additional axiom that Kleisli extensions are homomorphisms.
  Moreover, requiring the bind operator to be strict in its first argument
  means that $\bot$ is an algebraic constant. Let us now prove that this
  condition implies condition ($\Sigma$-comp). For, suppose $\op$ has arity
  $n$, and $\forall k.\ u_k\ \relator \relone \ v_k$ holds, meaning that we
  have the square
    $$
     \xymatrix{
          n       \ar[r]^{=_n}|{\scriptscriptstyle{/}}  
                \ar[d]_{u}  &  
          n     \ar[d]^{v}
          \\
           TX   \ar[r]_{\relator \relone}|{\scriptscriptstyle{/}}   &  
           TY }
     $$ 
    As a consequence, we also have the square
    $$
     \xymatrix{
          Tn        \ar[r]^{\relator(=_n)}|{\scriptscriptstyle{/}}  
                  \ar[d]_{\kleisli{u}}  &  
          Tn      \ar[d]^{\kleisli{v}}
          \\
           TX   \ar[r]_{\relator \relone}|{\scriptscriptstyle{/}}   &  
           TY }
     $$ 
    and therefore
    $$
     \xymatrix{
          Tn        \ar[r]^{=_{Tn}}|{\scriptscriptstyle{/}}  
                  \ar[d]_{\kleisli{u}}  &  
          Tn      \ar[d]^{\kleisli{v}}
          \\
           TX   \ar[r]_{\relator \relone}|{\scriptscriptstyle{/}}   &  
           TY }
     $$ 
    Writing the algebraic operation associated with $\sigma$ as $\hat{p}$, 
    we have $p =_{T n} p$, and so  $u^{\dagger}(p)\ \relator \relone \ 
    v^{\dagger}(p)$, which essentially means
    $$\sigma_X (u_1,...,u_n)\ \relator \relone\ \sigma_Y(v_1,...,v_n).$$ 
    To the ends of this paper, condition (\ref{sigmaComp}) is sufficient and 
    thus we will use that throughout. 
\end{remark}

Following Abramsky \cite{Abramsky/RTFP/1990} we introduce Applicative Transition System 
(ATSs) over a monad (taking into account effectful computations) and define the notion 
of \emph{applicative simulation}. Let $\monad$ be a monad.

\begin{definition}\label{ApplicativeTransitionSystem}
  An applicative transition system (over $\monad$) consists of the following:
  \begin{varitemize}
    \item A state space made of a pair of sets $(X, Y)$ modelling closed terms 
      and values, respectively.
    \item An evaluation function $\varepsilon : X \to \monad Y$.
    \item An application function $\cdot : Y \to Y \to X$.
  \end{varitemize}
\end{definition}

The notion of ATS distinguishes between terms and values. As a consequence, we
often deal with pairs of relations $(\relone_X, \relone_Y)$, where $\relone_X,
\relone_Y$ are relations over $X$ and $Y$, respectively. We refer to such
pairs as $XY$-relations. $XY$-relations belongs to $2^{X \times X} \times 2^{Y
\times Y}$. The latter, being the product of complete lattices, is itself a
complete lattice.

\begin{definition}\label{applicative-similarity-ATS}
  Let $\relator$ be a relator for $\monad$. An applicative $\relator$-simulation is an $XY$-relation 
  $\relone = (\relone_X, \relone_Y)$ such that:
  \begin{align*}
  & x\ \relone_X x' \implies \varepsilon(x)\ \relator \relone_Y\ \varepsilon(x')     \tag{Sim-1}\label{Sim-1} \\
  & y\ \relone_Y y' \implies \forall w \in Y.\ y \cdot w \ \relone_X\ y' \cdot w.     \tag{Sim-2}\label{Sim-2}
  \end{align*}
\end{definition}

The above definition induces an operator 
$\behavior_\relator$ on $2^{X \times X} \times 2^{Y \times Y}$ 
defined for 
$\relone = (\relone_X, \relone_Y)$ as $(\behavior_\relator(\relone_X), \behavior_\relator(\relone_Y))$, where
\begin{align*}
  \behavior_\relator(\relone_X)  &= \{(x,x') \mid \varepsilon(x)\ \relator \relone_Y\ \varepsilon(x')\} \\
  \behavior_\relator(\relone_Y)  &= \{(y,y') \mid \forall w\in Y.\ y \cdot w\ \relone_X\ y'\cdot w\}.
\end{align*}
It is easy to prove that since $\relator$ is monotone, then so is $\behavior_\relator$. As a consequence, we 
can define applicative $\relator$-similarity as the greatest fixed point $\gfp{\behavior_\relator}$ 
of $\behavior_\relator$.

\begin{proposition}\label{applicativePreorder}
  The following hold:
  \begin{varenumerate}
    \item Applicative $\relator$-similarity $\similar_\relator$ is a preorder.
    \item If $\relator$ is conversive, then $\similar_\relator$ is an equivalence relation.
  \end{varenumerate}
\end{proposition}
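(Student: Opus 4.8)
The plan is to mimic the proof of Proposition~\ref{similarity-is-a-preorder}, but carried out in the setting of $XY$-relations and the operator $\behavior_\relator$. Since applicative $\relator$-similarity is defined as $\gfp{\behavior_\relator}$, both claims amount to exhibiting a suitable applicative $\relator$-simulation and invoking the coinduction principle: to show $\similar_\relator$ has a property $\mathcal{P}$ (reflexivity, transitivity, symmetry) it suffices to find an applicative $\relator$-simulation $\relone$ witnessing $\mathcal{P}$ that is contained in (or contains) the relevant relation.

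For part~(1), reflexivity: I would show that the identity $XY$-relation $(=_X, =_Y)$ is an applicative $\relator$-simulation. Condition \eqref{Sim-1} follows from $x =_X x'$ giving $\varepsilon(x) =_{\monad Y} \varepsilon(x')$, hence $\varepsilon(x)\ \relator(=_Y)\ \varepsilon(x')$ by \eqref{rel-1}; condition \eqref{Sim-2} is immediate since $y =_Y y'$ gives $y \cdot w =_X y' \cdot w$. Thus $(=_X,=_Y) \subseteq \similar_\relator$, which is reflexivity. For transitivity, given $XY$-relations $\relone,\relthree$ that are applicative $\relator$-simulations, I would show $\relthree \circ \relone = (\relthree_X \circ \relone_X, \relthree_Y \circ \relone_Y)$ is one too. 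For \eqref{Sim-1}, from $x\ (\relthree_X \circ \relone_X)\ x''$ pick the witness $x'$; then $\varepsilon(x)\ \relator\relone_Y\ \varepsilon(x')$ and $\varepsilon(x')\ \relator\relthree_Y\ \varepsilon(x'')$, so $\varepsilon(x)\ (\relator\relthree_Y \circ \relator\relone_Y)\ \varepsilon(x'')$, and \eqref{rel-2} gives $\varepsilon(x)\ \relator(\relthree_Y \circ \relone_Y)\ \varepsilon(x'')$. Condition \eqref{Sim-2} is a routine composition. Applying this with $\relone = \relthree = \similar_\relator$ and using that $\similar_\relator \circ \similar_\relator$ is then an applicative $\relator$-simulation contained in $\similar_\relator$ by maximality yields transitivity.

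For part~(2), when $\relator$ is conversive, I would show that the converse $\conversive{\relone} = (\conversive{\relone_X}, \conversive{\relone_Y})$ of an applicative $\relator$-simulation $\relone$ is again an applicative $\relator$-simulation: \eqref{Sim-1} for $\conversive{\relone}$ reads $x\ \conversive{\relone_X}\ x' \implies \varepsilon(x)\ \relator\conversive{\relone_Y}\ \varepsilon(x')$; but $x\ \conversive{\relone_X}\ x'$ means $x'\ \relone_X\ x$, so $\varepsilon(x')\ \relator\relone_Y\ \varepsilon(x)$, hence $\varepsilon(x)\ \conversive{(\relator\relone_Y)}\ \varepsilon(x')$, and conversiveness gives $\conversive{(\relator\relone_Y)} = \relator(\conversive{\relone_Y})$; \eqref{Sim-2} is trivial under converse. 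Taking $\relone = \similar_\relator$, maximality forces $\conversive{\similar_\relator} \subseteq \similar_\relator$, hence equality, so $\similar_\relator$ is symmetric; combined with part~(1) it is an equivalence relation.

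The only mildly delicate point is bookkeeping the pairing structure: $\behavior_\relator$ acts componentwise on $2^{X\times X}\times 2^{Y\times Y}$, and one must be careful that the closure operations (identity, composition, converse) are applied componentwise and that the lattice of $XY$-relations is indeed a complete lattice so that $\gfp{\behavior_\relator}$ exists and the coinduction principle applies — but this was already noted in the text. There is no real obstacle; the argument is essentially a verbatim lift of Proposition~\ref{similarity-is-a-preorder} through the two-sorted interface, with the crucial relator axioms \eqref{rel-1}, \eqref{rel-2}, \eqref{rel-4} doing all the work and conversiveness supplying the symmetry step.
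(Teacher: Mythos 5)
Your proposal is correct and follows essentially the same route as the paper: the paper's proof simply states that it "strictly follows" the proof of Proposition~\ref{similarity-is-a-preorder}, spelling out only reflexivity via the identity $XY$-relation and \eqref{rel-1}, with transitivity via composition of simulations and symmetry via converses under conversiveness left implicit — exactly the steps you carry out (your use of \eqref{rel-2} for composing lifted relations is the right axiom here). No gaps; your version is just a more explicit rendering of the paper's argument.
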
 

\begin{proof}
  The proof strictly follows the proof of Proposition \ref{similarity-
  is-a-preorder} (proving the desired properties with respect to clause
  \eqref{Sim-2} is straightforward). As an example, we show by coinduction
  that $\similar_\relator$ is reflexive by proving that the $XY$-identity
  relation $(=_X, =_Y)$ is an applicative $\relator$-simulation. From $x =_X
  x$ we infer $\varepsilon(x) =_{\monad Y} \varepsilon(x)$, and thus
  $\varepsilon(x)\ \relator(=_Y)\ \varepsilon(x)$, by \eqref{rel-1}. Moreover,
  we trivially have that $y =_Y y$ implies $y\cdot w =_X y \cdot w$.
\end{proof}

%%%%%%%%%%%%%%%%%%%%%%%%%%%%%%%%%%
\section{Contextual Preorder and Applicative Similarity}
%%%%%%%%%%%%%%%%%%%%%%%%%%%%%%%%%%
In the previous section, the axioms needed to generalise applicative
bisimilarity to our setting have been given. What remains to be done is to
appropriately instantiate all this to $\Lambda_\Sigma$.  We introduce the
notions of contextual preorder and applicative similarity (which will be then
extended to contextual equivalence and applicative \emph{bi}similarity). From
now we assume to have a monad $\monad$ carrying a continuous
$\signature$-algebra structure. Moreover, we assume any relator for $\monad$
to be inductive and to respect $\signature$. It is convenient to work with
generalisations of relations on closed terms (resp. values) called
$\lambda$-term relations.
\begin{definition}
  An open relation over terms is a set $\reloneTerms$ of triples $(\vars,
  \termone, \termtwo)$ where $\termone, \termtwo \in \terms(\vars)$.
  Similarly, an open relation over values is a set $\reloneValues$ of triples
  $(\vars, \valone, \valtwo)$ where $\valone, \valtwo \in \values(\vars)$. A
  $\lambda$-term relation is a pair $\relone = (\reloneTerms, \reloneValues)$
  made of an open relation $\reloneTerms$ over terms and an open relation
  $\reloneValues$ over values. A closed $\lambda$-term relation is a pair
  $\relone = (\reloneTerms, \reloneValues)$ where $\reloneTerms \subseteq
  \closedTerms \times \closedTerms$ and similarly for $\reloneValues$.
\end{definition}

\begin{remark}
  Formally, we can see an open relation over terms (and similarly over values)
  as an element of the cartesian product
  $\dependentProduct{\vars}{2^{\terms(\vars) \times \terms(\vars)}}$. That is,
  an open relation is a function that associates to each finite set $\vars$ of
  variables a (binary) relation between open terms in $\terms(\vars)$. Since,
  $2^{\terms(\vars) \times \terms(\vars)}$ is a complete lattice, for any
  finite set of variables $\vars$, then so is
  $\dependentProduct{\vars}{2^{\terms(\vars) \times \terms(\vars)}}$. That is,
  the set of open relations over terms (and over values) forms a complete
  lattice (the order is given pointwise). As a consequence, the set of
  $\lambda$-term relations is a complete lattice as well. These algebraic
  properties allow us to define open relations both inductively and
  coinductively, and, in particular, to extend notions and results developed
  in the relational calculus of \cite{Lassen/PhDThesis,Lassen/RelationalReasoning,
  Gordon/FOSSACS/01,Levy/ENTCS/2006}.
\end{remark}

We will use infix notation and write $\vars \imp \termone\ \reloneTerms\
\termtwo$ to indicate that $(\vars, \termone, \termtwo) \in \reloneTerms$. The
same convention applies to values and open relations over values. For a
$\lambda$-term relation $\relone = (\reloneTerms, \reloneValues)$, we often
write $\vars \imp \termone\ \relone\ \termtwo$ (i.e. $(\vars, \termone,
\termtwo) \in \relone$) for $\vars \imp \termone\ \reloneTerms\ \termtwo$
(i.e. $(\vars, \termone, \termtwo) \in \reloneTerms$). The same convention
holds for values and $\reloneValues$. Finally, we will use the notations
$\emptyset \imp \termone\ \relone\ \termtwo$ and $\termone\ \relone\ \termtwo$
interchangeably (and similarly for values).

There is a canonical way to extend a closed relation to an open one.

\begin{definition}
  Define the open extension operator mapping a closed relation over terms
  $\relone$ to the open relation $\open{\relone}$ (over terms) as follows:
  $(\vars, \termone, \termtwo) \in \open{\relone}$ iff $\termone, \termtwo \in
  \terms(\vars)$, and for all $\vals$, $\substcomp{\termone}{\vars}{\vals}\
  \relone\ \substcomp{\termtwo}{\vars}{\vals}$ holds.
\end{definition}  
The notion of open extension for a closed relation over values can be defined
in a similar way (using the appropriate notion of substitution).
 
The notion of reflexivity, symmetry and transitivity straightforwardly extends
to open $\lambda$-term relation (see e.g. \cite{Pitts/ATBC/2011}).

 \begin{definition}
  Let $\relone = (\reloneTerms, \reloneValues)$ be a $\lambda$-term relation.
  We say that $\relone$ is \emph{compatible} if the clauses in Figure
  \ref{compatibilityClauses} hold. We say that $\relone$ is a
  \emph{precongruence} if it is a compatible preorder. We say that $\relone$
  is a \emph{congruence} if it is a compatible equivalence.
\end{definition}

  \begin{figure}
   \begin{mdframed}
    \begin{center}
    $\vspace{-0.50cm}$
      \begin{align}
      & \forall \vars.\ \forall \varone \in \vars.\  \vars \imp \varone\ \reloneValues\ \varone   \tag{Comp1} \label{Comp1} \\
      & \forall \vars.\ \forall \varone \not \in \vars.\ \forall \termone, \termtwo.\ 
        \vars \cup \{\varone\} \imp \termone\ \reloneTerms\ \termtwo
        \implies \vars \imp \abs{\varone}{\termone}\ \reloneValues\ \abs{\varone}{\termtwo}   \tag{Comp2} \label{Comp2} \\
      & \forall \vars.\ \forall \valone, \valtwo\ \vars \imp \valone\ \reloneValues\ \valtwo 
        \implies \vars \imp \return{\valone}\ \reloneTerms\ \return{\valtwo}            \tag{Comp3} \label{Comp3} \\
      & \forall \vars.\ \forall \valone, \valone', \valtwo, \valtwo'.\ \vars \imp \valone\ \reloneValues\ \valone' \wedge 
        \vars \imp \valtwo\ \reloneValues\ \valtwo' \implies 
        \vars \imp \valone \valtwo\ \reloneTerms\ \valone' \valtwo'                 \tag{Comp4} \label{Comp4} \\
      & \forall \vars.\ \forall \varone \not \in \vars.\ \forall \termone, \termone', \termtwo, \termtwo'. \nonumber \\
        &  \vars \imp \termone\ \reloneTerms\ \termone' \wedge \vars \cup \{\varone\} \imp \termtwo\ \reloneTerms\ \termtwo'  
         \implies \vars \imp (\seq{\termone}{\termtwo})\ \reloneTerms\ (\seq{\termone'}{\termtwo'})   \tag{Comp5} \label{Comp5} \\
      & \forall \vars.\ \forall \op \in \signature.\ \forall \termone_1, \termtwo_1, \hh, \termone_n, \termtwo_n.\ \nonumber \\
      &  (\forall i \in \{1, \hh, n\}.\ \vars \imp \termone_i\ \reloneTerms\ \termtwo_i) \implies 
        \vars \imp \op(\termone_1, \hh, \termone_n)\ \reloneTerms\ \op(\termtwo_1, \hh, \termtwo_n)   \tag{Comp6} \label{Comp6}
  \end{align}
  \end{center}
  \end{mdframed}
  \caption{Compatibility Clauses.}
  \label{compatibilityClauses}
  \end{figure}

The following lemma will be useful.
    \begin{lemma}\label{decompositionComp}
      Let $\relone = (\reloneTerms, \reloneValues)$ be a $\lambda$-term relation. If $\relone$ is a preorder, then 
      properties \eqref{Comp4}, \eqref{Comp5}, \eqref{Comp6} are equivalent to their `unidirectional' versions:
      \begin{align}
      &   \forall \vars.\ \forall \valone, \valone', \valtwo.\ \vars \imp \valone\ \reloneValues\ \valone'  \implies 
        \vars \imp \valone \valtwo\ \reloneTerms\ \valone' \valtwo                
        \tag{Comp4L} \label{Comp4L} \\
      &   \forall \vars.\ \forall \valone, \valtwo, \valtwo'.\ \vars \imp \valtwo\ \reloneValues\ \valtwo'  \implies 
        \vars \imp \valone \valtwo\ \reloneTerms\ \valone \valtwo'                
        \tag{Comp4R} \label{Comp4R} \\
      &   \forall \vars.\ \forall \varone \not \in \vars.\ \forall \termone, \termone', \termtwo.\  \vars \imp \termone\ \reloneTerms\ \termone'   
        \implies \vars \imp (\seq{\termone}{\termtwo})\ \reloneTerms\ (\seq{\termone'}{\termtwo})   
        \tag{Comp5L} \label{Comp5L} \\
      &   \forall \vars.\ \forall \varone \not \in \vars.\ \forall \termone, \termtwo, \termtwo'.\  
        \vars \cup \{\varone\} \imp \termtwo\ \reloneTerms\ \termtwo'   
        \implies \vars \imp (\seq{\termone}{\termtwo})\ \reloneTerms\ (\seq{\termone}{\termtwo'})   
        \tag{Comp5R} \label{Comp5R}  \\
      & \forall \vars.\ \forall \op \in \signature.\ \forall \termone, \termtwo, \bar{\termone}, \bar{\termtwo}.\ 
         \vars \imp \termone\ \reloneTerms\ \termtwo \implies 
        \vars \imp \op(\bar{\termone}, \termone, \bar{\termtwo})\ \reloneTerms\ \op(\bar{\termone}, \termtwo, \bar{\termtwo})   
        \tag{Comp6C} \label{Comp6C}
      \end{align}
      where in \eqref{Comp6C} $\bar{\termone}, \bar{\termtwo}$ are possibly
      empty finite tuples of terms such that the sum of their lengths is equal
      to the ariety of $\op$ minus one.
    \end{lemma}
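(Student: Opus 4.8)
The plan is to exploit the two halves of ``preorder'' separately: \emph{reflexivity} of $\reloneTerms$ and $\reloneValues$ is what turns the bidirectional clauses into the unidirectional ones, while \emph{transitivity} of $\reloneTerms$ is what turns the unidirectional ones back into the bidirectional ones. So for each of the three pairs I would prove two implications, never using more than one of the two preorder properties in each.

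For the direction ``bidirectional $\Rightarrow$ unidirectional'': to get \eqref{Comp4L} from \eqref{Comp4}, given $\vars \imp \valone\ \reloneValues\ \valone'$ I instantiate \eqref{Comp4} with its second premise taken to be $\vars \imp \valtwo\ \reloneValues\ \valtwo$, which holds by reflexivity of $\reloneValues$; the conclusion is exactly $\vars \imp \valone\valtwo\ \reloneTerms\ \valone'\valtwo$. Symmetrically \eqref{Comp4R} follows by making the first premise reflexive. The same idea gives \eqref{Comp5L} and \eqref{Comp5R} from \eqref{Comp5} (make the untouched premise reflexive), and \eqref{Comp6C} from \eqref{Comp6}: given $\vars \imp \termone\ \reloneTerms\ \termtwo$ and context tuples $\bar{\termone},\bar{\termtwo}$ of total length $\alpha(\op)-1$, instantiate the premises of \eqref{Comp6} so that the one position carrying the genuine relation is $\termone\ \reloneTerms\ \termtwo$ and every position filled by an entry of $\bar{\termone}$ (resp.\ $\bar{\termtwo}$) carries the reflexive instance; reflexivity of $\reloneTerms$ discharges all these premises, and the conclusion is \eqref{Comp6C}.

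For the direction ``unidirectional $\Rightarrow$ bidirectional'' I would chain through an intermediate term using transitivity of $\reloneTerms$. For \eqref{Comp4}: from $\vars \imp \valone\ \reloneValues\ \valone'$ and $\vars \imp \valtwo\ \reloneValues\ \valtwo'$, \eqref{Comp4L} gives $\vars \imp \valone\valtwo\ \reloneTerms\ \valone'\valtwo$ and \eqref{Comp4R} gives $\vars \imp \valone'\valtwo\ \reloneTerms\ \valone'\valtwo'$, so transitivity yields $\vars \imp \valone\valtwo\ \reloneTerms\ \valone'\valtwo'$. The argument for \eqref{Comp5} from \eqref{Comp5L} and \eqref{Comp5R} is identical, the intermediate term being $\seq{\termone'}{\termtwo}$. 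For \eqref{Comp6} from \eqref{Comp6C}: with $n$ the arity of $\op$ and $\vars \imp \termone_i\ \reloneTerms\ \termtwo_i$ for $i\in\{1,\hh,n\}$, I replace the arguments one at a time, obtaining the chain
$$\op(\termone_1,\hh,\termone_n)\ \reloneTerms\ \op(\termtwo_1,\termone_2,\hh,\termone_n)\ \reloneTerms\ \hh\ \reloneTerms\ \op(\termtwo_1,\hh,\termtwo_n),$$
where the $i$-th step is an instance of \eqref{Comp6C} with $\bar{\termone} = (\termtwo_1,\hh,\termtwo_{i-1})$ and $\bar{\termtwo} = (\termone_{i+1},\hh,\termone_n)$, so that these two tuples do have total length $n-1$ as required, and then close with $(n-1)$-fold transitivity of $\reloneTerms$. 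I do not expect any genuine obstacle here; the only point needing care is precisely this $n$-ary step, where one must check the bookkeeping on the lengths of the context tuples and phrase the repeated rewriting as a short induction on the arity of $\op$.
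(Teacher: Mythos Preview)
Your proposal is correct and matches the paper's approach exactly: use reflexivity to derive the unidirectional clauses from the bidirectional ones, and transitivity (chaining through the obvious intermediate term) for the converse. The paper only spells out the \eqref{Comp5} case as a paradigmatic example, so your write-up is in fact more detailed, in particular on the $n$-ary \eqref{Comp6C} case.
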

\begin{proof}
  The proof is straightforward. As a paradigmatic example, we show that clause
  \eqref{Comp5} is equivalent to the conjunction of clauses \eqref{Comp5L} and
  \eqref{Comp5R}. For the left to right implication, we assume that both
  \eqref{Comp5} and $\vars \imp \termone\ \reloneTerms\ \termone'$ hold, and
  show that $\vars \imp \seq{\termone}{\termtwo}\ \reloneTerms\
  \seq{\termone'}{\termtwo}$ holds as well, thus proving that \eqref{Comp5}
  implies \eqref{Comp5L} (the proof that \eqref{Comp5} implies \eqref{Comp5R}
  is morally the same). To prove the thesis, we observe that since $\relone$
  is reflexive, we have $\vars \cup \{\varone\} \imp \termtwo\ \reloneTerms\
  \termtwo$. Applying \eqref{Comp5} to the latter and $\vars \imp \termone\
  \reloneTerms\ \termone'$, we conclude $\vars \imp \seq{\termone}{\termtwo}\
  \reloneTerms\ \seq{\termone'}{\termtwo}$. \\ Now for the right to left
  direction. Assume \eqref{Comp5L} and \eqref{Comp5R} to be valid, and suppose
  both $\vars \imp \termone\ \reloneTerms\ \termone'$ and $\vars \cup
  \{\varone\} \imp \termtwo\ \reloneTerms\ \termtwo'$ to hold. We can apply
  \eqref{Comp5L} to the former, obtaining $\vars \imp
  \seq{\termone}{\termtwo}\ \reloneTerms\ \seq{\termone'}{\termtwo}$, and
  \eqref{Comp5R} to the latter, obtaining 
  $\vars \imp \seq{\termone'}{\termtwo}\ \reloneTerms\
  \seq{\termone'}{\termtwo'}$. The thesis now follows by transitivity of
  $\relone$.
\end{proof}

It is useful to characterise compatible relations via the notion of compatible refinement.

\begin{definition}
  Let $\relone = (\reloneTerms, \reloneValues)$ be a $\lambda$-term relation. 
  Define the compatible refinement $\ctxclosure{\relone}$ 
  of $\relone$ as the pair $(\ctxclosure{\relone}_\terms, \ctxclosure{\relone}_\values)$, where 
  $\reloneTerms$ and $\reloneValues$ are inductively defined by rules in Figure \ref{compatibleClosure}.
\end{definition}

    \begin{figure}[htbp]
    \begin{mdframed}
    \begin{center}
    \begin{tabular}{cc}
    \AXC{\phantom{$\{\varone\}$}} \RL{$\varone \in \vars$} \UIC{$\vars \imp \varone\ \ctxclosure{\relone}_\values\ \varone$} \DP 
    &
    \AXC{$\vars \cup \{\varone\} \imp \termone\ \reloneTerms\ \termtwo$} \RL{$\varone \not \in \vars$}
    \UIC{$\vars \imp \abs{\varone}{\termone}\ \ctxclosure{\relone}_\values\ \abs{\varone}{\termtwo}$}\DP \\ \\
    \AXC{$\vars \imp \valone\ \reloneValues\ \valtwo\phantom{\{ \}}$} 
    \UIC{$\vars \imp \return{\valone}\ \reloneTerms\ \return{\valtwo}$} \DP
    &
    \AXC{$\vars \imp \valone\ \reloneValues\ \valone'$}
    \AXC{$\vars \imp \valtwo\ \reloneValues\ \valtwo'$}
    \BIC{$\vars \imp \valone\valtwo\ \ctxclosure{\relone}_\terms\ \valone'\valtwo'$} \DP \\ \\
    \AXC{$\vars \imp \termone\ \reloneTerms\ \termone'$}
    \AXC{$\vars \cup \{\varone\} \imp \termtwo\ \reloneTerms\ \termtwo'$} \RL{$\varone \not \in \vars$}
    \BIC{$\vars \imp \seq{\termone}{\termtwo}\ \ctxclosure{\relone}_\terms\ \seq{\termone'}{\termtwo'}$}\DP
    &
    \AXC{$\vars \imp \termone_1\ \reloneTerms\ \termtwo_1 \quad \hh \quad \vars \imp \termone_n\ \reloneTerms\ \termtwo_n$}
    \UIC{$\vars \imp \op(\termone_1, \hh, \termone_n)\ \ctxclosure{\relone}_\terms\ \op(\termtwo_1, \hh, \termtwo_n)$} \DP
    \end{tabular}
    \end{center}
    \end{mdframed}
      \caption{Compatible Refinement Rules.}
      \label{compatibleClosure}
    \end{figure}

\begin{proposition}
  A $\lambda$-term relation $\relone$ is compatible iff 
  $\ctxclosure{\relone} \subseteq \relone$ holds.
\end{proposition}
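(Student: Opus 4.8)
The plan is to exploit the evident one-to-one correspondence between the compatibility clauses \eqref{Comp1}--\eqref{Comp6} of Figure~\ref{compatibilityClauses} and the rules defining the compatible refinement in Figure~\ref{compatibleClosure}: each rule has as premises exactly the hypotheses of the matching clause, and as conclusion exactly that clause's conclusion with $\relone$ replaced by $\ctxclosure{\relone}$. So the argument reduces to matching up the two figures, rule by rule.

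For the ``if'' direction, assume $\ctxclosure{\relone} \subseteq \relone$ and fix a clause, say \eqref{Comp4}: from $\vars \imp \valone\ \reloneValues\ \valone'$ and $\vars \imp \valtwo\ \reloneValues\ \valtwo'$, the corresponding rule of Figure~\ref{compatibleClosure} yields $\vars \imp \valone\valtwo\ \ctxclosure{\relone}_\terms\ \valone'\valtwo'$, and the assumed inclusion gives $\vars \imp \valone\valtwo\ \reloneTerms\ \valone'\valtwo'$. The clauses \eqref{Comp2}, \eqref{Comp3}, \eqref{Comp5}, \eqref{Comp6} are obtained in the same way from their respective rules, and \eqref{Comp1} follows directly because its matching rule $\vars \imp \varone\ \ctxclosure{\relone}_\values\ \varone$ (for $\varone \in \vars$) is an axiom, so $\vars \imp \varone\ \reloneValues\ \varone$ holds unconditionally. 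Hence $\relone$ is compatible.

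For the ``only if'' direction, assume $\relone = (\reloneTerms, \reloneValues)$ is compatible. Since $\ctxclosure{\relone}$ is by definition the least pair of relations closed under the rules of Figure~\ref{compatibleClosure}, it suffices to check that $(\reloneTerms, \reloneValues)$ is itself closed under those rules; equivalently, one argues by induction on the derivation of $\vars \imp \termone\ \ctxclosure{\relone}_\terms\ \termtwo$ (resp.\ of $\vars \imp \valone\ \ctxclosure{\relone}_\values\ \valtwo$), and in each case applies the matching compatibility clause to the premises. For instance, if the last rule is the one for $\seq{\termone}{\termtwo}$, its premises are $\vars \imp \termone\ \reloneTerms\ \termone'$ and $\vars \cup \{\varone\} \imp \termtwo\ \reloneTerms\ \termtwo'$, and \eqref{Comp5} delivers $\vars \imp \seq{\termone}{\termtwo}\ \reloneTerms\ \seq{\termone'}{\termtwo'}$, as required; the remaining cases are identical. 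Note that the premises of the rules in Figure~\ref{compatibleClosure} already mention $\relone$ rather than $\ctxclosure{\relone}$, so this ``induction'' is in fact a plain case analysis on the last rule used.

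The main point is that there is essentially no obstacle here beyond bookkeeping: the only things requiring (routine) care are that the free-variable sets $\vars$ and the side conditions $\varone \in \vars$ / $\varone \notin \vars$ line up verbatim between the two figures, and that in the ``only if'' direction one invokes the full bidirectional clauses \eqref{Comp4}, \eqref{Comp5}, \eqref{Comp6} (not the unidirectional variants of Lemma~\ref{decompositionComp}), which is unproblematic since compatibility was assumed outright.
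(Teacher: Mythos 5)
Your proof is correct, and it is exactly the routine rule-by-rule matching between Figure~\ref{compatibilityClauses} and Figure~\ref{compatibleClosure} that the paper implicitly relies on (the proposition is stated there without proof, precisely because the argument is this bookkeeping). Your observation that the premises of the refinement rules mention $\relone$ rather than $\ctxclosure{\relone}$, so the ``only if'' direction is a plain case analysis on the last rule rather than a genuine induction, is the one point worth making explicit, and you made it.
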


The above notion of precongruence can be justified by observing that when a
relation $\relone$ is a preorder, being a precongruence does exactly mean to
be closed under the term constructors of the language. That could be formally
expressed by saying that $\relone$ is a precongruence if and only if $\vars
\imp \termone\ \relone\ \termtwo$ implies $\vars \imp \ctxone[\termone]\
\relone\ \ctxone[\termtwo]$, for any term context $\ctxone[\cdot]$. Defining
term contexts requires some care. In particular, when dealing with the
contextual preorder it is not possible to reason modulo $\alpha$-conversion,
thus making definition syntactically involved (see
\cite{Lassen/PhDThesis,Lassen/RelationalReasoning,Pitts/ATBC/2011} for
details). As remarked in \cite{Pitts/ATBC/2011}, it is possible to avoid those
difficulties by giving a coinductive characterisation of the contextual
preorder in the style of \cite{Lassen/PhDThesis,Gordon/FOSSACS/01}.
Essentially, the contextual preorder (and, similarly the contextual
equivalence) is defined as the largest compatible and preadequate (see
Definition \ref{adequateRelation}) $\lambda$-term relation. It is then easy to
provide a more syntactic definition of contextual preorder and to prove that
the two given definitions are equivalent
\cite{Gordon/FOSSACS/01,Lassen/PhDThesis,Pitts/ATBC/2011}.

The notion of adequacy defines the available observation on values. Being in
an untyped setting, it is customary not to observe them.

\begin{definition}\label{adequateRelation}
  Let $\relunit$ denote $\closedValues \times \closedValues$ seen as a closed relation, i.e. the trivial relation relating 
  all values. We say that a relation $\relone$ on terms is preadequate if 
  $$\emptyset \imp \termone\ \relone\ \termtwo \implies \sem{\termone}\ \relator \relunit\ \sem{\termtwo}$$
  where $\termone, \termtwo \in \closedTerms$.
  That is, a relation $\relone$ on terms is preadequate if whenever $\relone$ relates two closed terms, evaluating these programs 
  produces the same side-effects. 
  A $\lambda$-term relation $\relone = (\reloneTerms, \reloneValues)$ is preadequate iff $\reloneTerms$ is.
\end{definition}

\begin{example}
  It is easy to check that the above notion of adequacy (together with the relators in Example \ref{relators}) 
  captures standard notions of adequacy used for untyped $\lambda$-calculi.
  \begin{varitemize}
    \item Consider a calculus without operation symbols and with operational semantics over $(\closedValues)_\bot$. A relation 
      is preadequate if whenever $\emptyset \imp \termone\ \relone\ \termtwo$, then if $\termone$ converges, then so does 
      $\termtwo$.
    \item Consider a nondeterministic calculus with operational semantics over $\powersetmonad{\closedValues}$. A relation 
      is preadequate if whenever $\emptyset \imp \termone\ \relone\ \termtwo$, then if there exists a value $\valone$ to which 
      $\termone$ may converge (i.e. $\valone \in \sem{\termone}$), then there exists a value $\valtwo$ to which 
      $\termtwo$ may converge (i.e. $\valtwo \in \sem{\termtwo})$.
    \item Consider a probabilistic calculus with operational semantics over $\distribution{\closedValues}$. A relation 
      is preadequate if whenever $\emptyset \imp \termone\ \relone\ \termtwo$, then the probability of convergence of
      $\termone$ is smaller or equal than the probability of convergence of
      $\termtwo$.
  \end{varitemize}
\end{example}

Following \cite{Lassen/PhDThesis}, we shall define the \emph{$\relator$-contextual preorder} 
as the largest $\lambda$-term relation that is both compatible and preadequate.

\begin{definition}
  Let $\mathbb{C}\mathbb{A}$ be the set of relations on terms that are both 
  compatible and preadequate. Then define $\ctxpreord_\relator$ as $\bigcup \mathbb{C}\mathbb{A}$.
\end{definition}

\begin{proposition}
  The $\relator$-contextual preorder $\ctxpreord_\relator$ is a compatible and preadequate preorder.
\end{proposition}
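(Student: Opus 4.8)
The plan is to show that $\ctxpreord_\relator = \bigcup\mathbb{C}\mathbb{A}$ is preadequate, reflexive, transitive and compatible, by exploiting closure properties of the family $\mathbb{C}\mathbb{A}$. Before anything else I would record the auxiliary fact, proved by a routine simultaneous induction on the structure of terms and values using clauses \eqref{Comp1}--\eqref{Comp6}, that \emph{every compatible $\lambda$-term relation is reflexive}: variables are related to themselves by \eqref{Comp1}, and each remaining term/value constructor propagates reflexivity upward via the corresponding clause (using Barendregt's convention to rename bound variables out of the ambient set $\vars$). This is the observation that makes the union behave well.

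\emph{Nonemptiness and reflexivity.} The identity $\lambda$-term relation $\mathcal{I}$ (with $\vars \imp \termone\ \mathcal{I}\ \termtwo$ iff $\termone = \termtwo$, and likewise on values) is compatible, and it is preadequate because $=_{\monad\closedValues}\ \subseteq\ \relator(=_{\closedValues})\ \subseteq\ \relator\relunit$ by \eqref{rel-1} and \eqref{rel-4} (since $=_{\closedValues}\subseteq\relunit$), so $\sem{\termone}\ \relator\relunit\ \sem{\termone}$ always holds. Hence $\mathcal{I}\in\mathbb{C}\mathbb{A}$, so $\mathcal{I}\subseteq\ctxpreord_\relator$ and $\ctxpreord_\relator$ is reflexive. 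Preadequacy of $\ctxpreord_\relator$ is also immediate: a closed pair in $\ctxpreord_\relator$ lies in some preadequate $\relone\in\mathbb{C}\mathbb{A}$, so its $\sem{\cdot}$-images are $\relator\relunit$-related.

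\emph{Closure of $\mathbb{C}\mathbb{A}$ under composition, and transitivity.} I would check that if $\relone,\relthree\in\mathbb{C}\mathbb{A}$ then $\relthree\circ\relone\in\mathbb{C}\mathbb{A}$. Compatibility of $\relthree\circ\relone$ is verified clause by clause: for \eqref{Comp4}, from $\valone\ (\relthree\circ\relone)\ \valone'$ and $\valtwo\ (\relthree\circ\relone)\ \valtwo'$ one picks intermediate values and applies \eqref{Comp4} for $\relone$ and then for $\relthree$; \eqref{Comp1} is handled using reflexivity of both relations, and the remaining clauses are analogous. Preadequacy of $\relthree\circ\relone$ follows from \eqref{rel-2}: if $\termone\ (\relthree\circ\relone)\ \termtwo$ via some $\termthree$, then $(\sem{\termone},\sem{\termtwo})\in \relator\relunit\circ\relator\relunit\subseteq\relator(\relunit\circ\relunit)=\relator\relunit$, the last equality holding since $\closedValues$ is nonempty. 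Consequently, if $\termone\ \ctxpreord_\relator\ \termtwo\ \ctxpreord_\relator\ \termthree$, witnessed by $\relone,\relthree\in\mathbb{C}\mathbb{A}$ respectively, then $\relthree\circ\relone\in\mathbb{C}\mathbb{A}$ witnesses $\termone\ \ctxpreord_\relator\ \termthree$; hence $\ctxpreord_\relator$ is transitive, and therefore a preorder.

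\emph{Compatibility of the union.} The single-premise clauses \eqref{Comp1}, \eqref{Comp2}, \eqref{Comp3} transfer directly: any instance of the premise for $\ctxpreord_\relator$ comes from a single member $\relone\in\mathbb{C}\mathbb{A}$, so the corresponding clause of $\relone$ supplies the conclusion, which again lies in $\ctxpreord_\relator$. The multi-premise clauses \eqref{Comp4}, \eqref{Comp5}, \eqref{Comp6} are the delicate point, since the premises may be witnessed by \emph{different} members of $\mathbb{C}\mathbb{A}$; here I combine reflexivity of the members with composition-closure (equivalently, one may invoke the unidirectional reformulations of Lemma~\ref{decompositionComp}, which is legitimate now that $\ctxpreord_\relator$ is known to be a preorder). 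For \eqref{Comp4}: given $\valone\ \relone\ \valone'$ and $\valtwo\ \relthree\ \valtwo'$ with $\relone,\relthree\in\mathbb{C}\mathbb{A}$, reflexivity yields $\valtwo\ \relone\ \valtwo$ and $\valone'\ \relthree\ \valone'$, so \eqref{Comp4} applied to $\relone$ and to $\relthree$ gives $\valone\valtwo\ \relone_\terms\ \valone'\valtwo$ and $\valone'\valtwo\ \relthree_\terms\ \valone'\valtwo'$, whence $\valone\valtwo\ (\relthree\circ\relone)_\terms\ \valone'\valtwo'$ with $\relthree\circ\relone\in\mathbb{C}\mathbb{A}$; clauses \eqref{Comp5} and \eqref{Comp6} are handled identically. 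The main obstacle is precisely this last step — the union of arbitrary compatible relations is \emph{not} compatible in general — and the whole argument hinges on the reflexivity-of-compatible-relations lemma together with the closure of $\mathbb{C}\mathbb{A}$ under relational composition.
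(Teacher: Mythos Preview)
Your proof is correct and follows essentially the same approach as the paper: establish that the identity relation lies in $\mathbb{C}\mathbb{A}$ (giving reflexivity and \eqref{Comp1}), prove closure of $\mathbb{C}\mathbb{A}$ under relational composition (giving transitivity and preadequacy of the union), handle the single-premise compatibility clauses directly, and reduce the multi-premise clauses \eqref{Comp4}--\eqref{Comp6} to the unidirectional versions of Lemma~\ref{decompositionComp} once the preorder structure is in place. Your explicit unfolding of the multi-premise step via reflexivity plus composition is exactly what that lemma packages, so there is no substantive difference.
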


\begin{proof}
  We prove that $\ctxpreord_\relator \in \mathbb{C}\mathbb{A}$. First of all
  note that $\mathbb{C}\mathbb{A}$ contains the open identity relation. In
  fact, the latter is clearly compatible. To see it is also preadequate
  suppose $\emptyset \imp \termone =_{\closedTerms} \termone$ so that
  $\sem{\termone} =_{\monad \values_0} \sem{\termone}$. By \eqref{rel-1}, we
  have $=_{\monad \closedValues} \subseteq \relator(=_{\closedValues})$.
  Moreover, by very definition of $\relunit$, we also have $=_{\values_0}
  \subseteq \relunit$ so that we can conclude $\sem{\termone}\ \relator
  \relunit\ \sem{\termone}$, by monotonicity of $\relator$. As a consequence,
  $\ctxpreord{\relator}$ satisfies
  \eqref{Comp1}. Observe also that $\eqref{Comp1}$ implies, in particular, 
  reflexivity of $\ctxpreord_\relator$. \\ We now show that
  $\ctxpreord_\relator$ satisfies
  \eqref{Comp2}. Suppose $(\vars \cup \{\varone\}, \termone, \termtwo) \in \ctxpreord_\relator$. That means 
  there exists a $\lambda$-term relation $\relone = (\reloneTerms,
  \reloneValues) \in \mathbb{C}\mathbb{A}$ such that $(\vars \cup \{\varone\},
  \termone, \termtwo) \in \reloneTerms$. Since $\relone$ is compatible it
  satisfies
  \eqref{Comp2}, and thus we have 
  $(\vars, \abs{\varone}{\termone}, \abs{\vartwo}{\termtwo}) \in
  \reloneValues$. It then follows $(\vars, \abs{\varone}{\termone},
  \abs{\varone}{\termtwo}) \in \ctxpreord_\relator$ (i.e. in its value
  component). Similarly, we can prove that $\ctxpreord_\relator$ satisfies
  \eqref{Comp3}.

  This approach does not work neither for \eqref{Comp4}, \eqref{Comp5} nor for
  \eqref{Comp6}. The reason is that all these clauses are multiple premises
  implications (and that badly interacts with the existential information
  obtained from being in $\ctxpreord_\relator$). Nonetheless, we can appeal to
  Lemma \ref{decompositionComp} to replace clauses \eqref{Comp4}-\eqref{Comp6}
  to single premiss implications (for which the proof works as for previous
  compatibility conditions). In order to use Lemma \ref{decompositionComp}, we
  need to prove that $\ctxpreord_\relator$ is transitive, and thus a preorder.
  For, it is sufficient to prove that $\mathbb{C}\mathbb{A}$ is closed under
  relation composition. The proof is rather standard and we just prove a
  couple of cases as examples.

  We first show that if $\relone = (\reloneTerms, \reloneTerms)$ and 
  $\relthree = (\relthreeTerms, \relthreeValues)$ are preadequate, then so is 
  $\relthree \circ \relone = (\relthreeTerms \circ \reloneTerms, \relthreeValues \circ \reloneValues)$. Suppose 
  $\emptyset \imp \termone\ \reloneTerms\ \termthree$ and $\emptyset \imp \termthree\ \relthreeTerms\ \termtwo$. Since 
  both $\relone$ and $\relthree$ are preadequate, we have $\sem{\termone}\ \relator \relunit\ \sem{\termthree}$ and 
  $\sem{\termthree}\ \relator \relunit\ \sem{\termtwo}$. By very definition of relator we have 
  $\relator \relunit \circ \relator \relunit \subseteq \relator(\relunit \circ \relunit)$. The latter is contained 
  in $\relator \relunit$, since $\relator$ is monotone and we trivially have $\relunit \circ \relunit$. 

  Proving that the composition of compatible relations is compatible is a straightforward exercise. 
  For instance, we show that if relations 
  $\relone = (\reloneTerms, \reloneValues), \relthree = (\relthreeTerms, \relthreeValues)$ satisfy 
  \eqref{Comp5}, then so does $\relthree \circ \relone$. For, suppose 
  $\vars \imp \termone\ (\relthreeTerms \circ \reloneTerms)\ \termone'$ and 
  $\vars \cup \{\varone\} \imp \termtwo\ (\relthreeTerms \circ \reloneTerms)\ \termtwo'$. As a consequence, we have 
  \begin{align}
    & \vars \imp \termone\ \reloneTerms\ \termone'' 
      \label{ctx-is-adequate-1}   \\
    & \vars \imp \termone''\  \relthreeTerms\ \termone'    
      \label{ctx-is-adequate-2}   \\
    & \vars \cup \{\varone\} \imp \termtwo\  \reloneTerms\ \termtwo''
      \label{ctx-is-adequate-3}  \\
    & \vars \cup \{\varone\} \imp \termtwo''\ \relthreeTerms\ \termtwo'.
      \label{ctx-is-adequate-4}
  \end{align}
  From \eqref{ctx-is-adequate-1} and \eqref{ctx-is-adequate-3} we infer 
  $\vars \imp \seq{\termone}{\termtwo}\ \reloneTerms\ \seq{\termone''}{\termtwo''}$, since 
  $\relone$ satisfies \eqref{Comp5}. Similarly, from \eqref{ctx-is-adequate-2} and 
  \eqref{ctx-is-adequate-4} we infer 
  $\vars \imp \seq{\termone''}{\termtwo''}\ \relthreeTerms\ \seq{\termone'}{\termtwo'}$. 
  We can conclude 
  $\vars \imp \seq{\termone}{\termtwo}\ (\relthreeTerms \circ \reloneTerms)\ \seq{\termone'}{\termtwo'}$.
\end{proof}

Finally, we define the notion of an \emph{applicative $\relator$-simulation}
observing that the collection of closed terms and values, together with
the operational semantics defined in previous section, carries an ATS
structure.

\begin{definition}\label{respectValues}
  A closed relation $\relone = (\reloneTerms, \reloneValues)$ respects values if for all
  closed values $\valone, \valtwo$, 
  $\valone\ \reloneValues\ \valtwo$ implies $\valone \valthree\ \reloneTerms\ \valtwo\valthree$,
  for any closed value $\valthree$.
\end{definition}

\begin{definition}\label{applicativeLTSofLambda}
  Define the ATS of closed $\lambda$-terms as follows:
  \begin{varitemize}
    \item The state space is given by the pair $(\closedTerms, \closedValues)$;
    \item The evaluation function is $\sem{\cdot} : \closedTerms \to \monad \closedValues$;
    \item The application function $\cdot : \closedValues \to \closedValues \to \closedTerms$ is 
      defined as term application: $\valone \cdot \valtwo = \valone \valtwo$.
  \end{varitemize}
\end{definition}
As a consequence, we can apply the general definition of applicative $\relator$-simulation to the 
ATS of $\lambda$-terms. Instantiating the general definition of applicative $\relator$-simulation we obtain:

\begin{definition}
  Let $\relator$ be a relator for the monad $\monad$. A closed relation $\relone = (\reloneTerms, \reloneValues)$ 
  is an applicative $\relator$-simulation if:
  \begin{varitemize}
    \item $\termone\ \reloneTerms\ \termtwo \implies \sem{\termone}\ \relator \reloneValues\ \sem{\termtwo}$;
    \item $\relone$ respects values. 
  \end{varitemize}
\end{definition}
We can then define applicative $\relator$-similarity $\similar_{\relator}$ as the 
largest applicative $\relator$-simulation, which we know to be 
a preorder by Proposition \ref{applicativePreorder}.
Most of the time the relator 
$\relator$ will be fixed; in those cases we will often write 
$\similar$ in place of $\similar_{\relator}$.

\begin{example}
  It is immediate to see that using the relators in Example \ref{relators} we recover well-known notions of 
  simulation and bisimulation. 
\end{example}
We want to prove that applicative similarity is a sound proof technique for contextual preorder. That is, 
we want to prove that $\similar_\relator\ \subseteq\ \ctxpreord_\relator$ holds. The 
relation $\ctxpreord_\relator$ being defined as the largest preadequate compatible relation,
the above inclusion is established by proving that 
$\similar_\relator$ is a precongruence. 

%%%%%%%%%%%%%%%%%%%%%%%%%%%%%%%%%%%%%%%%%%%%%
\section{Howe's Method and Its Soundness}
%%%%%%%%%%%%%%%%%%%%%%%%%%%%%%%%%%%%%%%%%%%%%
In this section we generalise Howe's technique to show that
applicative similarity is a precongruence, thus a sound proof
technique for the contextual preorder. Our generalisation shows how
Howe's method crucially (but only!) depends on the structure of the
monad modelling side-effects and the relators encoding their
associated notion of observation.
\begin{definition}\label{def:howe-def-1}
  Let $\relone$ be a closed $\lambda$-term relation. The Howe extension $\howe{\relone}$ of $\relone$ is defined as 
  the least relation 
  $\relthree$ such that $\relthree = \open{\relone} \circ \ctxclosure{\relthree}$. 
\end{definition}

It was observed in \cite{Levy/ENTCS/2006} that the above equation actually defines a unique relation.

\begin{lemma}
  Let $\relone$ be a closed $\lambda$-term relation. Then there is a unique relation 
  $\relthree$ such that $\relthree = \open{\relone} \circ \ctxclosure{\relthree}$.
\end{lemma}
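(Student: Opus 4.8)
The plan is to first establish that $\howe{\relone}$, defined as the least relation $\relthree$ with $\relthree = \open{\relone} \circ \ctxclosure{\relthree}$, exists by a standard fixed-point argument, and then show that this least fixed point is in fact the \emph{only} fixed point. For existence, I would consider the monotone operator $\Phi$ on the complete lattice of closed $\lambda$-term relations given by $\Phi(\relthree) = \open{\relone} \circ \ctxclosure{\relthree}$; monotonicity follows because both $\ctxclosure{(\cdot)}$ (the compatible refinement, being defined by the monotone rules of Figure~\ref{compatibleClosure}) and post-composition with a fixed relation are monotone, and intersection/composition of relations preserves inclusions. By Knaster--Tarski, $\Phi$ has a least fixed point, which we name $\howe{\relone}$; moreover the least fixed point equals the least \emph{pre}-fixed point, i.e. the least $\relthree$ with $\Phi(\relthree) \subseteq \relthree$.

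For uniqueness, the key observation is that $\ctxclosure{(\cdot)}$ is not merely monotone but is generated by a rule system whose rules each strictly decrease some well-founded measure on the syntax — concretely, the size (number of term/value constructors) of the terms being related. The plan is to show that any relation $\relthree$ satisfying $\relthree = \open{\relone} \circ \ctxclosure{\relthree}$ is contained in $\howe{\relone}$, which together with minimality of $\howe{\relone}$ forces equality. So suppose $\vars \imp \termone\ \relthree\ \termtwo$; I would prove $\vars \imp \termone\ \howe{\relone}\ \termtwo$ by induction on the size of $\termone$. Unfolding the fixed-point equation, there is an intermediate term $\termthree$ with $\vars \imp \termone\ \ctxclosure{\relthree}\ \termthree$ and $\vars \imp \termthree\ \open{\relone}\ \termtwo$. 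The last rule used to derive $\vars \imp \termone\ \ctxclosure{\relthree}\ \termthree$ has premises about \emph{strict subterms} of $\termone$ (e.g. for $\seq{\termone_1}{\termtwo_1}$ the premises concern $\termone_1$ and $\termtwo_1$), each of the form $\cdot\ \relthree\ \cdot$; by the induction hypothesis each such premise upgrades to $\cdot\ \howe{\relone}\ \cdot$; re-applying the same compatible-refinement rule yields $\vars \imp \termone\ \ctxclosure{\howe{\relone}}\ \termthree$, and then composing with $\vars \imp \termthree\ \open{\relone}\ \termtwo$ and using $\howe{\relone} = \open{\relone} \circ \ctxclosure{\howe{\relone}}$ gives $\vars \imp \termone\ \howe{\relone}\ \termtwo$. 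The base case is the variable rule, where $\ctxclosure{\relthree}$ relates $\varone$ to $\varone$ regardless of $\relthree$, so $\vars \imp \varone\ \ctxclosure{\howe{\relone}}\ \varone$ holds and the same composition step applies. The value-component argument (for $\abs{\varone}{\termone}$ and $\return{\valone}$) is entirely analogous, with the induction carried out simultaneously over terms and values.

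The main obstacle, and the point that needs care, is the interaction between the outer $\open{\relone}$ and the recursive occurrence of $\ctxclosure{\relthree}$: one must be sure that the induction measure genuinely decreases, i.e. that the compatible-refinement rules never reproduce $\termone$ itself in a premise. Inspecting Figure~\ref{compatibleClosure}, every rule except the variable axiom has premises whose left-hand terms are \emph{proper} subterms of the conclusion's left-hand term (possibly with a variable added to the context, which does not affect term size), so the measure strictly decreases; the variable axiom is the base case and requires no hypothesis. A secondary subtlety is bookkeeping for the added bound variable in the $\lambda$-abstraction and $\seq{\cdot}{\cdot}$ cases — here one works with the enlarged variable set $\vars \cup \{\varone\}$ in the premise, which is harmless since $\open{\relone}$ and $\ctxclosure{(\cdot)}$ are defined uniformly in the variable context. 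Once these points are settled the argument is routine, and it mirrors the uniqueness observation attributed to \cite{Levy/ENTCS/2006}.
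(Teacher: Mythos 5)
Your proposal is correct. Note that the paper itself does not prove this lemma at all --- it only states it, attributing the observation to Levy's cited work --- so there is no in-paper argument to compare against; your write-up supplies exactly the kind of proof being alluded to. The two ingredients are sound: Knaster--Tarski gives the least fixed point $\howe{\relone}$ of the monotone map $\relthree \mapsto \open{\relone} \circ \ctxclosure{\relthree}$, and your induction on the size of the left-hand term shows every fixed point is contained in $\howe{\relone}$, which together with minimality forces uniqueness. The crucial check --- that every rule of the compatible refinement except the variable axiom has premises whose left-hand terms are proper subterms of the conclusion's, so the premises of $\ctxclosure{\relthree}$ involve $\relthree$ only at strictly smaller terms --- is exactly right, and your handling of the enlarged variable context and of the composition with the outer $\open{\relone}$ (using the $\supseteq$ half of the fixed-point equation for $\howe{\relone}$) is fine. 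One could alternatively prove directly, by the same induction, that any two fixed points coincide, which makes the appeal to leastness unnecessary, but your route is equally valid and no more work.
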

As a consequence, $\howe{\relone}$ can be characterised both inductively and coinductively. 
Here we give two (well-known) equivalent inductive characterisations of $\howe{\relone}$. 

\begin{lemma}
  The following are equivalent and all define the relation $\howe{\relone}$.
  \begin{varenumerate}
    \item The Howe extension $\howe{\relone} = (\howe{\reloneTerms}, \howe{\reloneValues})$ of 
      $\relone$ is defined as the least relation closed under the following rules:
      \begin{center} 
        \AXC{$\vars \imp \termone\ \ctxclosure{\howe{\reloneTerms}}\ \termthree$}
        \AXC{$\vars \imp \termthree\ \open{\reloneTerms}\ \termtwo$}
        \BIC{$\vars \imp \termone\ \howe{\reloneTerms}\ \termtwo$}\DP $\quad$
        \AXC{$\vars \imp \valone\ \ctxclosure{\howe{\reloneValues}}\ \valthree$}
        \AXC{$\vars \imp \valthree\ \open{\reloneValues}\ \valtwo$}
        \BIC{$\vars \imp \valone\ \howe{\reloneValues}\ \valtwo$}\DP
      \end{center}
    \item The Howe extension $\howe{\relone}$ of $\relone$ is the relation inductively 
      defined by rules in Figure \ref{Howe-extension}.
  \end{varenumerate}
\end{lemma}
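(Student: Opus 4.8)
The plan is to recognise all the relations in play as the least pre-fixed point of one and the same monotone operator. Set $\Phi(\relthree) = \open{\relone}\circ\ctxclosure{\relthree}$ on the complete lattice of (open) $\lambda$-term relations. As $\ctxclosure{(\cdot)}$ is monotone and so are relational composition and $\open{(\cdot)}$, the operator $\Phi$ is monotone; hence by Knaster--Tarski its least pre-fixed point exists and coincides with its least fixed point, which by Definition~\ref{def:howe-def-1} (and, redundantly, by the uniqueness lemma just above) is exactly $\howe{\relone}$. It then suffices to check that the relations defined in items~1 and~2 are each the least pre-fixed point of $\Phi$.

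First I would treat item~1. Reading its two rules bottom-up, a $\lambda$-term relation $\relthree$ is closed under them precisely when $\termone\ \ctxclosure{\relthree}\ \termthree$ together with $\termthree\ \open{\relone}\ \termtwo$ forces $\termone\ \relthree\ \termtwo$, and likewise on values; that is, precisely when $\open{\relone}\circ\ctxclosure{\relthree}\subseteq\relthree$, i.e.\ when $\relthree$ is a pre-fixed point of $\Phi$. Since a single application of the two rules to $\relthree$ yields exactly the pairs of $\open{\relone}\circ\ctxclosure{\relthree}$ (the base case of the induction being supplied by the variable rule inside $\ctxclosure{(\cdot)}$), the least relation closed under the rules is the least pre-fixed point of $\Phi$, hence $\howe{\relone}$.

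Next I would treat item~2 in the same way, the only new ingredient being a comparison of rule sets: the rules of Figure~\ref{Howe-extension} are the compatible-refinement rules of Figure~\ref{compatibleClosure} rewritten so that the premises on the immediate sub-components now refer to $\howe{\relone}$ itself rather than to an auxiliary relation, the head relation $\ctxclosure{(\cdot)}$ is replaced by $\howe{\relone}$, and a trailing premise ``(reassembled term)\ $\open{\relone}$\ $\termtwo$'' is appended, with the conclusion retargeted to $\termtwo$. Unfolding the one-step definition of $\ctxclosure{(\cdot)}$ given by Figure~\ref{compatibleClosure}, this makes ``$\relthree$ is closed under the rules of Figure~\ref{Howe-extension}'' literally the condition $\open{\relone}\circ\ctxclosure{\relthree}\subseteq\relthree$ again. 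So the relation of item~2 is once more the least pre-fixed point of $\Phi$, whence items~1 and~2 define the same relation and both equal $\howe{\relone}$.

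The conceptual content is thus a single appeal to Knaster--Tarski; the one place needing care is the constructor-by-constructor comparison of Figures~\ref{compatibleClosure} and~\ref{Howe-extension} --- in particular keeping track of the variable contexts $\vars$ and the side conditions $\varone\notin\vars$ in the abstraction and sequencing cases --- so as to confirm that Figure~\ref{Howe-extension} encodes exactly ``one step of compatible refinement followed by one step of $\open{\relone}$''. I expect this bookkeeping, rather than anything deeper, to be the main obstacle.
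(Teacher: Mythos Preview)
Your proposal is correct and follows essentially the same approach as the paper: recognising that the rules in item~1 induce exactly the functional $\Phi(\relthree)=\open{\relone}\circ\ctxclosure{\relthree}$, and then arguing that the rules of Figure~\ref{Howe-extension} are just the constructor-by-constructor unfolding of that same functional. The paper phrases the second step as a direct induction rather than as a rule-set comparison, but the content is identical and, as the paper itself notes, the work is tedious bookkeeping rather than anything conceptually new.
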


\begin{proof}
  It is easy to see that the functional $\functional$ on $\lambda$-term
  relations associated to Definition \ref{def:howe-def-1} (i.e. defined by
  $\functional(\relthree) = \open{\relone} \circ \ctxclosure{\relthree}$) is
  also the functional induced by rules in point $1$. We can prove by induction
  the equivalence between the relations defined in point $1$ and point $2$ (in
  fact, these are both defined inductively). This is tedious but easy, and
  thus the proof is omitted.
\end{proof}

  \begin{figure}
  \begin{mdframed}
  \begin{center}
  \begin{tabular}{c}
  \AXC{$\vars \imp \varone\ \open{\reloneValues}\ \valone$} \RL{$\howeVar$} 
  \UIC{$\vars \imp \varone\ \howe{\reloneValues}\ \valone$} \DP  
  \\ \\
  \AXC{$\vars \cup \{\varone\} \imp \termone\ \howe{\reloneTerms}\ \termthree$} 
  \AXC{$\vars \imp \abs{\varone}{\termthree}\ \open{\reloneValues}\ \valone$} \RL{$\howeAbs$}
  \BIC{$\vars \imp \abs{\varone}{\termone}\ \howe{\reloneValues}\ \valone$}\DP 
  \\ \\ 
  \AXC{$\vars \imp \valone\ \howe{\reloneValues}\ \valtwo$} 
  \AXC{$\vars \imp \return{\valtwo}\ \open{\reloneTerms}\ \termtwo$} \RL{$\howeRet$} 
  \BIC{$\vars \imp \return{\valone}\ \howe{\reloneTerms}\ \termtwo$} \DP 
  \\ \\
  \AXC{$\vars \imp \valone\ \howe{\reloneValues}\ \valone'$}
  \AXC{$\vars \imp \valtwo\ \howe{\reloneValues}\ \valtwo'$}
  \AXC{$\vars \imp \valone'\valtwo'\ \open{\reloneTerms}\ \termtwo$} \RL{$\howeApp$} 
  \TIC{$\vars \imp \valone \valtwo\ \howe{\reloneTerms}\ \termtwo$}\DP  
  \\ \\
  \AXC{$\vars \imp \termone\ \howe{\reloneTerms}\ \termthree$}
  \AXC{$\vars \cup \{\varone\} \imp \termone'\ \howe{\reloneTerms}\ \termthree'$} 
  \AXC{$\vars \imp \seq{\termthree}{\termthree'}\ \open{\reloneTerms}\ \termtwo$} \RL{$\howeSeq$}
  \TIC{$\vars \imp \seq{\termone}{\termone'}\ \howe{\reloneTerms}\ \termtwo$}\DP
  \\ \\
  \AXC{$\vars \imp \termone_k\ \howe{\reloneTerms}\ \termtwo_k\ (\forall k \geq n)$}
  \AXC{$\vars \imp \op(\termtwo_1, \hh, \termtwo_n)\ \open{\reloneTerms}\ \termtwo$} \RL{$\howeOp$}
  \BIC{$\vars \imp \op(\termone_1, \hh, \termone_n)\ \howe{\reloneTerms}\ \termtwo$} \DP
  \end{tabular}
  \end{center}
  \end{mdframed}
    \caption{Howe's Extension Rules.}
    \label{Howe-extension}
  \end{figure}

The following lemma states some nice properties of Howe's lifting of preorder
relations. The proof is standard and can be found in, e.g.,
\cite{DalLagoSangiorgiAlberti/POPL/2014}.

\begin{lemma}
  Let $\relone$ be a preorder. The following hold:
  \begin{varenumerate}
    \item $\relone \circ \howe{\relone} \subseteq \howe{\relone}$.
    \item $\howe{\relone}$ is compatible, and thus reflexive.
    \item $\relone \subseteq \howe{\relone}$.
  \end{varenumerate}
\end{lemma}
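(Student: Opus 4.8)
The three properties are proved in order, each one essentially by induction on the derivation of $\howe{\relone}$ using the rules in Figure \ref{Howe-extension}, exploiting that $\relone$ is a preorder (so $\open{\relone}$ is reflexive and transitive) and that $\ctxclosure{\cdot}$ is monotone.

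For item 1, I would argue by induction on the derivation of $\vars \imp \termone\ \howe{\reloneTerms}\ \termtwo$ (and simultaneously on values). Each rule concludes with a judgment of the form $\vars \imp \cdots\ \howe{\reloneTerms}\ \termtwo$ whose last premise is $\vars \imp \termfour\ \open{\reloneTerms}\ \termtwo$ for some intermediate term $\termfour$; given $\vars \imp \termtwo\ \open{\reloneTerms}\ \termfive$ (the hypothesis of item 1, lifted to the open relation), transitivity of $\open{\relone}$ yields $\vars \imp \termfour\ \open{\reloneTerms}\ \termfive$, and re-applying the same Howe rule with this new last premise (the other premises are unchanged) gives $\vars \imp \cdots\ \howe{\reloneTerms}\ \termfive$. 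So item 1 is immediate from the shape of the rules once one notes that composing $\open{\relone}$ on the right of a Howe derivation just replaces the trailing $\open{\relone}$-step. One subtlety: the statement is about the \emph{closed} relation $\relone$ composed with $\howe{\relone}$, so I should be careful to phrase the induction for the open versions $\open{\relone} \circ \howe{\relone} \subseteq \howe{\relone}$, which also covers the closed case.

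For item 2, I must check each compatibility clause \eqref{Comp1}--\eqref{Comp6}. The key observation is that for each constructor there is a corresponding Howe rule, and since $\relone$ is a preorder, $\open{\relone}$ is reflexive; so e.g. to prove \eqref{Comp2}, from $\vars \cup \{\varone\} \imp \termone\ \howe{\reloneTerms}\ \termtwo$ I take the instance of $\howeAbs$ with intermediate term $\termtwo$ and last premise $\vars \imp \abs{\varone}{\termtwo}\ \open{\reloneValues}\ \abs{\varone}{\termtwo}$ (reflexivity), concluding $\vars \imp \abs{\varone}{\termone}\ \howe{\reloneValues}\ \abs{\varone}{\termtwo}$. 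The other clauses (\eqref{Comp1} via $\howeVar$ and reflexivity, \eqref{Comp3} via $\howeRet$, \eqref{Comp4} via $\howeApp$, \eqref{Comp5} via $\howeSeq$, \eqref{Comp6} via $\howeOp$) follow the same pattern: feed the Howe subderivations as the non-trailing premises and use reflexivity of $\open{\relone}$ for the trailing premise. Reflexivity of $\howe{\relone}$ is then a corollary of compatibility (a compatible relation contains the identity, by induction on term/value structure), or can be read off directly from \eqref{Comp1} plus closure under constructors.

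For item 3, from $\vars \imp \termone\ \open{\reloneTerms}\ \termtwo$ I use reflexivity of $\howe{\relone}$ (item 2) to get $\vars \imp \termone\ \ctxclosure{\howe{\reloneTerms}}\ \termone$ — actually more directly, $\howe{\relone} \supseteq \open{\relone} \circ \ctxclosure{\howe{\relone}} \supseteq \open{\relone} \circ (\text{identity})= \open{\relone}$, using that $\ctxclosure{\howe{\relone}}$ is reflexive because $\howe{\relone}$ is compatible hence reflexive. So item 3 is a one-line consequence of item 2 together with the defining equation $\howe{\relone} = \open{\relone} \circ \ctxclosure{\howe{\relone}}$. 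I do not anticipate a genuine obstacle here; the only mild care needed is bookkeeping the simultaneous induction over the term- and value-components and the free-variable sets $\vars$, and making sure that when a rule extends $\vars$ by a bound variable $\varone$ the induction hypothesis is applied at the enlarged variable set. The real work of Howe's method — the Key Lemma relating $\howe{\similar}$ to the operational semantics via the relator — comes later and is not part of this lemma.
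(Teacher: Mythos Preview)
Your proposal is correct and follows the standard argument that the paper itself defers to (the paper gives no proof, citing \cite{DalLagoSangiorgiAlberti/POPL/2014}). One small sharpening: for item 1 you do not actually need induction on the derivation---a case analysis on the last rule suffices, since (as you yourself note) only the trailing $\open{\relone}$-premise is affected and the other premises are passed through unchanged; and for item 3 your second argument via the fixed-point equation $\howe{\relone} = \open{\relone} \circ \ctxclosure{\howe{\relone}}$ and reflexivity of $\ctxclosure{\howe{\relone}}$ is the clean one.
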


\begin{remark}
  To prove properties $2$ and $3$ it is actually sufficient to require
  $\relone$ to be reflexive, whereas property $1$, which we refer to
  transitivity of $\howe{\relone}$ wrt $\relone$, requires $\relone$ to be
  transitive. It is easy to see that a compatible relation is also reflexive.
\end{remark}

We now consider the Howe extension $\howe{\similar_\relator}$ of applicative
$\relator$-similarity. Since $\similar_\relator$ is a preorder (Proposition
\ref{applicativePreorder}), $\howe{\similar_\relator}$ is a compatible
relation containing $\similar_\relator$.

\begin{definition}
  A $\lambda$-term relation $\relone = (\reloneTerms, \reloneValues)$ is value-substitutive if 
  $\varone \imp \termone\ \reloneTerms\ \termtwo$ and $\emptyset \imp \valone\ \reloneValues\ \valtwo$ imply 
  $\emptyset \imp \substcomp{\termone}{\varone}{\valone}\ \reloneTerms\ \substcomp{\termtwo}{\varone}{\valtwo}$.
\end{definition}

\begin{lemma}\label{howeSubstitutive}
  The relation $\howe{\similar_\relator}$ is value-substitutive.
\end{lemma}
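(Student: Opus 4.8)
The plan is to prove, by a single simultaneous induction on the derivations of Figure~\ref{Howe-extension}, a strengthening that carries an ambient context and permits the substituted values to be open — the bare closed statement does not survive the rules that bind variables. Precisely: for every finite set $\vars$ of variables, every $\varone \notin \vars$, and all $\valone, \valtwo \in \values(\vars)$ with $\vars \imp \valone\ \howe{\similarValues}\ \valtwo$, we claim (i) $\vars \cup \{\varone\} \imp \termone\ \howe{\similarTerms}\ \termtwo$ implies $\vars \imp \substcomp{\termone}{\varone}{\valone}\ \howe{\similarTerms}\ \substcomp{\termtwo}{\varone}{\valtwo}$, and (ii) $\vars \cup \{\varone\} \imp \valthree\ \howe{\similarValues}\ \valthree'$ implies $\vars \imp \subst{\valthree}{\valone}{\varone}\ \howe{\similarValues}\ \subst{\valthree'}{\valtwo}{\varone}$. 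Lemma~\ref{howeSubstitutive} is clause~(i) with $\vars = \emptyset$. The two clauses must be proved together, since $\howeAbs$ sends a term premise to a value conclusion while $\howeRet$ and $\howeApp$ send value premises to term conclusions.

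Before the induction I would record three routine facts, using that $\similar_\relator$ is a preorder (Proposition~\ref{applicativePreorder}), so $\howe{\similar_\relator}$ is compatible and hence reflexive. \textbf{(W)} $\howe{\similar_\relator}$, $\open{\similar_\relator}$, and the operator $\ctxclosure{\cdot}$ are monotone in the variable context, so any hypothesis can be weakened by fresh variables. \textbf{(S)} The open extension commutes with substituting one fixed value on both sides: if $\vars \cup \{\varone\} \imp \termone\ \open{\similarTerms}\ \termtwo$ and $\valthree \in \values(\vars)$, then $\vars \imp \substcomp{\termone}{\varone}{\valthree}\ \open{\similarTerms}\ \substcomp{\termtwo}{\varone}{\valthree}$, and likewise for values — immediate from the definition of $\open{\cdot}$ and the substitution lemma, since $\varone \notin \vars$. \textbf{(A)} Right absorption of the open extension: if $\vars \imp \termone\ \howe{\similarTerms}\ \termtwo$ and $\vars \imp \termtwo\ \open{\similarTerms}\ \termthree$, then $\vars \imp \termone\ \howe{\similarTerms}\ \termthree$, and likewise for values; this follows from $\howe{\relone} = \open{\relone} \circ \ctxclosure{\howe{\relone}}$ and transitivity of $\open{\similar_\relator}$ (a consequence of transitivity of $\similar_\relator$), or equivalently by re-applying the last rule of the given derivation with its rightmost $\open{\similar_\relator}$-premise extended by transitivity.

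The crux is rule $\howeVar$, the only rule where substituting \emph{related} (rather than equal) values cannot be re-absorbed by simply re-applying the rule. Suppose $\vars \cup \{\varone\} \imp \vartwo\ \howe{\similarValues}\ \valthree'$ is concluded by $\howeVar$ from $\vars \cup \{\varone\} \imp \vartwo\ \open{\similarValues}\ \valthree'$, with $\vartwo \in \vars \cup \{\varone\}$. By \textbf{(S)} with the value $\valtwo$, $\vars \imp \subst{\vartwo}{\valtwo}{\varone}\ \open{\similarValues}\ \subst{\valthree'}{\valtwo}{\varone}$. If $\vartwo \neq \varone$ then $\subst{\vartwo}{\valtwo}{\varone} = \vartwo = \subst{\vartwo}{\valone}{\varone}$, and applying $\howeVar$ again gives $\vars \imp \subst{\vartwo}{\valone}{\varone}\ \howe{\similarValues}\ \subst{\valthree'}{\valtwo}{\varone}$. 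If $\vartwo = \varone$ then $\subst{\vartwo}{\valone}{\varone} = \valone$ and $\subst{\vartwo}{\valtwo}{\varone} = \valtwo$, so we must show $\vars \imp \valone\ \howe{\similarValues}\ \subst{\valthree'}{\valtwo}{\varone}$; this is obtained by composing the hypothesis $\vars \imp \valone\ \howe{\similarValues}\ \valtwo$ with $\vars \imp \valtwo\ \open{\similarValues}\ \subst{\valthree'}{\valtwo}{\varone}$ via \textbf{(A)}. For all remaining rules the recipe is uniform: apply the induction hypotheses~(i)/(ii) to the $\howe{\similar_\relator}$-premises, apply \textbf{(S)} (with the value $\valtwo$) to the rightmost $\open{\similar_\relator}$-premise, and feed the results back into the same rule. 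Rule $\howeOp$ is immediate because substitution distributes over $\op$; rules $\howeAbs$ and $\howeSeq$ additionally require weakening, via \textbf{(W)}, the hypothesis $\vars \imp \valone\ \howe{\similarValues}\ \valtwo$ past the bound variable — which, by Barendregt's convention, may be taken distinct from $\varone$ and from $\fv{\valone} \cup \fv{\valtwo}$ — before invoking the induction hypothesis on the premise in the enlarged context.

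I expect the only genuine difficulty to be this interplay inside $\howeVar$: the open extension transports only a single fixed value across both sides, so substituting genuinely $\howe{\similar_\relator}$-related values must be routed through the right-absorption law \textbf{(A)}. Everything else — synchronising the two substituted values through all rules, and weakening them correctly past the binders of $\howeAbs$ and $\howeSeq$ — is routine bookkeeping.
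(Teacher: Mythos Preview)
Your proof is correct and is exactly the standard argument the paper defers to (the paper's own proof is just the one line ``The proof is standard, see e.g.\ \cite{DalLagoSangiorgiAlberti/POPL/2014}''). Your generalisation to open contexts, the three auxiliary facts \textbf{(W)}, \textbf{(S)}, \textbf{(A)} (the last being precisely the paper's $\open{\relone} \circ \howe{\relone} \subseteq \howe{\relone}$), and the treatment of the critical $\howeVar$ case via right absorption are all the expected ingredients of the standard proof.
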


\begin{proof}
  The proof is standard, see e.g. \cite{DalLagoSangiorgiAlberti/POPL/2014}.
\end{proof}

Summing up, we have defined a compatible relation $\howe{\similar_\relator}$
which is value-substitutive and contains $\similar_\relator$. As a
consequence, to prove that the latter is compatible it is sufficient to prove
$\howe{\similar_\relator} \subseteq \similar_\relator$. We can proceed
coinductively, showing that $\howe{\similar_\relator}$ is an applicative
$\relator$-simulation. This is proved via the so-called Key Lemma. Before
proving the Key Lemma it is useful to spell out basic facts on the Howe
extension of applicative similarity that we will extensively use. In the
following we assume to have fixed a relator $\relator$, thus omitting
subscripts. Let $\relator$ be a relator.

\begin{lemma}\label{howeGammaTransitivity}
  The following hold:
  \begin{varenumerate}
    \item $\similar \circ \howesimilar\ \subseteq\   \howesimilar $.
    \item $(\Gamma \similar) \circ (\Gamma \howesimilar)   \subseteq \Gamma \howesimilar$.
  \end{varenumerate}
\end{lemma}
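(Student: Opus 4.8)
The plan is to obtain both inclusions from the relator axioms together with the transitivity of $\howesimilar$ with respect to $\similar$ established earlier. Statement~1 is immediate: $\similar = \similar_\relator$ is a preorder by Proposition~\ref{applicativePreorder}, so the lemma on Howe's lifting of preorders --- specifically its property $\relone \circ \howe{\relone} \subseteq \howe{\relone}$ --- applied to $\relone := \similar$ gives exactly $\similar \circ \howesimilar \subseteq \howesimilar$. So nothing new is needed there.

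For statement~2 I would chain three inclusions. First, relator axiom~\eqref{rel-2}, instantiated with $\relthree := \similar$ and $\relone := \howesimilar$, yields $(\Gamma \similar) \circ (\Gamma \howesimilar) \subseteq \Gamma(\similar \circ \howesimilar)$. Next, statement~1 gives $\similar \circ \howesimilar \subseteq \howesimilar$. Finally, monotonicity of the relator, axiom~\eqref{rel-4}, turns the latter inclusion into $\Gamma(\similar \circ \howesimilar) \subseteq \Gamma \howesimilar$. Composing the three inclusions gives $(\Gamma \similar) \circ (\Gamma \howesimilar) \subseteq \Gamma \howesimilar$, as claimed.

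There is essentially no genuine obstacle here; the only thing worth spelling out is the bookkeeping about which relations are being composed and lifted. In statement~2 the relator must be read as applied to the \emph{value} components, i.e.\ $\Gamma \similar_\values$ and $\Gamma \howesimilar_\values$, since it is value relations on $\closedValues$ that get lifted to relations on $\monad \closedValues$ in the definition of an applicative $\relator$-simulation; and although $\howesimilar$ is defined as an open relation, in this (closed) setting its value component is a relation on $\closedValues$, so that the compositions $\similar \circ \howesimilar$ and $(\Gamma\similar)\circ(\Gamma\howesimilar)$, as well as the applications of \eqref{rel-2} and \eqref{rel-4}, all take place between relations on $\closedValues$ (resp.\ $\monad\closedValues$), exactly as the relator operations require. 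Once this is pinned down the proof is the two-line chain above.
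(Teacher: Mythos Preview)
Your argument is correct and is precisely the intended one. The paper does not spell out a proof of this lemma, but the two-line derivation you give --- part~1 is a direct instance of the earlier property $\relone \circ \howe{\relone} \subseteq \howe{\relone}$ for the preorder $\relone = \similar$, and part~2 follows by chaining \eqref{rel-2}, part~1, and \eqref{rel-4} --- is exactly what the paper has in mind, and your remark about the value components being the relations that get lifted is the right reading.
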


\begin{lemma}[Key Lemma]
  Let $\howe{\similar} = (\howe{\similarTerms}, \howe{\similarValues})$ be the Howe extension of 
  applicative similarity. 
  If $\emptyset \imp \termone \howe{\similarTerms} \termtwo$ and $\evalTo{\termone}{n}{\compone}$, then 
  $\compone\ \relator\howe{\similarValues}\ \sem{\termtwo}$.
\end{lemma}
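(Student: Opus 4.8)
The plan is to prove the Key Lemma by induction on the approximation index $n$. For $n=0$, rule $\ruleBot$ forces $\compone=\bot$, and $\bot\ \relator\howe{\similarValues}\ \sem{\termtwo}$ holds by \eqref{omegaComp1}, $\relator$ being inductive. For $n=m+1$, since the big-step system of Figure \ref{bigStepSemantics} is syntax-directed, I would do a case analysis on the shape of $\termone$ and, in each case, invert the unique rule of Figure \ref{Howe-extension} that can derive $\emptyset\imp\termone\ \howe{\similarTerms}\ \termtwo$. In every case the argument has the same shape: unfold $\compone$ via the matching big-step rule; recover through Howe-inversion an intermediate term $\termthree$ (or a tuple thereof) whose components are $\howe{\similarTerms}$-related to those of $\termone$ and with $\termthree\ \similarTerms\ \termtwo$; apply the induction hypothesis at index $m$ to the subcomputations; and finally absorb the trailing step into $\relator\howe{\similarValues}$ using Lemma \ref{howeGammaTransitivity}(2), unfolding $\sem{\cdot}$ of the outer construct via Lemma \ref{semEquations} along the way.

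The cases $\termone=\return{\valone}$ and $\termone=\op(\termone_1,\hh,\termone_k)$ are immediate with this recipe, using \eqref{Lax-Unit} (respectively, $\relator$ respecting $\signature$, \eqref{sigmaComp}, together with the induction hypothesis on each $\termone_i$) for the structural step. The two substantive cases are application and sequencing, and both rely on value-substitutivity of $\howe{\similar}$ (Lemma \ref{howeSubstitutive}) to push the induction hypothesis through a substitution. For $\termone=\valone_1\valone_2$, necessarily $\valone_1=\abs{\varone}{\termone'}$ as $\termone$ is closed, so $\ruleApp$ gives $\evalTo{\substcomp{\termone'}{\varone}{\valone_2}}{m}{\compone}$; inverting $\howeApp$ and then $\howeAbs$ yields $\valtwo_1,\valtwo_2,\termthree'$ with $\{\varone\}\imp\termone'\ \howe{\similarTerms}\ \termthree'$, $\emptyset\imp\valone_2\ \howe{\similarValues}\ \valtwo_2$, $\abs{\varone}{\termthree'}\ \similarValues\ \valtwo_1$, and $\valtwo_1\valtwo_2\ \similarTerms\ \termtwo$. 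Value-substitutivity gives $\substcomp{\termone'}{\varone}{\valone_2}\ \howe{\similarTerms}\ \substcomp{\termthree'}{\varone}{\valtwo_2}$; the induction hypothesis then gives $\compone\ \relator\howe{\similarValues}\ \sem{\substcomp{\termthree'}{\varone}{\valtwo_2}}$; and since $\abs{\varone}{\termthree'}\ \similarValues\ \valtwo_1$ gives $(\abs{\varone}{\termthree'})\valtwo_2\ \similarTerms\ \valtwo_1\valtwo_2$, Lemma \ref{semEquations}, $\valtwo_1\valtwo_2\ \similarTerms\ \termtwo$ and transitivity of $\relator\similarValues$ yield $\sem{\substcomp{\termthree'}{\varone}{\valtwo_2}}\ \relator\similarValues\ \sem{\termtwo}$, and Lemma \ref{howeGammaTransitivity}(2) concludes.

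For $\termone=\seq{\termone_1}{\termone_2}$, rule $\ruleSeq$ gives $\compone=\compthree\bind(\fun{\valone}{\comptwo_\valone})$ with $\evalTo{\termone_1}{m}{\compthree}$ and $\evalTo{\substcomp{\termone_2}{\varone}{\valone}}{m}{\comptwo_\valone}$ for every closed value $\valone$; inverting $\howeSeq$ gives $\termthree_1,\termthree_2$ with $\emptyset\imp\termone_1\ \howe{\similarTerms}\ \termthree_1$, $\{\varone\}\imp\termone_2\ \howe{\similarTerms}\ \termthree_2$, and $\seq{\termthree_1}{\termthree_2}\ \similarTerms\ \termtwo$. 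The induction hypothesis on the first premise gives $\compthree\ \relator\howe{\similarValues}\ \sem{\termthree_1}$; for the continuation, whenever $\valone\ \howe{\similarValues}\ \valtwo$, value-substitutivity followed by the induction hypothesis gives $\comptwo_\valone\ \relator\howe{\similarValues}\ \sem{\substcomp{\termthree_2}{\varone}{\valtwo}}$; so \eqref{Lax-Bind} yields $\compone\ \relator\howe{\similarValues}\ \sem{\termthree_1}\bind(\fun{\valone}{\sem{\substcomp{\termthree_2}{\varone}{\valone}}})=\sem{\seq{\termthree_1}{\termthree_2}}$ by Lemma \ref{semEquations}, and one final Lemma \ref{howeGammaTransitivity}(2) step absorbs $\sem{\seq{\termthree_1}{\termthree_2}}\ \relator\similarValues\ \sem{\termtwo}$.

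I expect the sequencing case to be the main obstacle: it is the only place where the lax interaction of the relator with the monad's bind, via \eqref{Lax-Bind}, must be combined with value-substitutivity of $\howe{\similar}$, and it is exactly this step that, in concrete instances, exercises the hard properties of $\relator$ as a relator for $\monad$ (for the probabilistic relator this is where Max-Flow-Min-Cut is needed, cf.\ the remark after Definition \ref{laxExtension}). The remaining work — Howe-rule inversion, unfolding $\sem{\cdot}$ through Lemma \ref{semEquations}, transitive absorption of the $\similarTerms$-steps through Lemma \ref{howeGammaTransitivity}(2), and keeping track of which terms are closed — is routine.
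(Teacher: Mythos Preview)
Your proposal is correct and follows essentially the same approach as the paper: induction on the approximation index (equivalently, on the derivation of $\evalTo{\termone}{n}{\compone}$, the system being syntax-directed), with the same case analysis, the same Howe-rule inversions, and the same uses of value-substitutivity, \eqref{Lax-Unit}, \eqref{Lax-Bind}, \eqref{sigmaComp}, Lemma~\ref{semEquations}, and Lemma~\ref{howeGammaTransitivity}. Your identification of the sequencing case as the place where \eqref{Lax-Bind} does the real work is apt, and your citation of \eqref{omegaComp1} in the base case is the right one.
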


\begin{proof}
  We proceed by induction on the derivation of the judgment $\evalTo{\termone}{n}{\compone}$.
  \begin{description}
    \item[\textbf{Case} $\ruleBot$.] Suppose to have $\evalTo{\termone}{0}{\bot}$. We are done since 
      $\relator$ is inductive, and thus $\bot\ \relator\howe{\similarValues}\ \sem{\termtwo}$ trivially holds 
      (see property \eqref{omegaComp2}).
    \item[Case $\ruleReturn$.] Suppose to have $\evalTo{\return{\valone}}{n+1}{\unit{\valone}}$. 
      By hypothesis we have 
      $\emptyset \imp \return{\valone} \howe{\similarTerms} \termtwo$, so that the latter must have been obtained 
      as the conclusion of an instance of rule $\howeRet$. As a consequence, we have 
      $\emptyset \imp \valone \howe{\similarValues} \valtwo$ and $\return{\valtwo} \similarTerms \termtwo$, for 
      some value $\valtwo$. 
      We can now appeal to \eqref{Lax-Unit}, thus inferring 
      $\unit{\valone}\ \relator\howe{\similarValues}\ \unit{\valtwo}$ from 
      $\emptyset \imp \valone \howe{\similarValues} \valtwo$.
      By very definition of applicative similarity, $\return{\valtwo} \similarTerms \termtwo$ implies 
      $\sem{\return{\valtwo}}\ \relator\similarValues\ \sem{\termtwo}$, 
      i.e. $\unit{\valtwo}\ \relator\similarValues\ \sem{\termtwo}$. Therefore, we have 
      $\unit{\valone}\ (\relator\similarValues) \circ (\relator\howe{\similarValues})\ \sem{\termtwo}$, 
      from which the thesis follows by Lemma \ref{howeGammaTransitivity}.
    \item[\textbf{Case} $\ruleApp$.] Suppose the judgment 
      $\evalTo{(\abs{\varone}{\termone})\valone}{n+1}{\compone}$ has been obtained 
      from the judgment $\evalTo{\substcomp{\termone}{\varone}{\valone}}{n}{\compone}$. By   
      hypothesis we have $\emptyset \imp (\abs{\varone}{\termone})\valone \howe{\similarTerms} \termtwo$, meaning that the 
      latter must have been obtained as the conclusion of an instance of $\howeApp$. We thus obtain 
      $\emptyset \imp \abs{\varone}{\termone} \howe{\similarValues} \valtwo$,
      $\emptyset \imp \valone \howe{\similarValues} \valthree$ and 
      $\valtwo \valthree \similarTerms \termtwo$,
      for values $\valtwo, \valthree$. Looking at the first of these three judgments, we see that it 
      must be the conclusion of an instance of rule $\howeAbs$. Therefore, we have
      $\{\varone\} \imp \termone \howe{\similarTerms} \termthree$ and 
      $\abs{\varone}{\termthree} \similarValues \valtwo$.
      Since $\howe{\similar}$ is value-substitutive, from 
      $\{\varone\} \imp \termone \howe{\similarTerms} \termthree$ and
      $\emptyset \imp \valone \howe{\similarValues} \valthree$  
      we conclude  
      $\emptyset \imp \substcomp{\termone}{\varone}{\valone} \howe{\similarTerms} \substcomp{\termthree}{\varone}{\valthree}$.
      We can now apply the induction hypothesis on the latter and $\evalTo{\substcomp{\termone}{\varone}{\valone}}{n}{\compone}$, 
      obtaining $\compone\ \relator(\howe{\similarValues})\ \sem{\substcomp{\termthree}{\varone}{\valthree}}$.
      Since $\similar$ respects values, from 
      $\abs{\varone}{\termthree} \similarValues \valtwo$ we infer 
      $(\abs{\varone}{\termthree})\valthree \similarTerms \valtwo \valthree$, which gives, by very definition of applicative 
      similarity, $\sem{(\abs{\varone}{\termthree})\valthree}\ \relator(\similarValues)\ \sem{\valtwo \valthree}$. By Lemma 
      \ref{semEquations}, $\sem{(\abs{\varone}{\termthree})\valthree} = \sem{\substcomp{\termthree}{\varone}{\valthree}}$, 
      and thus, $\compone\ \relator(\howe{\similarValues})\ \sem{\valtwo \valthree}$, by Lemma \ref{howeGammaTransitivity}. 
      Finally, from $\valtwo \valthree \similarValues \termtwo$ we obtain 
      $\sem{\valtwo \valthree}\ \relator(\similarValues)\ \sem{\termtwo}$, which allows us to conclude 
      $\compone\ \relator(\similarValues)\ \sem{\termtwo}$ by Lemma \ref{howeGammaTransitivity}.
    \item[\textbf{Case} $\ruleSeq$.] Suppose the judgment 
      $\evalTo{(\seq{\termone}{\termone'})}{n+1}{\compone \bind (\fun{\valone}{\comptwo_{\valone}})}$ has been obtained from 
      $\evalTo{\termone}{n}{\compone}$ and $\evalTo{\substcomp{\termone'}{\varone}{\valone}}{n}{\comptwo_{\valone}}$. By 
      hypothesis we have $\emptyset \imp \seq{\termone}{\termone'} \howe{\similarTerms} \termtwo$, which must have been obtained 
      via an instance of rule $\howeSeq$ thus giving  
      $\emptyset \imp \termone \howe{\similarTerms} \termthree$, $\{\varone\} \imp \termone' \howe{\similarTerms} \termthree'$ 
      and $\emptyset \imp \seq{\termthree}{\termthree'}\ \similarTerms\ \termtwo$. We can apply the induction hypothesis on 
      $\evalTo{\termone}{n}{\compone}$ and $\emptyset \imp \termone \howe{\similarTerms} \termthree$ obtaining 
      $\compone\ \relator(\howe{\similarValues})\ \sem{\termthree}$. We now claim to have 
      $$\compone \bind (\fun{\valone}{\comptwo_{\valone}})\ \relator(\howe{\similarValues})\ 
      \sem{\termthree} \bind (\fun{\valone}{\sem{\substcomp{\termthree'}{\varone}{\valone}}}).$$
      The latter is equal to $\sem{\seq{\termthree}{\termthree'}}$, by Lemma \ref{semEquations}. Besides, 
      $\emptyset \imp \seq{\termthree}{\termthree'} \similarTerms \termtwo$ entails 
      $\sem{\seq{\termthree}{\termthree'}}\ \relator(\similarValues)\ \sem{\termtwo}$: we conclude 
      $\compone \bind (\fun{\valone}{\comptwo_{\valone}})\ \relator(\howe{\similarValues})\ \sem{\termtwo}$, 
      by Lemma \ref{howeGammaTransitivity}. \\
      The above claim directly follows from \eqref{Lax-Bind}. In fact, since 
      $\compone\ \relator(\howe{\similarValues})\ \sem{\termthree}$ holds,
      by \eqref{Lax-Bind} it is sufficient to prove that $\valone \howe{\similarValues} \valtwo$ implies 
      $\comptwo_{\valone}\ \relator(\howe{\similarValues})\ \sem{\substcomp{\termthree'}{\varone}{\valtwo}}$. Assume 
       $\valone \howe{\similarValues} \valtwo$, i.e. $\emptyset \imp \valone \howe{\similarValues} \valtwo$. 
      The latter, together with $\{\varone\} \imp \termone' \howe{\similarTerms} \termthree'$, implies 
      $\emptyset \imp \substcomp{\termone'}{\varone}{\valone} \howe{\similarTerms} \substcomp{\termthree'}{\varone}{\valtwo}$, 
      since $\howe{\similar}$ is value-substitutive. We can finally apply the inductive hypothesis on the latter and 
      $\evalTo{\substcomp{\termone'}{\varone}{\valone}}{n}{\comptwo_{\valone}}$, thus concluding the wanted thesis.
    \item[\textbf{Case} $\ruleOp$.] Suppose the judgment 
      $\evalTo{\op(\termone_1, \hh, \termone_k)}{n+1}{\opT(\compone_1, \hh, \compone_k)}$ 
      has been obtained from $\evalTo{\termone_1}{n}{\compone_1}, \hh, \evalTo{\termone_k}{n}{\compone_k}$. 
      By hypothesis we have 
      $\emptyset \imp \op(\termone_1, \hh, \termone_k) \howe{\similarTerms} \termtwo$, 
      which must be the conclusion of an instance of rule $\howeOp$. As a consequence, judgments 
      $\emptyset \imp \termone_1 \howe{\similarTerms} \termtwo_1, \hh, \emptyset \imp \termone_k \howe{\similarTerms} \termtwo_k$ 
      and $\emptyset \imp \op(\termtwo_1, \hh, \termtwo_k) \similarTerms \termtwo$ hold, 
      for some terms $\termtwo_1, \hh, \termtwo_k$. 
      We can repeatedly apply the induction hypothesis on $\evalTo{\termone_i}{n}{\compone_i}$ and 
      $\emptyset \imp \termone_i \howe{\similarTerms} \termtwo_i$, 
      inferring $\compone_i\ \relator\howe{\similarValues}\ \sem{\termtwo_i}$, 
      for all $i \in \{1, \hh, k\}$. \eqref{sigmaComp} allows to conclude 
      $\opT(\compone_1, \hh, \compone_k)\ \relator\howe{\similarValues}\ \opT(\sem{\termtwo_1}, \hh, \sem{\termtwo_k})$. 
      By Lemma \ref{semEquations} the latter is equal to $\sem{\op(\termtwo_1, \hh, \termtwo_k)}$. Finally, from 
      $\emptyset \imp \op(\termtwo_1, \hh, \termtwo_k) \similarTerms \termtwo$ we infer 
      $\sem{\op(\termtwo_1, \hh, \termtwo_k)}\ \relator\similarValues\ \sem{\termtwo}$ from which the thesis follows by Lemma 
      \ref{howeGammaTransitivity}.
  \end{description}
\end{proof}

\begin{corollary}
  The relation $\howe{\similar_\relator}$ is an applicative $\relator$-simulation.
\end{corollary}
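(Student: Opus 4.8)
The plan is to unwind the definition of applicative $\relator$-simulation, which for $\howe{\similar_\relator} = (\howe{\similarTerms}, \howe{\similarValues})$ (restricted to closed terms and values) amounts to checking two clauses: (i) $\emptyset \imp \termone\ \howe{\similarTerms}\ \termtwo$ implies $\sem{\termone}\ \relator\howe{\similarValues}\ \sem{\termtwo}$, and (ii) $\howe{\similar_\relator}$ respects values in the sense of Definition \ref{respectValues}. Essentially all the real work has already been carried out in the Key Lemma, so this corollary is merely the passage from approximants to their least upper bound, together with an appeal to compatibility.

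For clause (ii) I would argue as follows. By Proposition \ref{applicativePreorder}, $\similar_\relator$ is a preorder, so by the lemma on Howe's lifting of preorders the relation $\howe{\similar_\relator}$ is compatible, and in particular reflexive. Given closed values $\valone, \valtwo$ with $\emptyset \imp \valone\ \howe{\similarValues}\ \valtwo$ and an arbitrary closed value $\valthree$, reflexivity gives $\emptyset \imp \valthree\ \howe{\similarValues}\ \valthree$, and then clause \eqref{Comp4} of compatibility yields $\emptyset \imp \valone\valthree\ \howe{\similarTerms}\ \valtwo\valthree$, which is exactly what is required to respect values.

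For clause (i), fix closed terms with $\emptyset \imp \termone\ \howe{\similarTerms}\ \termtwo$. By Lemma \ref{approximantions-operational-semantics} we have $\evalTo{\termone}{n}{\termone^{(n)}}$ for every $n$, so the Key Lemma gives $\termone^{(n)}\ \relator\howe{\similarValues}\ \sem{\termtwo}$ for every $n$. As noted after Lemma \ref{approximantions-operational-semantics}, the sequence $(\termone^{(n)})_{n<\omega}$ is an $\omega$-chain with $\sem{\termone} = \lub_{n<\omega}\termone^{(n)}$, and since $\relator$ is assumed inductive, property \eqref{omegaComp2} lets us take the least upper bound and conclude $\sem{\termone}\ \relator\howe{\similarValues}\ \sem{\termtwo}$. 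Having verified both clauses, $\howe{\similar_\relator}$ is an applicative $\relator$-simulation, and since applicative $\relator$-similarity is the largest one this gives $\howe{\similar_\relator} \subseteq \similar_\relator$, hence (with $\similar_\relator \subseteq \howe{\similar_\relator}$) that $\similar_\relator$ is compatible. The only step that needs any care here is precisely this passage to the limit, which is exactly where the inductivity hypothesis on $\relator$ is consumed; the genuine obstacle — the case analysis on the evaluation derivation, and in particular the use of \eqref{Lax-Bind} and \eqref{sigmaComp} — has already been dealt with inside the Key Lemma.
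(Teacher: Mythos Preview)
Your proof is correct and follows essentially the same approach as the paper: apply the Key Lemma to each approximant and use inductivity of $\relator$ to pass to the least upper bound for the term clause, and invoke compatibility of $\howe{\similar_\relator}$ for the value clause. Your citation of \eqref{omegaComp2} for the passage to the limit is in fact the right one (the paper's reference to \eqref{omegaComp1} at that point appears to be a typo, as \eqref{omegaComp1} is the clause for $\bot$, used inside the Key Lemma itself).
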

\begin{proof}
  Suppose $\termone \howe{\similarTerms} \termtwo$. We have to prove 
  $\sem{\termone}\ \relator\howe{\similarValues}\ \sem{\termtwo}$, i.e. 
  $\lub_{\evalTo{\termone}{n}{\compone}} \compone\ \relator(\howe{\similarValues})\ \sem{\termtwo}$. The latter follows 
  from \eqref{omegaComp1} by the Key Lemma. Finally, since $\howe{\similar}$ is compatible, it clearly respects values. 
\end{proof}

\begin{theorem}\label{similarity-is-a-precongruence}
  Similarity is a precongruence. Moreover, it is sound for contextual preorder $\ctxpreord_\relator$.
\end{theorem}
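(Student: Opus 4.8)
The plan is to close the loop of Howe's method. The preceding Corollary shows that $\howe{\similar_\relator}$ --- more precisely, its restriction to closed terms and values --- is an applicative $\relator$-simulation; since applicative $\relator$-similarity $\similar_\relator$ is by construction the \emph{largest} applicative $\relator$-simulation, this gives the reverse inclusions $\howe{\similarTerms} \subseteq \similarTerms$ and $\howe{\similarValues} \subseteq \similarValues$ on closed terms and values. Combined with $\similar_\relator \subseteq \howe{\similar_\relator}$ (which holds because $\similar_\relator$ is a preorder by Proposition~\ref{applicativePreorder}, applying the basic properties of Howe's lifting of preorders), this will let me identify $\howe{\similar_\relator}$ with the open extension $\open{\similar_\relator}$.

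In detail, first I would use that $\howe{\similar_\relator}$ is value-substitutive (Lemma~\ref{howeSubstitutive}): given $\vars \imp \termone\ \howe{\similarTerms}\ \termtwo$ and any closing substitution of values $\vals$ for $\vars$, iterating value-substitution (padding with identities via reflexivity of $\howe{\similar_\relator}$ where needed) yields the closed judgement $\substcomp{\termone}{\vars}{\vals}\ \howe{\similarTerms}\ \substcomp{\termtwo}{\vars}{\vals}$, whence $\substcomp{\termone}{\vars}{\vals}\ \similarTerms\ \substcomp{\termtwo}{\vars}{\vals}$ by the Corollary and maximality of $\similar_\relator$; this is exactly $\vars \imp \termone\ \open{\similarTerms}\ \termtwo$. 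The same argument on values gives $\howe{\similar_\relator} \subseteq \open{\similar_\relator}$, and together with the reverse inclusion we obtain $\open{\similar_\relator} = \howe{\similar_\relator}$.

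The first half of the theorem now follows: $\howe{\similar_\relator}$ is compatible (basic properties of Howe's lifting), so $\open{\similar_\relator}$ is compatible; and $\open{\similar_\relator}$ is a preorder because $\similar_\relator$ is and the open extension of a preorder is a preorder. A compatible preorder is precisely a precongruence. For soundness, I would additionally check that $\open{\similar_\relator}$ is preadequate: if $\emptyset \imp \termone\ \similarTerms\ \termtwo$ then $\sem{\termone}\ \relator\similarValues\ \sem{\termtwo}$ by the definition of applicative $\relator$-simulation, and since $\similarValues \subseteq \relunit$ and $\relator$ is monotone by \eqref{rel-4}, we get $\sem{\termone}\ \relator\relunit\ \sem{\termtwo}$. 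Hence $\open{\similar_\relator}$ is a compatible, preadequate preorder, so $\open{\similar_\relator} \subseteq \ctxpreord_\relator$, since the $\relator$-contextual preorder is by definition the largest such relation.

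All of the real content lies in the Key Lemma (and thus in \eqref{Lax-Unit}, \eqref{Lax-Bind}, \eqref{sigmaComp} and inductivity of $\relator$), which is already established. The only points requiring care here are bookkeeping: keeping the open/closed distinction straight when transporting the Corollary through value-substitutivity, and the routine verification that single-variable value-substitutivity iterates to arbitrary closing substitutions.
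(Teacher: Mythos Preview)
Your proposal is correct and follows essentially the same route as the paper: use the Corollary to get the Howe extension contained in similarity on closed terms, identify the Howe extension with the open extension of similarity, inherit compatibility, and then obtain preadequacy from $\similarValues \subseteq \relunit$ via monotonicity of $\relator$. The paper's proof is terser and simply asserts that ``$\similar_\relator$ coincides with $\howe{\similar_\relator}$'' without spelling out the open/closed bookkeeping or the passage through value-substitutivity that you make explicit; your version is more careful on exactly those points you flagged as requiring care.
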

\begin{proof}
  We already know $\similar_\relator$ is a preorder. By previous corollary it follows that $\similar_\relator$ 
  coincides with $\howe{\similar_\relator}$, so that $\similar_\relator$ is also compatible, and thus a 
  precongruence. Now for soundness. We have to prove $\similar_\relator\ \subseteq\ \ctxpreord_\relator$. 
  Since $\ctxpreord_\relator$ is defined as the largest preadequate compatible relation,
  it is sufficient to prove that $\similar_\relator$ 
  is preadequate (we have already showed it is compatible), which directly follows from \ref{Sim-1}, since 
  $\similarValues \subseteq \relunit$.
\end{proof}

%%%%%%%%%%%%%%%%%%%%%%%%%%%%%%%%%%%%%%%%%%%%%%%%%%%%%%%%%%%%%%%%%
\section{Bisimilarity, Two-similarity and Contextual Equivalence}
%%%%%%%%%%%%%%%%%%%%%%%%%%%%%%%%%%%%%%%%%%%%%%%%%%%%%%%%%%%%%%%%%
In this section we extend previous definitions and results to come up with
sound proof techniques for contextual \emph{equivalence}. In particular, by
observing that contextual equivalence always coincides with the intersection
between the contextual preorder and its converse, Theorem \ref{similarity-
is-a-precongruence} implies that two-way similarity (i.e. the intersection
between applicative similarity and its converse) is contained in contextual
equivalence. Applicative bisimilarity being finer than two-way similarity, we
can also conclude the former to be a sound proof technique for contextual
equivalence.

Given a relator $\relator$, we can extract a canonical notion of
$\relator$-bisimulation from the one of $\relator$-simulation following the
idea that a bisimulation is a relation $\relone$ such that both $\relone$ and
$\relone^c$ are simulations. Recall that given a relator $\relator$ we can
define a converse operation $\relator^c$ as $\relator^c(\relone) =
(\relator(\relone^c))^c$. $\relator^c$ is indeed a relator. Similarly, we have
proved that the intersection of relators is again a relator.

\begin{definition}[$\relator$-bisimulation]\label{gamma-bisimulation}
  Given a relator $\relator$, we say that a relation $\relone$ is a
  $\relator$-bisimulation if it is a $(\relator \cap \relator^c)$-simulation.
\end{definition}

\begin{proposition}
  Let $\relator$ be a relator. A relation $\relone$ is a $\relator$-bisimulation 
  if and only if both $\relone$ and $\relone^c$ are $\relator$-simulation.
\end{proposition}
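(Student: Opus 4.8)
The statement to prove is: for a relator $\relator$, a relation $\relone$ is a $\relator$-bisimulation (i.e., a $(\relator \cap \relator^c)$-simulation) if and only if both $\relone$ and $\relone^c$ are $\relator$-simulations. The plan is a direct unfolding of the definitions of $\relator$-simulation (Definition~\ref{similarity}), of the converse relator $\relator^c$ (Lemma~\ref{algebra-of-relators}(2)), and of the intersection relator (Lemma~\ref{algebra-of-relators}(1)), combined with the elementary fact that for coalgebras $\coalgX_X : X \to \functor X$, $\coalgX_Y : Y \to \functor Y$ one has $\coalgX_X(x)\ (\relator\relone)^c\ \coalgX_Y(y)$ iff $\coalgX_Y(y)\ \relator\relone\ \coalgX_X(x)$.

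First I would observe that, by definition of the intersection of relators, $(\relator \cap \relator^c)\relone = \relator\relone \cap \relator^c\relone$, and by definition of $\relator^c$ we have $\relator^c\relone = (\relator(\relone^c))^c$. So $\relone$ is a $(\relator \cap \relator^c)$-simulation precisely when, for all $x\ \relone\ y$, both $\coalgX_X(x)\ \relator\relone\ \coalgX_Y(y)$ and $\coalgX_X(x)\ (\relator(\relone^c))^c\ \coalgX_Y(y)$ hold. The second conjunct is, by definition of converse, equivalent to $\coalgX_Y(y)\ \relator(\relone^c)\ \coalgX_X(x)$. Now I would note that asking ``for all $x\ \relone\ y$, $\coalgX_X(x)\ \relator\relone\ \coalgX_Y(y)$'' is exactly the statement that $\relone$ is a $\relator$-simulation (from the coalgebra $\coalgX_X$ to $\coalgX_Y$), while asking ``for all $x\ \relone\ y$, $\coalgX_Y(y)\ \relator(\relone^c)\ \coalgX_X(x)$'' is, after renaming $x\ \relone\ y$ as $y\ \relone^c\ x$, exactly the statement that $\relone^c$ is a $\relator$-simulation (from $\coalgX_Y$ to $\coalgX_X$). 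Conjoining the two characterisations gives the equivalence.

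Concretely, I would write: $\relone$ is a $\relator$-bisimulation
iff for all $x\ \relone\ y$ we have $\coalgX_X(x)\ (\relator\relone \cap (\relator(\relone^c))^c)\ \coalgX_Y(y)$,
iff for all $x\ \relone\ y$ we have $\coalgX_X(x)\ \relator\relone\ \coalgX_Y(y)$ and $\coalgX_Y(y)\ \relator(\relone^c)\ \coalgX_X(x)$,
iff $\relone$ is a $\relator$-simulation and $\relone^c$ is a $\relator$-simulation.
Each biconditional is just an application of one of the cited definitions, so the argument is a couple of lines of rewriting.

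The only point requiring a modicum of care — and the nearest thing to an obstacle — is the bookkeeping of which pairs of coalgebras each simulation is taken with respect to: the original definition of $\relator$-simulation in Definition~\ref{similarity} is stated for a pair $\coalgX_X : X \to \functor X$, $\coalgX_Y : Y \to \functor Y$, and when we pass to $\relone^c \subseteq Y \times X$ we must swap the roles of the two coalgebras accordingly. Since the paper's ATS-level notion of simulation uses a single evaluation function and a single application function (so $X = Y$ and the two coalgebras coincide), in the intended application this swap is invisible and the proof is immediate; I would simply remark this and give the three-line chain of equivalences above. No deeper property of relators (monotonicity, the $\relator$ axioms, the lax-extension conditions) is needed.
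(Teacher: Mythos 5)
Your proof is correct and is exactly the argument the paper has in mind: the proposition is stated without proof precisely because it is the immediate unfolding of $(\relator\cap\relator^c)\relone=\relator\relone\cap(\relator(\relone^c))^c$ together with the definition of relation converse, which is the chain of equivalences you give. The only additional point in the applicative instance is that the value clause (Sim-2) does not mention the relator and is invariant under taking converses, so it contributes nothing — consistent with your closing remark.
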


Since, by Lemma \ref{algebra-of-relators}, $\relator \cap \relator^c$ is a relator, 
we can define $\relator$-bisimilarity 
$\bisim_\relator$ as $(\relator \cap \relator^c)$-similarity. 

\begin{lemma}
  Let $\relator$ be a relator. $\relator$-bisimilarity is an equivalence relation.
\end{lemma}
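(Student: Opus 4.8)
The plan is to derive this directly from the machinery already in place rather than reprove anything from scratch. The statement is that $\relator$-bisimilarity $\bisim_\relator$ is an equivalence relation, and by definition $\bisim_\relator$ is $(\relator \cap \relator^c)$-similarity. So first I would invoke Lemma \ref{algebra-of-relators}: by part 1 the intersection $\relator \cap \relator^c$ is a relator, and by part 2 its converse $\relator^c$ is a relator with $(\relator^c)^c = \relator$. Hence $(\relator \cap \relator^c)^c = \relator^c \cap (\relator^c)^c = \relator^c \cap \relator = \relator \cap \relator^c$, which shows that $\relator \cap \relator^c$ is \emph{conversive}. Then I would simply apply Proposition \ref{similarity-is-a-preorder}, part 2 (or the applicative analogue, Proposition \ref{applicativePreorder}, part 2, depending on which similarity is meant here), which says that for a conversive relator, the associated similarity is an equivalence relation. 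That immediately gives the claim.

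More concretely, the two steps are: (i) show $\relator \cap \relator^c$ is conversive, using only the identities from Lemma \ref{algebra-of-relators}(2) together with the fact that $\cap$ is symmetric and that conversion distributes over intersection (i.e. $(\relone_1 \cap \relone_2)^c = \relone_1^c \cap \relone_2^c$, a triviality at the level of relations, applied pointwise); and (ii) cite Proposition \ref{applicativePreorder}(2). For step (i) the one-line computation is $(\relator \cap \relator^c)^c(\relone) = \big((\relator \cap \relator^c)(\relone^c)\big)^c = \big(\relator(\relone^c) \cap \relator^c(\relone^c)\big)^c = (\relator(\relone^c))^c \cap (\relator^c(\relone^c))^c = \relator^c(\relone) \cap \relator(\relone) = (\relator \cap \relator^c)(\relone)$, where the penultimate equality uses $(\relator^c)^c = \relator$.

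There is essentially no obstacle: the work has already been done in Lemma \ref{algebra-of-relators} and Proposition \ref{applicativePreorder}. The only thing to be mildly careful about is that reflexivity and transitivity of $\bisim_\relator$ come for free because $\relator \cap \relator^c$ is a relator (so $(\relator\cap\relator^c)$-similarity is a preorder by Proposition \ref{applicativePreorder}(1)), and symmetry is exactly the extra content supplied by conversivity. So the proof is: note $\relator \cap \relator^c$ is a conversive relator, and conclude by Proposition \ref{applicativePreorder}. I would write it in two or three sentences, with the displayed computation above inlined, and no case analysis is required.
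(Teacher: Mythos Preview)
Your proposal is correct and follows essentially the same approach as the paper: both argue that $\relator \cap \relator^c$ is a conversive relator and then invoke Proposition~\ref{applicativePreorder}. The only minor difference is that the paper cites Lemma~\ref{algebra-of-relators}(4) directly for conversivity of $\relator \cap \relator^c$, whereas you spell out the short computation by hand from parts (1) and (2); this is the same content unfolded.
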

\begin{proof}
  From Lemma \ref{applicativePreorder} we know that $\bisim_\relator$ is a preorder, 
  whereas Lemma \ref{algebra-of-relators} shows that $\relator \cap \relator^c$ is 
  conversive. We conclude $\bisim_\relator$ to be an equivalence relation.
\end{proof}

\begin{definition}
  Let $\relator$ be a relator. Define $\relator$-cosimilarity $\cosimilar_\relator$ as $(\similar_\relator)^c$. 
  Define $\relator$ two-way similarity $\simeq_\relator$ as $\similar_\relator \cap \cosimilar_\relator$. 
\end{definition}
As usual, bisimilarity is finer than two-way similarity, meaning that
$\bisim_\relator\ \subseteq\ \simeq_\relator$. Moreover, taking $\relator$ to
be the simulation relator for the powerset monad (see Example \ref{relators}),
we have that $\bisim_\relator$ and $\simeq_\relator$ do not coincide. See e.g.
\cite{Lassen/PhDThesis,Pitts/ATBC/2011}.

\begin{proposition}
   Let $\relator$ be a relator. Then, $\bisim_\relator\ \subseteq\ \simeq_\relator$, and the inclusion is, in general,
    strict.
\end{proposition}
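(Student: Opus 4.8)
The plan is to split the statement into the (essentially formal) inclusion $\bisim_\relator \subseteq \simeq_\relator$ and the (example-driven) strictness. For the inclusion I will only use the relator algebra of Lemma~\ref{algebra-of-relators} together with the proposition just above, which characterises $\relator$-bisimulations as exactly those relations $\relone$ for which both $\relone$ and $\relone^c$ are $\relator$-simulations. For strictness I will instantiate the construction at the non-deterministic $\lambda$-calculus, where the gap between mutual applicative similarity and applicative bisimilarity is classical.

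For the inclusion: by definition $\bisim_\relator$ is the largest applicative $(\relator \cap \relator^c)$-simulation, hence in particular \emph{is} an applicative $(\relator \cap \relator^c)$-simulation, i.e.\ a $\relator$-bisimulation. By the proposition characterising $\relator$-bisimulations, both $\bisim_\relator$ and its converse $\bisim_\relator^c$ are then applicative $\relator$-simulations. Since $\similar_\relator$ is the largest applicative $\relator$-simulation, this yields $\bisim_\relator \subseteq \similar_\relator$ and $\bisim_\relator^c \subseteq \similar_\relator$, the latter being equivalent to $\bisim_\relator \subseteq (\similar_\relator)^c = \cosimilar_\relator$. Combining the two gives $\bisim_\relator \subseteq \similar_\relator \cap \cosimilar_\relator = \simeq_\relator$. (Alternatively, one argues directly that $(\relator \cap \relator^c)\relone \subseteq \relator\relone$ for every $\relone$, so that every applicative $(\relator\cap\relator^c)$-simulation is already an applicative $\relator$-simulation — clause \eqref{Sim-2} being insensitive to the choice of relator — and dually a $\relator^c$-simulation.)

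For strictness it is enough to exhibit one relator and one ATS making the inclusion proper. Take $\monad = \powersetmonad$ and $\relator = \relatorSim_{\powersetmonad}$ of Example~\ref{relators}; then on the ATS of closed $\lambda$-terms $\similar_\relator$ is ordinary applicative similarity of the non-deterministic $\lambda$-calculus, while $\bisim_\relator$, being $(\relatorSim_{\powersetmonad} \cap \relatorSim_{\powersetmonad}^c)$-similarity, is ordinary applicative bisimilarity. These notions are well known not to coincide: a separating witness is the $\lambda$-calculus transcription of the classical labelled-transition-system example in which one term behaves like $a.(b \oplus c)$ and the other like $a.b \oplus a.(b \oplus c)$ — the two are mutually similar, since each immediate $a$-derivative of one is simulated by the unique $a$-derivative of the other, yet they are not bisimilar, because the $a$-derivative offering only $b$ can only be matched by the $a$-derivative offering both $b$ and $c$. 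Concrete closed $\lambda$-terms realising this phenomenon appear in \cite{Lassen/PhDThesis,Pitts/ATBC/2011}; hence $\bisim_\relator \subsetneq \simeq_\relator$ in general.

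I expect the only genuinely delicate point to be conceptual rather than computational: one must be sure the abstract inclusion argument does not secretly establish equality. It does not, precisely because demanding that a \emph{single} relation be at once a $\relator$-simulation and a $\relator^c$-simulation (i.e.\ a $(\relator \cap \relator^c)$-simulation) is strictly stronger than merely lying inside $\similar_\relator$ and inside $\cosimilar_\relator$; the non-deterministic instance above is exactly what certifies this, so no sharper abstract statement is available.
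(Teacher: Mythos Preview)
Your proposal is correct and follows essentially the same approach as the paper, which in fact does not give a detailed proof at all: it simply asserts the inclusion as ``usual'' and, for strictness, points to the powerset relator $\relatorSim_{\powersetmonad}$ and the same references \cite{Lassen/PhDThesis,Pitts/ATBC/2011}. Your argument for the inclusion via the characterisation of $\relator$-bisimulations is exactly the intended unpacking of that one-line remark.
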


Recall that we have defined the $\relator$-contextual preorder
$\ctxpreord_\relator$ as the largest relation that is both compatible and
$\relator$-preadequate. In analogy with what we did for simulation and
bisimulation we can give the following:

\begin{definition}
  Let $\relator$ be a relator. Define $\relator$-contextual equivalence $\ctxequiv_\relator$ as the 
  largest relation that is both compatible and $(\relator \cap \relator^c)$-preadequate. That is, define 
  $\ctxequiv_\relator$ as $\ctxpreord_{\relator \cap \relator^c}$.
\end{definition}

\begin{lemma}\label{cocontextual-coinduction}
  Let $\relator$ be a relator. The cocontextual preorder $\geq_\relator$ is the largest relation that is both 
  $\relator^c$-preadequate and compatible.
\end{lemma}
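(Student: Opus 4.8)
The plan is to reduce the statement to the already-established fact that the $\relator$-contextual preorder $\ctxpreord_\relator$ is the largest compatible $\relator$-preadequate relation, by exploiting two elementary ``converse duality'' observations. Here $\geq_\relator$ denotes the cocontextual preorder $(\ctxpreord_\relator)^c$, in analogy with $\cosimilar_\relator = (\similar_\relator)^c$. The two observations are: (i) compatibility is self-dual, i.e. $\relone$ is compatible iff $\relone^c$ is; and (ii) $\relone$ is $\relator^c$-preadequate iff $\relone^c$ is $\relator$-preadequate.

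For (i) I would use the characterisation ``$\relone$ is compatible iff $\ctxclosure{\relone} \subseteq \relone$'' together with the identity $\ctxclosure{(\relone^c)} = (\ctxclosure{\relone})^c$. The latter is checked by inspecting the rules defining the compatible refinement in Figure~\ref{compatibleClosure}: each rule relates a term former to the same term former applied to $\relone$-related subterms, so simultaneously swapping the two sides of all premises swaps the two sides of the conclusion; hence the relation derived for $\relone^c$ is exactly the converse of the one derived for $\relone$. Then $\relone^c$ is compatible iff $(\ctxclosure{\relone})^c = \ctxclosure{(\relone^c)} \subseteq \relone^c$ iff $\ctxclosure{\relone} \subseteq \relone$ iff $\relone$ is compatible.

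For (ii), the one point that deserves care is that $\relunit = \closedValues \times \closedValues$ is the \emph{total} relation on values, so $\relunit^c = \relunit$; by Lemma~\ref{algebra-of-relators}(2) this gives $\relator^c\relunit = (\relator(\relunit^c))^c = (\relator\relunit)^c$. Hence a closed relation $\relone$ on terms is $\relator^c$-preadequate iff $\termone\ \relone\ \termtwo$ implies $\sem{\termone}\ (\relator\relunit)^c\ \sem{\termtwo}$, i.e. iff $P\ \relone^c\ Q$ implies $\sem{P}\ \relator\relunit\ \sem{Q}$, i.e. iff $\relone^c$ is $\relator$-preadequate (using Definition~\ref{adequateRelation}).

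Combining (i) and (ii): $\relone$ is compatible and $\relator^c$-preadequate iff $\relone^c$ is compatible and $\relator$-preadequate. Writing $\mathbb{C}\mathbb{A}$ for the class of compatible $\relator$-preadequate relations, the class of compatible $\relator^c$-preadequate relations is therefore $\{\relthree^c \mid \relthree \in \mathbb{C}\mathbb{A}\}$, and since converse commutes with arbitrary unions its union is $(\bigcup \mathbb{C}\mathbb{A})^c = (\ctxpreord_\relator)^c = \geq_\relator$. Finally, $\geq_\relator$ itself belongs to this class: the earlier Proposition shows $\ctxpreord_\relator$ is compatible and $\relator$-preadequate, so by (i) and (ii) its converse $\geq_\relator$ is compatible and $\relator^c$-preadequate. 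Thus $\geq_\relator$ is not merely the supremum but the greatest element of the class, which is the claim. I do not anticipate a real obstacle; the only steps that need a moment's attention are the identity $\ctxclosure{(\relone^c)} = (\ctxclosure{\relone})^c$ and the remark that the whole argument hinges on $\relunit$ being self-converse, so that a finer observation on values would break this immediate duality.
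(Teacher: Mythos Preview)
Your proposal is correct and follows essentially the same approach as the paper's proof: both hinge on the duality observations that compatibility is preserved under converse and that $\relone$ is $\relator^c$-preadequate iff $\relone^c$ is $\relator$-preadequate (using $\relunit^c = \relunit$ and $(\relator^c)^c = \relator$), then reduce to the defining property of $\ctxpreord_\relator$. Your presentation is slightly more systematic in stating the two dualities as explicit iff's and phrasing the conclusion via the bijection $\relone \mapsto \relone^c$ between the two classes, whereas the paper argues each inclusion more directly; but the content is the same.
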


\begin{proof}
  First of all observe that if a relation $\relone$ is $\relator$-preadequate,
  then $\relone^c$ is $\relator^c$-preadequate. For, suppose $\termtwo\
  \relone^c\ \termone$, so that $\termone\ \relone\ \termtwo$. Since $\relone$
  is $\relator$-preadequate, we have $\sem{\termone}\ \relator\relunit\
  \sem{\termtwo}$, and thus $\sem{\termtwo}\ (\relator \relunit)^c\
  \sem{\termone}$. From $\relunit^c = \relunit$ we can conclude
  $\sem{\termtwo}\ \relator^c(\relunit)\ \sem{\termone}$.

  As a consequence, since $\ctxpreord_\relator$ is $\relator$-preadequate, we
  have that $\geq_\relator$ is $\relator^c$-preadequate. Moreover,
  compatibility of $\ctxpreord_\relator$ implies compatibility of
  $\geq_\relator$. It remains to prove that $\geq_\relator$ is the largest
  $\relator^c$-preadequate and compatible relation. Let $\relone$ be a
  $\relator^c$-preadequate and compatible relation. We show $\relone\
  \subseteq\ \geq_\relator$ by showing $\relone^c\ \subseteq\
  (\geq_\relator)^c$, i.e. $\relone^c\ \subseteq\ \ctxpreord_\relator$. We
  proceed by coinduction showing that $\relone^c$ is $\relator$-preadequate
  and compatible. Compatibility of $\relone^c$ directly follows from that of
  $\relone$. Moreover, since $\relone$ is $\relator^c$-preadequate,
  $\relone^c$ is $(\relator^c)^c$-preadequate. A simple calculation shows that
  $(\relator^c)^c = \relator$, so that we are done.
\end{proof}

Although bisimilarity is finer than two-way similarity, this is not
the case for contextual equivalence and the associated contextual
preorders.
\begin{proposition}
  Let $\relator$ be a relator. Then, $\ctxequiv_\relator\ =\ \ctxpreord_\relator \cap \geq_\relator$. 
\end{proposition}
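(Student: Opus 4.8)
The plan is to prove the two inclusions separately, unfolding the definition $\ctxequiv_\relator = \ctxpreord_{\relator\cap\relator^c}$ and exploiting that, for every relator $\relator'$, the relation $\ctxpreord_{\relator'}$ is by definition the \emph{largest} $\lambda$-term relation on terms that is both compatible and $\relator'$-preadequate. Throughout I will use the identity $(\relator\cap\relator^c)\relone = \relator\relone \cap \relator^c\relone$ from Lemma~\ref{algebra-of-relators}(1); applied to $\relunit$, it gives $(\relator\cap\relator^c)\relunit = \relator\relunit \cap \relator^c\relunit$.

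For the inclusion $\ctxequiv_\relator \subseteq \ctxpreord_\relator \cap \geq_\relator$, the argument is: since $(\relator\cap\relator^c)\relunit \subseteq \relator\relunit$, every $(\relator\cap\relator^c)$-preadequate relation is also $\relator$-preadequate, and likewise $\relator^c$-preadequate. Hence $\ctxequiv_\relator$, being compatible and $(\relator\cap\relator^c)$-preadequate, is in particular compatible and $\relator$-preadequate, so $\ctxequiv_\relator \subseteq \ctxpreord_\relator$ as the latter is the largest such relation; and it is compatible and $\relator^c$-preadequate, so $\ctxequiv_\relator \subseteq \ctxpreord_{\relator^c} = \geq_\relator$ by Lemma~\ref{cocontextual-coinduction}. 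For the converse inclusion, I would show that $\ctxpreord_\relator \cap \geq_\relator$ is itself compatible and $(\relator\cap\relator^c)$-preadequate; being the largest such relation, $\ctxequiv_\relator$ then contains it. Preadequacy is immediate: if $\termone\ (\ctxpreord_\relator \cap \geq_\relator)\ \termtwo$, then $\sem{\termone}\ \relator\relunit\ \sem{\termtwo}$ since $\ctxpreord_\relator$ is $\relator$-preadequate, and $\sem{\termone}\ \relator^c\relunit\ \sem{\termtwo}$ since $\geq_\relator$ is $\relator^c$-preadequate by Lemma~\ref{cocontextual-coinduction}; combining these and using the displayed identity gives $\sem{\termone}\ (\relator\cap\relator^c)\relunit\ \sem{\termtwo}$.

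The remaining, and only slightly delicate, point is that $\ctxpreord_\relator \cap \geq_\relator$ is compatible. Rather than checking the six clauses of Figure~\ref{compatibilityClauses} directly, I would use the characterisation that a $\lambda$-term relation $\relthree$ is compatible iff $\ctxclosure{\relthree} \subseteq \relthree$, together with monotonicity of the compatible-refinement operator $\ctxclosure{(\cdot)}$ (evident from the rules in Figure~\ref{compatibleClosure}, since enlarging the relation in the premises only enlarges the set of derivable conclusions). Indeed, from $\ctxpreord_\relator \cap \geq_\relator \subseteq \ctxpreord_\relator$ and $\ctxpreord_\relator \cap \geq_\relator \subseteq \geq_\relator$ we get $\ctxclosure{\ctxpreord_\relator \cap \geq_\relator} \subseteq \ctxclosure{\ctxpreord_\relator} \cap \ctxclosure{\geq_\relator} \subseteq \ctxpreord_\relator \cap \geq_\relator$, using that both $\ctxpreord_\relator$ and $\geq_\relator$ are compatible (the latter by Lemma~\ref{cocontextual-coinduction}). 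This general fact, that an intersection of compatible relations is compatible, is the one ingredient beyond pure definition-chasing, and I expect it to be the main (though still routine) obstacle. Combining the two inclusions yields $\ctxequiv_\relator = \ctxpreord_\relator \cap \geq_\relator$.
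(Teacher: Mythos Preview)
Your proposal is correct and follows essentially the same approach as the paper: both directions are obtained by verifying that the relevant relation is compatible and preadequate with respect to the appropriate relator, then invoking maximality; Lemma~\ref{cocontextual-coinduction} is used in the same way to identify $\geq_\relator$ with $\ctxpreord_{\relator^c}$. Your treatment of the compatibility of $\ctxpreord_\relator \cap \geq_\relator$ via monotonicity of $\ctxclosure{(\cdot)}$ is slightly more explicit than the paper's, which simply asserts that compatible relations are closed under intersection and converse as a ``standard calculation''.
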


\begin{proof}
  First of all observe that since $\relunit = \relunit^c$, a relation $\relone$ is $(\relator \cap \relator^c)$-preadequate 
  if $\termone\ \reloneTerms\ \termtwo$ implies that both $\sem{\termone}\ \relator \relunit\ \sem{\termtwo}$ and 
  $\sem{\termtwo}\ \relator\relunit\ \sem{\termone}$ hold.
  Since $\ctxequiv_\relator$ is defined coinductively, to prove that it contains $\ctxpreord_\relator \cap \geq_\relator$ 
  it is sufficient to prove that $\ctxpreord_\relator \cap \geq_\relator$ is compatible and 
  $(\relator \cap \relator^c)$-preadequate. Standard calculations show that the set of compatible relations is closed under 
  converse and intersection. Since $\ctxpreord_\relator$ is compatible, then so is $\geq_\relator$ and thus 
  $\ctxpreord_\relator \cap \geq_\relator$. We show that $\ctxpreord_\relator \cap \geq_\relator$ is
  $(\relator \cap \relator^c)$-preadequate. Suppose $\termone\ (\ctxpreord_\relator \cap \geq_\relator)\ \termtwo$, so that 
  both $\termone \ctxpreord_\relator \termtwo$ and $\termone \geq_\relator \termtwo$ hold. From the former it follows 
  $\sem{\termone}\ \relator \relunit\ \sem{\termtwo}$, whereas from the latter we infer $\termtwo \ctxpreord_\relator \termone$
  and thus $\sem{\termtwo}\ \relator \relunit\ \sem{\termone}$.

  We now show that $\ctxequiv_\relator$ is contained in $\ctxpreord_\relator \cap \geq_\relator$. Since $\ctxpreord_\relator$ 
  is defined coinductively, to prove $\ctxequiv_\relator\ \subseteq\ \ctxpreord_\relator$ it is sufficient to prove that 
  $\ctxequiv_\relator$ is compatible and $\relator$-preadequate, which is indeed the case. Thanks to Lemma \ref{cocontextual-coinduction} 
  we can proceed coinductively to prove $\ctxequiv_\relator\ \subseteq\ \geq_\relator$ as well. In fact, it is sufficient to 
  prove that $\ctxequiv_\relator$ is $\relator^c$-preadequate, which is trivially the case.
\end{proof}

We can finally prove our soundness result. 
\begin{theorem}[Soundness]\label{soundness} 
  Let $\relator$ be a relator. Two-way similarity $\simeq_\relator$ is a congruence, and thus sound for contextual 
  equivalence $\ctxequiv_\relator$. Since bisimilarity $\sim_\relator$ is finer than $\simeq_\relator$, it is sound for 
  $\ctxequiv_\relator$ as well.
\end{theorem}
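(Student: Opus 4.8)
The plan is to derive the theorem purely by assembling results already established; no new induction or Howe-style argument is needed. \textbf{First}, I would verify that $\simeq_\relator$ is a congruence, i.e. a compatible equivalence relation. By Theorem~\ref{similarity-is-a-precongruence}, $\similar_\relator$ is a compatible preorder. Since the class of compatible relations is closed under converse and under intersection (the ``standard calculations'' invoked in the proof of the Proposition decomposing $\ctxequiv_\relator$ as $\ctxpreord_\relator \cap \geq_\relator$), the relation $\cosimilar_\relator = (\similar_\relator)^c$ is compatible, and hence so is $\simeq_\relator = \similar_\relator \cap \cosimilar_\relator$. For the equivalence part: $\simeq_\relator$ is reflexive because $\similar_\relator$ is; it is symmetric by construction, being of the form $R \cap R^c$; and it is transitive because $\similar_\relator$ is transitive, the converse of a transitive relation is transitive, and transitivity is preserved by intersection. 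Thus $\simeq_\relator$ is a congruence.

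\textbf{Second}, I would prove soundness of $\simeq_\relator$ for $\ctxequiv_\relator$, i.e. the inclusion $\simeq_\relator \subseteq \ctxequiv_\relator$. By the soundness part of Theorem~\ref{similarity-is-a-precongruence} we have $\similar_\relator \subseteq \ctxpreord_\relator$; taking converses, and using that $\geq_\relator = (\ctxpreord_\relator)^c$ (by definition of the cocontextual preorder, cf. Lemma~\ref{cocontextual-coinduction}), yields $\cosimilar_\relator = (\similar_\relator)^c \subseteq \geq_\relator$. Intersecting the two inclusions gives $\simeq_\relator = \similar_\relator \cap \cosimilar_\relator \subseteq \ctxpreord_\relator \cap \geq_\relator$, and the right-hand side equals $\ctxequiv_\relator$ by the Proposition establishing that decomposition. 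Hence $\simeq_\relator \subseteq \ctxequiv_\relator$, which is exactly soundness of two-way similarity for contextual equivalence.

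\textbf{Third}, for bisimilarity: the Proposition stating $\bisim_\relator \subseteq \simeq_\relator$, combined with the inclusion just obtained, gives $\bisim_\relator \subseteq \simeq_\relator \subseteq \ctxequiv_\relator$, so $\bisim_\relator$ is sound for contextual equivalence as well.

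I do not expect a genuine obstacle: all the real content — Howe's method, the Key Lemma, and the coinductive characterisation $\ctxequiv_\relator = \ctxpreord_\relator \cap \geq_\relator$ — is already behind us, so this proof is bookkeeping. The only points requiring a little care are that compatibility and transitivity are stable under the converse and intersection operations used above, and that $(\ctxpreord_\relator)^c$ coincides with $\geq_\relator$; both are immediate from the definitions.
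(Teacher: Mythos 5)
Your proposal is correct and follows essentially the same route as the paper's own proof: deduce compatibility and soundness of $\cosimilar_\relator$ from Theorem~\ref{similarity-is-a-precongruence} by taking converses, intersect to get that $\simeq_\relator$ is a congruence contained in $\ctxpreord_\relator \cap \geq_\relator = \ctxequiv_\relator$, and conclude for $\sim_\relator$ via $\sim_\relator \subseteq \simeq_\relator$. You merely spell out the closure of compatibility and transitivity under converse and intersection, which the paper leaves implicit.
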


\begin{proof}
  From Theorem \ref{similarity-is-a-precongruence} we know that $\similar_\relator$ is a precongruence and that 
  $\similar_\relator\ \subseteq\ \ctxpreord_\relator$. It follows $\cosimilar_\relator$ is a precongruence as well, 
  and that $\cosimilar_\relator\ \subseteq\ \geq_\relator$ holds. We can conclude $\simeq_\relator$ is a congruence 
  and $\simeq_\relator\ \subseteq\ \ctxequiv_\relator$. Since $\sim_\relator\ \subseteq\ \simeq_\relator$, we also 
  have $\sim_\relator\ \subseteq\ \ctxequiv_\relator$.
\end{proof}

Noticeably, Theorem~\ref{soundness} can be seen as a proof of
soundness for applicative bisimilarity in any calculus
$\Lambda_\Sigma$ which respects our requirements (see
Definition~\ref{laxExtension},~\ref{inductive}), and in particular for those described
in Example \ref{relators}.  The case of probabilistic calculi is
illuminating: the apparent complexity of all proofs of congruence from
the literature~\cite{DalLagoSangiorgiAlberti/POPL/2014,CrubilleDalLago/ESOP/2014}
has been confined to the proof that the relator for subdistributions
satisfies our axioms.

We can rely on Theorem \ref{soundness} to prove that the terms 
$W^{\mathsf{raise}}$ and $Z^{\mathsf{raise}}$, our example programs from Section
\ref{sect:informal}, being bisimilar, are indeed contextually equivalent.
This only requires checking
that the map $\relator_\distribution \circ \relator_{\mathcal{E}}$
(see Example \ref{relators}) is an inductive relator for the monad
$\monad X = \distribution(X + E)$ (which trivially carries a
continuous $\signature$-algebra structure) respecting operations in
$\signature$. This is an easy exercise, and does not require any 
probabilistic reasoning.

Let $(\distribution, \delta, (\cdot)^{\distribution})$ denote the
subdistributions monad, where we write $f^{\distribution}$ for the Kleisli
lifting of $f$ and $\delta_X$ for the Dirac distribution on the set $X$.
Similarly, let $(\exception, \epsilon, (\cdot)^{\exception})$ denote the
exception monad, where we write $f^{\exception}$ for the Kleisli lifting of
$f$, and $\epsilon$ for unit of $\exception$ (see Example \ref{monadExample}
for formal definitions). Moreover, recall that we have relators
$\relator_{\distribution}$ and $\relator_{\exception}$ for $\distribution$ and
$\exception$, respectively (see Example \ref{relators}). A standard
calculation shows that we have the following:
\begin{proposition}
  The functor $\distribution \circ \exception$ induces a Kleisli triple 
  $(\distribution \circ \exception, \uniT, \kleisli{(\cdot)})$, where 
  the unit $\uniT$ is defined, for any set $X$, by $\uniT_X = \delta_{\exception(X)} \circ \epsilon_X$, 
  whereas for a function $f : X \to \distribution \exception (Y)$ 
  the Kleisli extension $\kleisli{f}$ of $f$ is defined as $(f_*)^{\distribution}$, 
  where $f_* : \exception(X) \to \distribution \exception(Y)$ is defined by
    $$f_*(u) = \begin{cases}
                f(x)                      & \text{if } u = \inl(x); \\
                \delta_{\exception(Y)}(u) & \text{otherwise.} 

              \end{cases}
    $$
\end{proposition}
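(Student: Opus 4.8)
The plan is to verify directly the three Kleisli-triple equations for $(\distribution\circ\exception,\uniT,\kleisli{(\cdot)})$, relying only on the corresponding three equations for the monad $\distribution$ --- namely $h^{\distribution}\circ\delta = h$, $\delta^{\distribution} = \mathit{id}$, and $g^{\distribution}\circ h^{\distribution} = (g^{\distribution}\circ h)^{\distribution}$ --- together with a routine case analysis on $\exception X = X+E$. (Alternatively one could observe that the copairing $\lambda_X = [\,\distribution(\inl),\ \delta_{\exception X}\circ\inr\,] : \distribution X + E \to \distribution(X+E)$ is a distributive law of the monad $\exception$ over the monad $\distribution$ and invoke Beck's theorem, then compute that the induced unit and extension are exactly $\uniT$ and $\kleisli{(\cdot)}$; but the elementary verification is shorter, and is what the ``standard calculation'' refers to.) First I would check well-typedness: $f_*$ is well defined since $f_*(\inl(x)) = f(x)\in\distribution\exception Y$ and $f_*(\inr(e)) = \delta_{\exception Y}(\inr(e))\in\distribution\exception Y$, so $\kleisli{f} = (f_*)^{\distribution} : \distribution\exception X \to \distribution\exception Y$ is well defined; and I would note $f_*\circ\epsilon_X = f_*\circ\inl = f$.

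The one lemma doing the real work is the identity
\[
  \bigl((g_*)^{\distribution}\circ f\bigr)_* \;=\; (g_*)^{\distribution}\circ f_*
\]
as maps $\exception X \to \distribution\exception Z$, for $f : X\to\distribution\exception Y$ and $g : Y\to\distribution\exception Z$. I would prove it by cases on the argument $u$: for $u = \inl(x)$ both sides reduce to $(g_*)^{\distribution}(f(x))$ by unfolding $(\cdot)_*$; for $u = \inr(e)$ the left-hand side is $\delta_{\exception Z}(\inr(e))$ by definition, while the right-hand side is $(g_*)^{\distribution}\bigl(f_*(\inr(e))\bigr) = (g_*)^{\distribution}\bigl(\delta_{\exception Y}(\inr(e))\bigr) = g_*(\inr(e)) = \delta_{\exception Z}(\inr(e))$, using the first $\distribution$-equation $h^{\distribution}\circ\delta = h$. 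It is important here that the outer map is of the form $(g_*)^{\distribution}$ --- the analogous statement with an arbitrary $G^{\distribution}$ in place of $(g_*)^{\distribution}$ fails on the $\inr(e)$ branch --- so the lemma must be phrased exactly as above.

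Granting this, the three equations follow mechanically. For $\kleisli{f}\circ\uniT_X = f$: unfold to $(f_*)^{\distribution}\circ\delta_{\exception X}\circ\epsilon_X$, apply $h^{\distribution}\circ\delta = h$ to get $f_*\circ\epsilon_X$, then the observation above gives $f$. For $\kleisli{\uniT_X} = \mathit{id}$: a direct case check shows $(\uniT_X)_* = \delta_{\exception X}$ (on $\inl(x)$ it equals $\uniT_X(x) = \delta_{\exception X}(\inl(x))$, and on $\inr(e)$ it equals $\delta_{\exception X}(\inr(e))$ by definition), hence $\kleisli{\uniT_X} = (\delta_{\exception X})^{\distribution} = \mathit{id}_{\distribution\exception X}$ by the second $\distribution$-equation. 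For associativity $\kleisli{(\kleisli{g}\circ f)} = \kleisli{g}\circ\kleisli{f}$: the right-hand side is $(g_*)^{\distribution}\circ(f_*)^{\distribution} = \bigl((g_*)^{\distribution}\circ f_*\bigr)^{\distribution}$ by the third $\distribution$-equation, and the left-hand side is $\bigl(((g_*)^{\distribution}\circ f)_*\bigr)^{\distribution}$ by definition of $\kleisli{(\cdot)}$ (note $\kleisli{g}\circ f = (g_*)^{\distribution}\circ f$), so the two agree by the key lemma. I do not expect any genuine obstacle: the only care needed is to state the key lemma with $(g_*)^{\distribution}$ rather than a general Kleisli extension, and to keep straight which monad's equations are being used at each step.
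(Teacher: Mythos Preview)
Your proposal is correct. The paper does not give a proof of this proposition at all: it simply asserts that ``a standard calculation shows'' the result, so there is nothing to compare against beyond noting that your direct verification of the three Kleisli-triple laws is precisely the kind of standard calculation the paper is gesturing at.
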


Being defined as composition of relators, the map 
$\relator_{\mathcal{D}} \circ \relator_{\mathcal{E}}$ 
(also written $\relator_{\mathcal{D}} \relator_{\mathcal{E}}$) 
is a relator for the functor $\distribution \circ \exception$. 
We show that it also satisfies conditions \eqref{Lax-Unit} and 
\eqref{Lax-Bind}, meaning that it is a relator for 
$\distribution \circ \exception$, regarded as a monad. 
In order to have a more readable proof we use a couple of 
simple auxiliary lemmas.
In the following, let $f : X \to \distribution \exception Z$ and 
$g : Y \to \distribution \exception W$ be maps, and 
$\relone \subseteq X \times Y$, $\relthree \subseteq Z \times W$ be relations.

\begin{lemma}\label{laxProbEXcepHelp}
  The following implication holds
  $$\relone \subseteq (f \times g)^{-1}(\relator_{\mathcal{D}} \relator_{\mathcal{E}} \relthree) 
    \implies \relator_{\mathcal{E}} \relone \subseteq (f_* \times g_*)^{-1}(\relator_{\mathcal{D}} \relator_{\mathcal{E}} \relthree)
  $$
\end{lemma}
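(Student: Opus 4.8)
The plan is to unfold the definitions of $\relator_{\mathcal{E}}$ and of $(\cdot)_*$ and argue by a two-case analysis on the shape of the related pair. Assume the antecedent, namely that $z\ \relone\ w$ implies $f(z)\ \relator_{\mathcal{D}}\relator_{\mathcal{E}}\relthree\ g(w)$, and fix $u \in \exception X$, $v \in \exception Y$ with $u\ \relator_{\mathcal{E}}\relone\ v$; the goal is $f_*(u)\ \relator_{\mathcal{D}}\relator_{\mathcal{E}}\relthree\ g_*(v)$. By the definition of $\relator_{\mathcal{E}}$ (Example \ref{relators}), the relatedness $u\ \relator_{\mathcal{E}}\relone\ v$ forces $u$ and $v$ to have the same shape: either both are right injections carrying the same exception, or both are left injections carrying $\relone$-related values. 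The mixed cases (one an error, the other a value) are ruled out immediately by the two implication clauses defining $\relator_{\mathcal{E}}$, so only these two subcases remain.

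In the first subcase, $u = \inr(e) = v$ for some $e \in E$, and unfolding $(\cdot)_*$ gives $f_*(u) = \delta_{\exception Z}(\inr(e))$ and $g_*(v) = \delta_{\exception W}(\inr(e))$. Now $\inr(e)\ \relator_{\mathcal{E}}\relthree\ \inr(e)$ holds (again by the definition of $\relator_{\mathcal{E}}$, using $e = e$), so, since $\relator_{\mathcal{D}}$ is a relator for the monad $\distribution$, the unit clause \eqref{Lax-Unit} for $\relator_{\mathcal{D}}$ — instantiated at the relation $\relator_{\mathcal{E}}\relthree \subseteq \exception Z \times \exception W$ — yields $\delta_{\exception Z}(\inr(e))\ \relator_{\mathcal{D}}(\relator_{\mathcal{E}}\relthree)\ \delta_{\exception W}(\inr(e))$, which is exactly the goal. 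In the second subcase, $u = \inl(x)$, $v = \inl(y)$ with $x\ \relone\ y$; here $f_*(u) = f(x)$ and $g_*(v) = g(y)$, and the assumed antecedent immediately gives $f(x)\ \relator_{\mathcal{D}}\relator_{\mathcal{E}}\relthree\ g(y)$. This exhausts the cases, so $\relator_{\mathcal{E}}\relone \subseteq (f_* \times g_*)^{-1}(\relator_{\mathcal{D}}\relator_{\mathcal{E}}\relthree)$.

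This is a routine bookkeeping lemma and presents essentially no obstacle; its whole content is that the error-propagating lifting $(\cdot)_*$ of a Kleisli map is compatible with the layered relator $\relator_{\mathcal{D}}\relator_{\mathcal{E}}$, and the proof reduces to the two trivial subcases above. The only point deserving attention is to invoke the right instance of the unit clause — \eqref{Lax-Unit} for the \emph{outer} relator $\relator_{\mathcal{D}}$, applied to the already-lifted relation $\relator_{\mathcal{E}}\relthree$, and not the unit clause for $\exception$ — together with the observation that mismatched shapes cannot occur, which is immediate from the definition of $\relator_{\mathcal{E}}$.
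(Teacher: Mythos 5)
Your proof is correct and takes essentially the same route as the paper's: a case split on whether $u$ is an exception or a value, using \eqref{Lax-Unit} for the outer relator $\relator_{\mathcal{D}}$ instantiated at the lifted relation $\relator_{\mathcal{E}}\relthree$ (together with $\inr(e)\ \relator_{\mathcal{E}}\relthree\ \inr(e)$) in the exception case, and the antecedent directly in the value case. If anything, your typing of the Dirac units at $\exception(Z)$ and $\exception(W)$ is slightly more careful than the paper's.
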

\begin{proof}
  Suppose $\relone \subseteq (f \times g)^{-1}(\relator_{\mathcal{D}} \relator_{\mathcal{E}} \relthree)$ 
  and let $u\ \relator_{\mathcal{E}} \relone\ v$. We prove 
  $f_*(u)\ \relator_{\mathcal{D}} \relator_{\mathcal{E}} \relthree\ g_*(v)$. 
  By very definition of $f_*$ we have two possible cases. 
  \begin{description}
    \item[Case 1.] 
      Suppose $f_*(u) = \delta(u)$ and $u = \inr(e)$, for some $e \in E$ 
      (we will omit subscripts in the Dirac's functions). Since $u\ \relator_{\mathcal{E}} \relone\ v$ 
      we have $v = \inr(e)$ as well, meaning that $g_*(v) = \delta(v)$. The thesis can now be rewritten as 
      $\delta(u)\ \relator_{\mathcal{D}} \relator_{\mathcal{E}} \relone\ \delta(v)$. 
      Since $\relator_{\mathcal{D}}$ satisfies property \eqref{Lax-Unit}, we have 
      $\relator_{\mathcal{E}} \relthree \subseteq (\delta_{\exception(X)} 
      \times \delta_{\exception(Y)})^{-1}(\relator_{\mathcal{D}} \relator_{\mathcal{E}} \relthree)$,
      meaning that the thesis follows from $u\ \relator_{\mathcal{E}} \relthree\ v$. 
      The latter indeed holds (since both $u = \inr(e) = v$),
      by very definition of $\relator_{\mathcal{E}}$.
    \item[Case 2.]
       Suppose now $f_*(u) = f(x)$ and $u = \inl(x)$, for some $x \in X$. The latter, together with 
       $u\ \relator_{\mathcal{E}} \relone\ v$ implies $v = \inl(y)$, for some $y \in Y$ such $x\ \relone\ y$. 
       In particular, $v = \inl(y)$ implies $g_*(v) = g(y)$. By hypothesis, 
       $x\ \relone\ y$ implies $f(x)\ \relator_{\mathcal{D}} \relator_{\mathcal{E}} \relthree\ g(y)$, i.e. 
       $f_*(v)\ \relator_{\mathcal{D}} \relator_{\mathcal{E}} \relthree\ g_*(v)$. 
  \end{description}
\end{proof}
  
\begin{lemma}
  The following holds
  $$\relone \subseteq (\eta_X \times \eta_Y)^{-1}(\relator_{\mathcal{D}} \relator_{\mathcal{E}} \relone).$$
\end{lemma}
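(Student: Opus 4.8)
The plan is to unfold the unit $\eta$ of the composite monad $\distribution \circ \exception$ and then chain together the \eqref{Lax-Unit} property of $\relator_{\mathcal{E}}$ and that of $\relator_{\mathcal{D}}$. Recall from the preceding proposition that $\eta_X = \delta_{\exception(X)} \circ \epsilon_X$, so that for every $x \in X$ we have $\eta_X(x) = \delta_{\exception(X)}(\inl(x))$, where $\epsilon_X = \inl$ is the unit of the exception monad $\exception$ and $\delta$ is the Dirac (unit of $\distribution$). Thus, unwinding the definition of $(\eta_X \times \eta_Y)^{-1}$, the statement amounts to: for all $x, y$, if $x\ \relone\ y$ then $\delta_{\exception(X)}(\inl(x))\ \relator_{\mathcal{D}} \relator_{\mathcal{E}} \relone\ \delta_{\exception(Y)}(\inl(y))$.

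First I would fix $x, y$ with $x\ \relone\ y$ and observe $\inl(x)\ \relator_{\mathcal{E}} \relone\ \inl(y)$. This is exactly the \eqref{Lax-Unit} condition for $\relator_{\mathcal{E}}$, which holds because $\relator_{\mathcal{E}}$ is a relator for the exception monad; equivalently, it is immediate from the explicit definition of $\relator_{\mathcal{E}}$ in Example \ref{relators}, since $\inl(x)$ and $\inl(y)$ are both of the form $\inl(\cdot)$ with $\relone$-related arguments. Next I would invoke the \eqref{Lax-Unit} property of $\relator_{\mathcal{D}}$, instantiated at the relation $\relator_{\mathcal{E}} \relone \subseteq \exception(X) \times \exception(Y)$ (i.e.\ using the fact that $\relator_{\mathcal{D}}$ is a relator for $\distribution$, and its unit is $\delta$): from $\inl(x)\ (\relator_{\mathcal{E}} \relone)\ \inl(y)$ we obtain $\delta_{\exception(X)}(\inl(x))\ \relator_{\mathcal{D}}(\relator_{\mathcal{E}} \relone)\ \delta_{\exception(Y)}(\inl(y))$, which is precisely $\eta_X(x)\ \relator_{\mathcal{D}} \relator_{\mathcal{E}} \relone\ \eta_Y(y)$. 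Since $x\ \relone\ y$ was arbitrary, this establishes $\relone \subseteq (\eta_X \times \eta_Y)^{-1}(\relator_{\mathcal{D}} \relator_{\mathcal{E}} \relone)$.

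No step here is a genuine obstacle — the argument is just the vertical composition of the two unit squares, mirroring how Lemma \ref{laxProbEXcepHelp} handles the bind condition. The only thing requiring care is the bookkeeping: keeping track of which unit is which ($\epsilon$ for $\exception$, $\delta$ for $\distribution$, and $\eta = \delta \circ \epsilon$ for the composite), and noting that \eqref{Lax-Unit} for $\relator_{\mathcal{D}}$ is being applied at the object $\exception(X)$ (with the relation $\relator_{\mathcal{E}}\relone$) rather than at $X$ itself. Together with Lemma \ref{laxProbEXcepHelp} and the \eqref{Lax-Bind} property of $\relator_{\mathcal{D}}$, this lemma yields that $\relator_{\mathcal{D}} \relator_{\mathcal{E}}$ is a relator for $\distribution \circ \exception$ regarded as a monad.
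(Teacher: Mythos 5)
Your proof is correct and follows essentially the same route as the paper's: apply \eqref{Lax-Unit} for $\relator_{\mathcal{E}}$ to get $\epsilon_X(x)\ \relator_{\mathcal{E}}\relone\ \epsilon_Y(y)$, then \eqref{Lax-Unit} for $\relator_{\mathcal{D}}$ at the relation $\relator_{\mathcal{E}}\relone$ to conclude. (The paper's proof cites \eqref{Lax-Bind} for this last step, which appears to be a typo for \eqref{Lax-Unit}; your citation is the accurate one.)
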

\begin{proof}
  Suppose $x\ \relone\ y$ and recall that $\eta_Z = \delta_{\exception(Z)} \circ \epsilon_Z$. 
  Since $\relator_{\mathcal{E}}$ satisfies property \eqref{Lax-Unit} (wrt $\exception$), 
  from $x\ \relone\ y$, we infer $\epsilon_X(x)\ \relator_{\mathcal{E}} \relone\ \epsilon_Y(y)$. 
  We conclude that
  $\delta_{\exception(X)}(\epsilon_X(x))\ \relator_{\mathcal{D}} \relator_{\mathcal{E}} \relone\ \delta_{\exception(Y)}(\epsilon_Y(y))$
  holds,
  since $\relator_{\mathcal{D}}$ satisfies \eqref{Lax-Bind} as well (wrt $\distribution$).
\end{proof}

\begin{corollary}
  The map $\relator_{\mathcal{D}} \circ \relator_{\mathcal{E}}$ is a relator for the monad 
  $\distribution \circ \exception$.
\end{corollary}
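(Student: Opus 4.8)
The plan is to check, for the composite $\relator_{\mathcal{D}} \circ \relator_{\mathcal{E}}$, the two conditions from Remark~\ref{rem:laxExtension} that upgrade a functor relator to a relator for a monad, namely \eqref{Lax-Unit} and \eqref{Lax-Bind}; the fact that $\relator_{\mathcal{D}} \circ \relator_{\mathcal{E}}$ is already a relator for the functor $\distribution \circ \exception$ is given by the composition clause of Lemma~\ref{algebra-of-relators}. Condition \eqref{Lax-Unit}, i.e. $\relone \subseteq (\uniT_X \times \uniT_Y)^{-1}(\relator_{\mathcal{D}} \relator_{\mathcal{E}} \relone)$, is precisely the (second) auxiliary lemma proved just above the corollary: writing $\uniT_X = \delta_{\exception(X)} \circ \epsilon_X$, one first applies the lax-unit property of $\relator_{\mathcal{E}}$ as a relator for $\exception$ to pass through $\epsilon$, and then the lax-unit property of $\relator_{\mathcal{D}}$ as a relator for $\distribution$ to pass through $\delta$. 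So nothing further is needed for that half.

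For \eqref{Lax-Bind}, take $f : X \to \distribution \exception Z$, $g : Y \to \distribution \exception W$, $\relone \subseteq X \times Y$, $\relthree \subseteq Z \times W$, and assume $\relone \subseteq (f \times g)^{-1}(\relator_{\mathcal{D}} \relator_{\mathcal{E}} \relthree)$. First I would recall from the preceding Proposition that the Kleisli extension of $f$ in $\distribution \circ \exception$ is $\kleisli{f} = (f_*)^{\distribution}$, with $f_* : \exception X \to \distribution \exception Z$ the exception-propagating lift of $f$. Lemma~\ref{laxProbEXcepHelp} then upgrades the hypothesis to $\relator_{\mathcal{E}} \relone \subseteq (f_* \times g_*)^{-1}(\relator_{\mathcal{D}} \relator_{\mathcal{E}} \relthree)$, i.e. $f_*$ and $g_*$ carry $\relator_{\mathcal{E}}\relone$-related elements to $\relator_{\mathcal{D}}\relator_{\mathcal{E}}\relthree$-related ones. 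Now I would invoke \eqref{Lax-Bind} for $\relator_{\mathcal{D}}$ — available since $\relator_{\mathcal{D}}$ is a relator for the monad $\distribution$ — applied to the functions $f_*, g_*$ with source relation $\relator_{\mathcal{E}}\relone$ and target relation $\relator_{\mathcal{E}}\relthree$; this yields $\relator_{\mathcal{D}}(\relator_{\mathcal{E}}\relone) \subseteq ((f_*)^{\distribution} \times (g_*)^{\distribution})^{-1}(\relator_{\mathcal{D}} \relator_{\mathcal{E}} \relthree)$, which by the description of $\kleisli{(\cdot)}$ is exactly $\relator_{\mathcal{D}} \relator_{\mathcal{E}} \relone \subseteq (\kleisli{f} \times \kleisli{g})^{-1}(\relator_{\mathcal{D}} \relator_{\mathcal{E}} \relthree)$. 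This discharges \eqref{Lax-Bind}, and with it the corollary.

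I do not expect a genuine obstacle here: the argument is a two-layer bookkeeping exercise in which the only nontrivial case distinction — whether an element of $\exception X$ is an honest value $\inl(x)$ or a raised exception — has already been handled inside Lemma~\ref{laxProbEXcepHelp}, and the analytic content (the Max Flow Min Cut input) lives entirely in the fact that $\relator_{\mathcal{D}}$ is a monad relator. The one thing to be careful about is consistently feeding the $\distribution$-level conditions with $\relator_{\mathcal{E}}$ applied to the relevant base relation rather than the base relation itself. It is worth noting that the same two-step pattern applies to composites of other monad relators along a distributive law, provided the corresponding analogue of Lemma~\ref{laxProbEXcepHelp} holds.
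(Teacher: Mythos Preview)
Your proposal is correct and follows essentially the same route as the paper: the functor-relator part comes from the composition clause of Lemma~\ref{algebra-of-relators}, \eqref{Lax-Unit} is the second auxiliary lemma, and \eqref{Lax-Bind} is obtained by chaining Lemma~\ref{laxProbEXcepHelp} with the \eqref{Lax-Bind} property of $\relator_{\mathcal{D}}$ applied to $f_*, g_*$ and the relation $\relator_{\mathcal{E}}\relone$. Your write-up is in fact more explicit than the paper's, which compresses the same steps into a single sentence.
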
 

\begin{proof}
  By previous lemma it is sufficient to prove that
  given $\relone \subseteq (f \times g)^{-1}(\relator_{\mathcal{D}} \relator_{\mathcal{E}} \relthree)$
  we have 
  $\relator_{\mathcal{D}} \relator_{\mathcal{E}} \relone \subseteq 
  (\kleisli{f} \times \kleisli{g})^{-1}(\relator_{\mathcal{D}} \relator_{\mathcal{E}} \relthree)$,  
  i.e. $\relator_{\mathcal{D}} \relator_{\mathcal{E}} \relone  \subseteq (f_*^{\distribution} \times g_*^{\distribution})^{-1}(\relator_{\mathcal{D}} \relator_{\mathcal{E}} \relthree)$. Since $\relator_{\mathcal{D}}$ is a relator 
  for the monad $\distribution$, the latter is implied by 
  $\relator_{\mathcal{E}} \relone \subseteq (f_* \times g_*)^{-1}(\relator_{\mathcal{D}} \relator_{\mathcal{E}} \relthree)$, 
  which itself follows from $\relone \subseteq (f \times g)^{-1}(\relator_{\mathcal{D}} \relator_{\mathcal{E}} \relthree)$ 
  and Lemma \ref{laxProbEXcepHelp}.
\end{proof}

To conclude, we have to show that the monad $\distribution \circ \exception$
and the relator $\relator_{\mathcal{D}} \circ \relator_{\mathcal{E}}$ have the
required order-theoretic properties. First of all note that the monad
$\distribution \circ \exception$ carries a continuous $\signature$-algebra
structure, with the \ocppo\ order given by the functor $\distribution$.
Moreover, trivial calculations show that the monad is \ocppo-enriched (this
essentially follows from the order-enrichment of $\distribution$, together
with the validity of the equation $(\lub_{n<\omega} f_n)_* = \lub_{n<\omega}
f_{n_*}$), and thus the associated bind operator is continuous. Finally, it is
immediate to observe that $\relator_{\mathcal{D}} \circ
\relator_{\mathcal{E}}$ is inductive, since $\relator_{\mathcal{D}}$ is.

%%%%%%%%%%%%%%%%%%%%%%
\section{Related Work}
%%%%%%%%%%%%%%%%%%%%%%
As mentioned in the Introduction, this is certainly not the first
paper about program equivalence for higher-order effectful
calculi. Denotational semantics of calculi having this nature, has
been studied since Moggi's seminal work~\cite{Moggi/LICS/89}, thus
implicitly providing a notion of equivalence. All this has been given
a more operational flavour starting with Plotkin and Power account on
adequacy for algebraic effects~\cite{PlotkinPower/FOSSACS/01}, from
which the operational semantics presented in this paper is greatly
inspired. The literature also offers abstract accounts on logical
relations for effectful calculi. The first of them is due to
Goubault-Larrecq, Lasota and
Nowak~\cite{GoubaultLasotaNowak/MSCS/2008}, which is noticeably able
to deal with nondeterministic and probabilistic effects, but also with
dynamic name creation, for which applicative bisimilarity 
is known to be unsound.
Another piece of work which is related to ours is due to Johann, Simpson, and
Voigtl\"ander~\cite{JohannSimpsonVoigtlander/LICS/2010}, who
focused on algebraic effects and observational equivalence, and their
characterisation via CIU theorems and a form of logical relation
based $\top\top$-lifting. In both cases, the target language is
typed.
Similar in spirit to our approach (which is based on the notion 
of relator), the work of Katsumata and Sato \cite{Katsumata2013} analyses 
monadic lifting of relations in the context of $\top \top$-lifting.

Although no abstract account exists on applicative coinductive
techniques for calculi with algebraic effects, some work definitely
exists in some specific cases. As a noticeable example, the works by
Ong~\cite{Ong/LICS/1993} and Lassen~\cite{Lassen/PhDThesis} deal with
nondeterminism, and establish soundness in all relevant cases,
although full abstraction fails. The first author, together with
Alberti, Crubill\'e and
Sangiorgi~\cite{DalLagoSangiorgiAlberti/POPL/2014,CrubilleDalLago/ESOP/2014}
have studied the probabilistic case, where full abstraction can indeed
be obtained if call-by-value evaluation is employed.
%%%%%%%%%%%%%%%%%%%%
\section{Conclusion}
%%%%%%%%%%%%%%%%%%%%
This is the first abstract account on applicative bisimilarity for
calculi with effects. The main result is an abstract soundness theorem
for a notion of applicative similarity which can be naturally defined
as soon as a monad and an associated relator are given which on the
one hand serve to give an operational semantics to the algebraic operations,
and on the other need to satisfy some mild conditions in order for
similarity to be a precongruence. Soundness of bisimilarity is then
obtained as a corollary. Many concrete examples are shown to fit
into the introduced axiomatics. 
A notable example is the output monad, for which a definition of 
applicative similarity based on labeled transition systems as in 
e.g. \cite{CroleGordon/MSCS/99} is unsound, a fact that the authors 
discovered after noticing the anomaly, and not vice versa. 
Nevertheless, we defined a different notion of applicative similarity that 
fits into our framework and whose associated notion of bisimilarity 
(Definition \ref{gamma-bisimulation}) coincide with the usual notion of 
bisimilarity. 

A question that we have not addressed in this work, but which is quite
natural, is whether an abstract full-abstraction result could exist,
analogously to what, e.g., Johann, Simpson, and Voigtl\"ander obtained
for their notion of logical relation. This is a very
interesting topic for future work. It is however impossible to get
such a theorem without imposing some further, severe, constraints on
the class of effects (i.e. monads and relators) of interest, e.g.,
applicative bisimilarity is well-known not to be fully-abstract in
calculi with nondeterministic effects, which perfectly fit in the
picture we have drawn in this paper. A promising route towards this
challenge would be to understand which class of \emph{tests} (if any)
characterise applicative bisimilarity, depending on the underlying
monad and relator, this way generalising results by van Breugel,
Mislove, Ouaknine and
Worrell~\cite{vanBreugelMisloveOuaknineWorrell/TCS/2005} or
Ong~\cite{CrubilleDalLago/ESOP/2014}.

Finally, environmental bisimilarity is known \cite{KoutavasLevySumii/ENTCS/2011} 
to overcome the limits 
of applicative bisimilarity in presence of information hiding. 
Studying the applicability of the methodology developed in this  
work to environmental bisimilarity is yet another interesting
topic for future researches.
%%%%%%%%%%%%%%%%%%%%%%%%%
\subsection*{Acknowledgment}
%%%%%%%%%%%%%%%%%%%%%%%%%
The authors would like to thank Rapha\"elle Crubill\'e and 
the anonymous reviewers for the many useful comments, some of which 
led to a substantial improvement of our work. 
Special thanks go to Davide Sangiorgi, Ryo Tanaka, and
Valeria Vignudelli for many insightful discussions about the topics of
this work.

\bibliographystyle{plain}
\bibliography{mainBib}

\begin{thebibliography}{10}

\bibitem{Abramsky/RTFP/1990}
Samson Abramsky.
\newblock The lazy lambda calculus.
\newblock In D.~Turner, editor, {\em Research Topics in Functional
  Programming}, pages 65--117. Addison Wesley, 1990.

\bibitem{AbramskyJung/DomainTheory/1994}
Samson Abramsky and Achim Jung.
\newblock Domain theory.
\newblock In {\em Handbook of Logic in Computer Science}, pages 1--168.
  Clarendon Press, 1994.

\bibitem{AppelMcAllester/TOPLAS/2001}
Andrew~W. Appel and David~A. McAllester.
\newblock An indexed model of recursive types for foundational proof-carrying
  code.
\newblock {\em {ACM} Trans. Program. Lang. Syst.}, 23(5):657--683, 2001.

\bibitem{Barendregt/Book/1984}
Hendrik~P. Barendregt.
\newblock {\em The lambda calculus: its syntax and semantics}.
\newblock Studies in logic and the foundations of mathematics. North-Holland,
  1984.

\bibitem{Barr/LMM/1970}
M.~Barr.
\newblock Relational algebras.
\newblock {\em Lect. Notes Math.}, 137:39--55, 1970.

\bibitem{BentonKennedyBeringerHofmann/PPDP/2009}
Nick Benton, Andrew Kennedy, Lennart Beringer, and Martin Hofmann.
\newblock Relational semantics for effect-based program transformations:
  higher-order store.
\newblock In {\em Proc. of {PPDP} 2009}, pages 301--312, 2009.

\bibitem{BizjakBirkedal/FOSSACS/2015}
Ales Bizjak and Lars Birkedal.
\newblock Step-indexed logical relations for probability.
\newblock In {\em Proc. of {FOSSACS} 2015}, pages 279--294, 2015.

\bibitem{CroleGordon/MSCS/99}
Roy~L. Crole and Andrew~D. Gordon.
\newblock Relating operational and denotational semantics for input/output
  effects.
\newblock {\em Mathematical Structures in Computer Science}, 9(2):125--158,
  1999.

\bibitem{CrubilleDalLago/ESOP/2014}
Rapha{\"{e}}lle Crubill{\'{e}} and Ugo Dal~Lago.
\newblock On probabilistic applicative bisimulation and call-by-value
  {\(\lambda\)}-calculi.
\newblock In {\em Proc. of {ESOP} 2014}, volume 8410 of {\em LNCS}, pages
  209--228. Springer, 2014.

\bibitem{DalLagoSangiorgiAlberti/POPL/2014}
Ugo Dal~Lago, Davide Sangiorgi, and Michele Alberti.
\newblock On coinductive equivalences for higher-order probabilistic functional
  programs.
\newblock In {\em Proc. of {POPL} 2014}, pages 297--308, 2014.

\bibitem{DalLagoZorzi/TIA/2012}
Ugo Dal~Lago and Margherita Zorzi.
\newblock Probabilistic operational semantics for the lambda calculus.
\newblock {\em {RAIRO} - Theor. Inf. and Applic.}, 46(3):413--450, 2012.

\bibitem{DanosHarmer/TOCL/2002}
Vincent Danos and Russell Harmer.
\newblock Probabilistic game semantics.
\newblock {\em {ACM} Transactions on Computational Logic}, 3(3):359--382, 2002.

\bibitem{DaveyPriestley/Book/1990}
Brian~A. Davey and Hilary~A. Priestley.
\newblock {\em Introduction to lattices and order}.
\newblock Cambridge University Press, 1990.

\bibitem{deLiguoroPiperno/IC/1995}
Ugo de'Liguoro and Adolfo Piperno.
\newblock Non deterministic extensions of untyped lambda-calculus.
\newblock {\em Inf. Comput.}, 122(2):149--177, 1995.

\bibitem{Goguen/1977}
Joseph~A. Goguen, James~W. Thatcher, Eric~G. Wagner, and Jesse~B. Wright.
\newblock Initial algebra semantics and continuous algebras.
\newblock {\em J. {ACM}}, 24(1):68--95, 1977.

\bibitem{GoncharovSchroder/LICS/2013}
Sergey Goncharov and Lutz Schr{\"{o}}der.
\newblock A relatively complete generic hoare logic for order-enriched effects.
\newblock In {\em Proc. of {LICS} 2013}, pages 273--282. {IEEE} Computer
  Society, 2013.

\bibitem{Gordon/FOSSACS/01}
Andrew~D. Gordon.
\newblock A tutorial on co-induction and functional programming.
\newblock In {\em Workshops in Computing}, pages 78--95. Springer London,
  September 1994.

\bibitem{GoubaultLasotaNowak/MSCS/2008}
Jean Goubault{-}Larrecq, Slawomir Lasota, and David Nowak.
\newblock Logical relations for monadic types.
\newblock {\em Mathematical Structures in Computer Science}, 18(6):1169--1217,
  2008.

\bibitem{Howe/IC/1996}
Douglas~J. Howe.
\newblock Proving congruence of bisimulation in functional programming
  languages.
\newblock {\em Inf. Comput.}, 124(2):103--112, 1996.

\bibitem{JohannSimpsonVoigtlander/LICS/2010}
Patricia Johann, Alex Simpson, and Janis Voigtl{\"{a}}nder.
\newblock A generic operational metatheory for algebraic effects.
\newblock In {\em Proc. of {LICS} 2010}, pages 209--218. {IEEE} Computer
  Society, 2010.

\bibitem{Jones/PhDThesis}
Claire Jones.
\newblock {\em Probabilistic Non-determinism}.
\newblock PhD thesis, University of Edinburgh, 1990.
\newblock Available as Technical Report CST-63–90.

\bibitem{Katsumata2013}
Shin-ya Katsumata and Tetsuya Sato.
\newblock {\em Preorders on Monads and Coalgebraic Simulations}, pages
  145--160.
\newblock Springer, 2013.

\bibitem{Kelly/EnrichedCats}
Gregory~M. Kelly.
\newblock Basic concepts of enriched category theory.
\newblock {\em Reprints in Theory and Applications of Categories}, (10):1--136,
  2005.

\bibitem{KoutavasLevySumii/ENTCS/2011}
Vasileios Koutavas, Paul~Blain Levy, and Eijiro Sumii.
\newblock From applicative to environmental bisimulation.
\newblock {\em Electr. Notes Theor. Comput. Sci.}, 276:215--235, 2011.

\bibitem{/Kurz/JLAMP/16}
Alexander Kurz and Jiri Velebil.
\newblock Relation lifting, a survey.
\newblock {\em J. Log. Algebr. Meth. Program.}, 85(4):475--499, 2016.

\bibitem{Lassen/RelationalReasoning}
S{\o}ren~B. Lassen.
\newblock Relational reasoning about contexts.
\newblock In Andrew~D. Gordon and Andrew~M. Pitts, editors, {\em Higher Order
  Operational Techniques in Semantics}, pages 91--136. 1998.

\bibitem{Lassen/PhDThesis}
S{\o}ren~B. Lassen.
\newblock {\em Relational Reasoning about Functions and Nondeterminism}.
\newblock PhD thesis, Dept. of Computer Science, University of Aarhus, May
  1998.

\bibitem{Levy/ENTCS/2006}
Paul~Blain Levy.
\newblock Infinitary {H}owe's method.
\newblock {\em Electr. Notes Theor. Comput. Sci.}, 164(1):85--104, 2006.

\bibitem{Levy/FOSSACS/2011}
Paul~Blain Levy.
\newblock Similarity quotients as final coalgebras.
\newblock In {\em Proc. of {FOSSACS} 2011}, volume 6604 of {\em LNCS}, pages
  27--41, 2011.

\bibitem{Levy/InfComp/2003}
Paul~Blain Levy, John Power, and Hayo Thielecke.
\newblock Modelling environments in call-by-value programming languages.
\newblock {\em Inf. Comput.}, 185(2):182--210, 2003.

\bibitem{MacLane/Book/1971}
Saunders MacLane.
\newblock {\em Categories for the Working Mathematician}.
\newblock Springer-Verlag, 1971.

\bibitem{Moggi/LICS/89}
Eugenio Moggi.
\newblock Computational lambda-calculus and monads.
\newblock In {\em Proc. of {(LICS} 1989}, pages 14--23. {IEEE} Computer
  Society, 1989.

\bibitem{Morris/PhDThesis}
J.~Morris.
\newblock {\em Lambda Calculus Models of Programming Languages}.
\newblock PhD thesis, MIT, 1969.

\bibitem{Ong/LICS/1993}
C.{-}H.~Luke Ong.
\newblock Non-determinism in a functional setting.
\newblock In {\em Proc. of {LICS} 1993}, pages 275--286. {IEEE} Computer
  Society, 1993.

\bibitem{Pitts/ATBC/2011}
Andrew~M. Pitts.
\newblock Howe's method for higher-order languages.
\newblock In D.~Sangiorgi and J.~Rutten, editors, {\em Advanced Topics in
  Bisimulation and Coinduction}, volume~52 of {\em Cambridge Tracts in
  Theoretical Computer Science}, chapter~5, pages 197--232. Cambridge
  University Press, November 2011.

\bibitem{PlotkinPower/FOSSACS/01}
Gordon~D. Plotkin and John Power.
\newblock Adequacy for algebraic effects.
\newblock In {\em Proc. of {FOSSACS} 2001}, pages 1--24, 2001.

\bibitem{PlotkinPower/FOSSACS/02}
Gordon~D. Plotkin and John Power.
\newblock Notions of computation determine monads.
\newblock In {\em Proc. of {FOSSACS} 2002}, pages 342--356, 2002.

\bibitem{PlotkinPower/ACS/2003}
Gordon~D. Plotkin and John Power.
\newblock Algebraic operations and generic effects.
\newblock {\em Applied Categorical Structures}, 11(1):69--94, 2003.

\bibitem{Rutten/TCS/2000}
Jan J. M.~M. Rutten.
\newblock Universal coalgebra: a theory of systems.
\newblock {\em Theor. Comput. Sci.}, 249(1):3--80, 2000.

\bibitem{Schrijver/Book/1986}
Alexander Schrijver.
\newblock {\em Theory of Linear and Integer Programming}.
\newblock John Wiley \& Sons, Inc., New York, NY, USA, 1986.

\bibitem{Thijs/PhDThesis/1996}
Albert~Marchienus Thijs et~al.
\newblock {\em Simulation and fixpoint semantics}.
\newblock Rijksuniversiteit Groningen, 1996.

\bibitem{vanBreugelMisloveOuaknineWorrell/TCS/2005}
Franck van Breugel, Michael~W. Mislove, Jo{\"{e}}l Ouaknine, and James Worrell.
\newblock Domain theory, testing and simulation for labelled markov processes.
\newblock {\em Theor. Comput. Sci.}, 333(1-2):171--197, 2005.

\end{thebibliography}

\end{document}